\newtheorem{theorem}{Theorem}
\newtheorem{observation}{Observation}
\newtheorem{proposition}{Proposition}
\newtheorem{corollary}{Corollary}
\newtheorem{lemma}{Lemma}
\newcommand{\fd}{\sc $\mathcal{F}$-Deletion}
\newtheorem{definition}{Definition}
\newcommand{\tw}{{\mathbf{tw}}}
\newcommand{\pmin}{{\sc Min-CMSO}}
\newcommand{\pmax}{{\sc Max-CMSO}}
\newcommand{\pmm}{{\sc Min/Max-CMSO}}
\newcommand{\cO}{\mathcal{O}}
\newcommand{\h}[1]{\end{document}}
\DeclareMathOperator{\operatorClassNP}{\sf NP}
\newcommand{\classNP}{\ensuremath{\operatorClassNP}}
\begin{document}
\title{Bidimensionality and  Geometric Graphs}
\date{}
\author{Fedor V. Fomin\thanks{Department of Informatics, University of Bergen, Norway.
 \newline $~$\hspace{.5cm} 
\texttt{fedor.fomin@ii.uib.no}} 
~~~~Daniel Lokshtanov\thanks{Department of Computer Science and Engineering, University of California, San Diego, USA. 
 \newline $~$\hspace{.5cm} 
\texttt{dlokshtanov@cs.ucsd.edu}} 
~~~~Saket Saurabh\thanks{The Institute of Mathematical Sciences, Chennai, India.
 \newline $~$\hspace{.5cm}
\texttt{saket@imsc.res.in}}
}

\maketitle

\begin{abstract}
\noindent
Bidimensionality theory was introduced by Demaine et al. [{\em JACM 2005}\,] as a framework to obtain algorithmic results for hard problems on minor closed graph classes. The theory has been sucessfully applied to yield subexponential time parameterized algorithms, EPTASs and linear kernels for many problems on families of graphs excluding a fixed graph $H$ as a minor. 
In this paper we use several of the key ideas from Bidimensionality to give a new generic approach to design EPTASs and subexponential time parameterized algorithms for problems on classes of graphs which are not minor closed, but instead exhibit a geometric structure. 
In particular we present EPTASs and subexponential time parameterized algorithms for {\sc Feedback Vertex Set},  {\sc Vertex Cover}, {\sc Connected Vertex Cover}, {\sc Diamond Hitting Set}, on map graphs and unit disk graphs, and for {\sc Cycle Packing} and {\sc Minimum-Vertex Feedback Edge Set} on unit disk graphs. To the best of our knowledge, these results were  previously unknown, with the exception of the EPTAS and a subexponential time parameterized algorithm on unit disk graphs for {\sc Vertex Cover}, which were obtained by Marx  [{\em ESA 2005}\,] and Alber and Fiala [{\em J. Algorithms 2004}\,], respectively. 

 Our results are based on the recent decomposition theorems proved by Fomin et al. in  [{\em SODA 2011}\,] and  novel grid-excluding theorems in unit disc and map graphs without large cliques.  Our algorithms work directly on the input graph and do not require the geometric representations of the input graph.
We also show that our approach can not be extended in its full generality to more general classes of geometric graphs, such as intersection graphs of unit balls in $\mathbb{R}^d$, $d\geq 3$. Specifically, we prove that {\sc Feedback Vertex Set} on unit-ball graphs in $\mathbb{R}^3$ neither admits PTASs unless P=NP, nor subexponential time algorithms unless the Exponential Time Hypothesis fails. Additionally, we show that the decomposition theorems which our approach is based on fail for disk graphs and that therefore any extension of our results to disk graphs would require new algorithmic ideas.
On the other hand, we prove that our EPTASs and subexponential time algorithms for {\sc Vertex Cover} and {\sc Connected Vertex Cover} carry over both to disk graphs and to unit-ball graphs in $\mathbb{R}^d$ for every fixed   $d$. 
\end{abstract}

\section{Introduction}\label{sec:intro}
Algorithms for hard optimization problems on intersection graphs of systems of geometric objects is a well studied area in Computer Science motivated by various applications in wireless networks \cite{KuhnWZ08}, computational biology \cite{XuB06} and map labeling \cite{AgarwalKS98}. While most problems remain \classNP{}-complete even when restricted to such classes, the restriction of a problem to a geometric class is usually much more tractable algorithmically than the unrestricted problem. For example, for planar graphs and more generally minor closed families of graphs, the Bidimensionality theory of Demaine et al.~\cite{DFHT05} simultaneously demonstrates the tractability of most natural problems with respect to subexponential time parameterized algorithms~\cite{DFHT05}, approximation~\cite{DH05,FLRSsoda2011} and kernelization~\cite{F.V.Fomin:2010oq}. For related ``geometric'' classes of graphs that are not closed under taking minors, the picture is considerably more heterogenous and the situation is less understood. The objective of this article is to take a step towards clearing the picture for geometric graph classes. 


Most of the known approximation schemes that have been obtained for graph problems on geometric graph classes use a variation of the well-known {\em shifting} technique introduced in the classical works of Baker \cite{Baker94} and of Hochbaum and Maass \cite{Hochbaum:1985:}. Hunt et al. \cite{huntetal98} used the shifting technique to give polynomial time approximation schemes (PTASs) for a number of problems such as \textsc{Maximum Independent Set} and \textsc{Minimum Dominating Set} on unit disk graphs and $\lambda$-precision disk graphs. Independently,  Erlebach et al. \cite{Erlebach:2005} and Chan~\cite{Chan:2003} generalized the shifting technique and gave PTASs for \textsc{Maximum Independent Set} and \textsc{Minimum Vertex Cover} on disk graphs and on intersection graphs of fat objects. Marx in \cite{marx-approx} obtained an efficient polynomial time approximation schemes (EPTAS) for \textsc{Minimum Vertex Cover} on unit disk graphs. Chen in \cite{Chen01} and Demaine et al. \cite{DemaineFHT05talg} used similar approaches to obtain a PTAS for \textsc{Maximum Independent Set} and \textsc{Minimum $r$-Dominating Set} on map graphs. One of the known limitations of the shifting technique is that it generally only applies to local problems such as {\sc Vertex Cover} and variants of {\sc Dominating Set}, and fails for non-local problems such as {\sc Feedback Vertex Set} and {\sc Cycle Packing}. For problems on planar and $H$-minor-free graphs, Bidimensionality is able to handle non-locality by applying {\em treewidth} based decomposition. It is tempting to ask whether treewidth based decomposition can be useful for other graph classes as well.

In this article we use key ideas from Bidimensionality and design a general approach that can be used to give EPTASs and subexponential time parameterized algorithms for many problems on map graphs and for unit disk graphs, and in some cases on even more general geometric classes of graphs. We present EPTASs and subexponential time parameterized algorithms for {\sc Feedback Vertex Set},  {\sc Vertex Cover}, {\sc Connected Vertex Cover}, {\sc Diamond Hitting Set}, {\sc Minimum-Vertex Feedback Edge Set} on map   and unit disk graphs, and for {\sc Cycle Packing} and {\sc Minimum-Vertex Feedback Edge Set} on unit disk graphs. Our approach is based on the concept of {\em truly sublinear treewidth}, recently introduced by the authors in~\cite{FLRSsoda2011} as a tool to give EPTASs for bidimensional problems on minor closed graph classes. Roughly speaking, a graph class has truly sublinear treewidth if adding $k$ vertices to any graph in the class such that the resulting graph is in the class as well,   increases its treewidth by   $O(k^\epsilon)$ for $\epsilon < 1$. The techniques in~\cite{FLRSsoda2011} can not be applied directly to map graphs and unit disk graphs, because both graph classes contain arbitrarily large cliques and hence do not have truly sublinear treewidth. We overcome this obstacle by showing that cliques are the only pathological case. Namely, we prove that map graphs and unit disk graphs that exclude large clique subgraphs have truly sublinear treewidth. Our EPTASs work in two steps, first we ``clean" the input graph for large cliques, and then we apply the decomposition theorems from~\cite{FLRSsoda2011}. 

The initial application of Bidimensionality was in the design of subexponential parameterized algorithms on planar, and more generally, on $H$-minor-free graphs \cite{DFHT05}. Demaine et al.~\cite{DemaineFHT05talg} used Bidimensionality to obtain subexponential parameterized algorithms for {\sc Dominating Set}, and more generally, for  \textsc{$(k,r)$-Center} on map graphs. We show that  after ``cleaning" unit disk and map graphs from large cliques, it is possible to use Bidimensionality to solve many  parameterized problems in subexponential time on these classes of graphs.  To the best of our knowledge, prior to our work parameterized subexponential algorithms on unit disk graphs were known only for \textsc{Vertex Cover}  \cite{AF04}. The important ingredient of our algorithms are the analogues of excluding grid theorems of  Robertson et al.  for planar graphs \cite{RobertsonST94} and of of Demaine and Hajiaghayi for $H$-minor free graphs \cite{Demaine:2008dq}. We show that the treewidth  of every unit disc or map graph excluding  a clique of constant size as a subgraph  and excluding   a $k\times k$ grid as a minor, is $\cO(k)$.
%
%

Our algorithms do not require  geometric representations of the input graphs. Since recognition of unit disk graphs is \classNP{}-hard~\cite{ClarkCJ90} and the exponent of the polynomial bounding the running time of  map graph recognition algorithm  is about 120~\cite{Thorup98a}, the robustness of our algorithms is a serious advantage. 

We explore to which degree our approach can be lifted to other classes of graphs. Our investigations show that it is unlikely that the full power of our approach can be generalized to disk graphs or to unit ball graphs in $\mathbb{R}^d$ --- intersection graphs of unit-balls in $\mathbb{R}^d$, $d\geq 3$. Specifically we prove that {\sc Feedback Vertex Set} on unit-ball graphs in $\mathbb{R}^3$ neither admits a PTASs unless P=NP, nor a subexponential time algorithm unless the Exponential Time Hypothesis fails. Furthermore we show that disk graphs which exclude the clique on four vertices as a subgraph do not have truly sublinear treewidth. On the other hand, an adaptation of our techniques yields EPTASs and subexponential time parameterized algorithms for {\sc Vertex Cover} and {\sc Connected Vertex Cover} both on disk graphs and on unit disk graphs in $\mathbb{R}^d$ for every fixed integer dimension $d$.

A natural question is whether our results can be extended to handle larger classes of problems on map graphs and unit disk graphs. It appears that the main obstacle to generalizaing our approach is to design more general clique cleaning procedures.
In particular, Marx~\cite{marx-approx} showed that {\sc Dominating Set} and {\sc Independent Set} are W[1]-hard even on unit disk graphs. This means that the two problems neither admit EPTASs nor FPT algorithms unless FPT=W[1], a complexity collapse considered very unlikely. 
While no clique cleaning procedure is known for these two problems, a simple modification of our techniques show that both problems admit both EPTASs and subexponential time algorithms on map graphs and unit disk graphs excluding large cliques as subgraphs. Thus it seems that we are able to handle exactly the problems for which cliques can be removed efficiently.

\section{Definitions and Notations}

In this section we give various definitions which we make use of in the paper. Let~$G$ be a graph with vertex set $V(G)$ and edge set $E(G)$. A graph~$G'$ is a  \emph{subgraph} of~$G$ if~$V(G') \subseteq V(G)$ and~$E(G') \subseteq E(G)$. 
The subgraph~$G'$ is called an \emph{induced subgraph} of~$G$ if~$E(G')
 = \{ uv \in E(G) \mid u,v \in V(G')\}$, in this case, $G'$~is also called the subgraph \emph{induced by~$V(G')$} and denoted by~$G[V(G')]$. For a vertex set $S$, by $G \setminus S$ we denote $G[V(G) \setminus S]$. A graph class ${\cal G}$ is {\em hereditary} if for any graph $G \in {\cal G}$ all induced subgraphs of $G$ are in ${\cal G}$. 
  By $N(u)$ we denote (open) neighborhood of $u$, that is, the set of all vertices adjacent to $u$. Similarly, by $N[u]=N(u) \cup \{u\}$ we define the closed neighborhood.  The degree of a vertex $v$ in $G$ is $|N_G(v)|$. We denote by $\Delta(G)$ the maximum vertex degree in $G$. For a subset $D \subseteq V(G)$, we define $N[D]=\cup_{v\in D} N[v]$ and $N(D) = N[D] \setminus D$. 
%
  Given an edge  $e=xy$ of a graph $G$, the graph  $G/e$ is obtained from  $G$ by contracting the edge $e$. That is, the endpoints $x$  and $y$ are replaced by a new vertex $v_{xy}$
which  is  adjacent to the old neighbors of $x$ and $y$ (except from $x$ and $y$).  A graph $H$ obtained by a sequence of edge-contractions is said to be a \emph{contraction} of $G$.  
 A graph $H$ is a {\em minor} of a graph $G$ if $H$ is the contraction of some subgraph of $G$ and we denote it by $H\leq_{m} G$. We also 
use the following equivalent characterization of minors.
\begin{proposition}[\cite{Diestel2005}]\label{prop:branchsets}
 A graph $H$ is a minor of $G$ if and only if there is a map 
$\phi:V(H)\rightarrow 2^{V(G)}$  such that for every vertex $v\in V(H)$,  
$G[\phi(v)]$ is connected, for every pair of vertices $v,u\in V(H)$,  
$\phi(u)\cap \phi(v)=\emptyset$, and for every edge $uv\in E(H)$, there is an 
edge $u'v'\in E(G)$ such that $u'\in \phi(u)$ and $v'\in \phi(v)$. 
\end{proposition}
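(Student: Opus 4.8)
The plan is to prove the two implications separately, in each case translating between the ``contraction of a subgraph'' definition of minor and the branch-set map~$\phi$.

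For the forward direction, suppose $H \leq_m G$, so that $H$ is obtained from some subgraph $G'$ of $G$ by a sequence of edge contractions. I would argue by induction on the number of contractions, maintaining throughout a surjection from $V(G')$ onto the current vertex set whose fibres induce connected subgraphs of $G'$. The base case is the identity map on $V(G')$ (all fibres are singletons). For the inductive step, a single contraction of an edge merges the two already-connected fibres of its endpoints and, of course, joins them by that edge, so connectivity of fibres and their pairwise disjointness are both preserved. After all contractions, setting $\phi(v)$ to be the fibre of $v \in V(H)$ gives sets with $G[\phi(v)] = G'[\phi(v)]$ connected and the $\phi(v)$ pairwise disjoint; and whenever $uv \in E(H)$ there must be an edge of $G'$ between $\phi(u)$ and $\phi(v)$, which supplies the required $u'v' \in E(G)$ with $u' \in \phi(u)$ and $v' \in \phi(v)$.

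For the reverse direction, assume such a $\phi$ is given. I would first pick, for each $v \in V(H)$, a spanning tree $T_v$ of the connected graph $G[\phi(v)]$, and for each $uv \in E(H)$ a single witnessing edge $e_{uv} = u'v'$ with $u' \in \phi(u)$, $v' \in \phi(v)$. Let $G'$ be the subgraph of $G$ with vertex set $\bigcup_{v \in V(H)} \phi(v)$ and edge set $\bigl(\bigcup_{v} E(T_v)\bigr) \cup \{\, e_{uv} : uv \in E(H) \,\}$. Contracting all edges of every $T_v$ collapses each $\phi(v)$ to a single vertex; since the only edges of $G'$ not lying in some $T_v$ are the witnessing edges, the resulting graph has vertex set in bijection with $V(H)$ and an edge between the images of $\phi(u)$ and $\phi(v)$ exactly when $uv \in E(H)$. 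Thus this contraction of the subgraph $G'$ is isomorphic to $H$, so $H \leq_m G$.

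The argument is essentially bookkeeping, so I do not expect a real obstacle; the only points needing care are (i) phrasing the induction invariant in the forward direction so that disjointness and connectivity of fibres are simultaneously preserved by one contraction, and (ii) in the reverse direction, ensuring that pruning $G'$ down to the tree edges plus the chosen witnessing edges prevents any spurious edge --- or, in the multigraph reading, any parallel edge or loop --- from surviving the contraction between branch sets $u,v$ with $uv \notin E(H)$. It is also worth recording the routine fact that deleting vertices and edges before contracting yields the same class of minors as contracting first, which is what justifies taking $G'$ to be the already-pruned subgraph in the second part.
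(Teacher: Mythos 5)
Your argument is correct, and both directions are the standard textbook proof (inducting on contractions with connected, pairwise-disjoint fibres in one direction; pruning to spanning trees of the branch sets plus one witnessing edge per $uv\in E(H)$ and contracting the tree edges in the other). The paper itself gives no proof of this statement --- it simply cites Diestel's \emph{Graph Theory} --- so there is nothing in the source to compare against; your write-up matches what one would find in that reference.
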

 Let $G,H$ be two graphs. A subgraph $G'$ of $G$ 
is said to be a \emph{minor-model} of $H$ in $G$ if $G'$ contains $H$
as a minor. 
 The
\emph{$(r\times r)$-grid} is the Cartesian product of two paths of
lengths $r-1$. 

\medskip
\noindent\textbf{\bf Treewidth}
A \emph{tree decomposition} of a graph $G$ is a pair $(\mathcal{X},T)$, where $T$
is a tree and ${\cal X}=\{X_{i} \mid i\in V(T)\}$ is a collection of subsets
of $V$ such that the following conditions are satisfied.
\begin{enumerate}\setlength\itemsep{-1.2mm}
\item $\bigcup_{i \in V(T)} X_{i} = V(G)$.
\item For each edge $xy \in E(G)$, $\{x,y\}\subseteq X_i$ for some $i\in V(T)$.
\item For each $x\in V(G)$ the set $\{ i \mid x \in X_{i} \}$ induces a connected subtree of $T$.
\end{enumerate}
Each $X_i$ is called the bag of a tree decomposition. 
The \emph{width} of the  tree decomposition is $\max_{i \in V(T)}\,|X_{i}| - 1$. The \emph{treewidth} of a
graph $G$, ${\bf tw}(G)$, is the minimum width over all tree decompositions of $G$.
%


\medskip
\noindent\textbf{Plane, unit disk and map graphs.}
In this paper we  use the expression \emph{plane graph} for any planar graph drawn in the Euclidean plane $\mathbb{R}^2$ without any edge crossing. We do not distinguish between a vertex of a plane graph and the point of  $\mathbb{R}^2$ used in the drawing to represent the vertex or between an edge and the {curve} representing it. We also consider plane graph $G$ as the union of the points corresponding to its vertices and edges.  We call by {\em face} of $G$ any connected component of $\mathbb{R}^2 \setminus (E(G)\cup V(G))$.  The \emph{boundary} of a face is the set of edges incident to it. If the boundary of a face $f$ forms a cycle then we call it a {\em cyclic face}.  
A \emph{disk graph} is the intersection graph of a family of (closed) disks in  $\mathbb{R}^2$. A \emph{unit disk graph} is the intersection graph of a family of unit disks in $\mathbb{R}^2$. The notion of a map graph is due to 
Chen et al.  
\cite{ChenGP98}.  A {\em map} $\cal M$ is a pair  $(\mathscr{E},\omega)$, where $\mathscr{E}$  is a plane graph and each connected component of $\mathscr{E}$ is  biconnected, and $\omega$ is a function that maps each face $f$ of $\mathscr{E}$ to $0$ or $1$ in a way that whenever $\omega(f)=1$, $f$ is a closed face.  A face $f$ of $\mathscr{E}$ is called {\em nation} if $\omega(f)=1$, {\em lake} otherwise. The graph associated with $\cal M$ is a simple graph $G$, where $V(G)$ consists of the nations on $\cal M$ and $E(G)$ consists of all $f_1f_2$ such that $f_1$ and $f_2$ are adjacent (that is shares at least one vertex). We call $G$ a {\em map graph}. By $N(\mathscr{E})$ we denote  the set of nations of $\mathscr{E}$.

\medskip
\noindent\textbf{Counting Monadic Second Order Logic.}
\label{countmsop}
The syntax of MSO of graphs includes the logical connectives $\vee$, $\land$, $\neg$, $\Leftrightarrow $,  $\Rightarrow$, variables for vertices, edges, set of vertices and set of edges, the quantifiers $\forall$, $\exists$ that can be applied to these variables, and the following five binary relations: 
\begin{enumerate}\setlength\itemsep{-1.2mm}
\item $u\in U$ where $u$ is a vertex variable and $U$ is a vertex set variable.
\item $d \in D$ where $d$ is an edge variable and $D$ is an edge set variable.
\item $\mathbf{inc}(d,u)$, where $d$ is an edge variable,  $u$ is a vertex variable, and the interpretation is that the edge $d$ is incident on the vertex $u$.
\item $\mathbf{adj}(u,v)$, where  $u$ and $v$ are vertex variables, and the interpretation is that $u$ and $v$ are adjacent.
\item Equality of variables, $=$, representing vertices, edges, set of vertices and set of edges.
\end{enumerate}
 {\em Counting monadic second-order logic} (CMSO)  is  {   monadic second-order logic} (MSO) additionally equipped with an atomic formula $\mathbf{card}_{n,p}(U)$ for testing whether the cardinality of a set $U$ is congruent to $n$ modulo $p$, where $n$ and $p$ are integers independent of the input graph such that $0\leq n<p$ and $p\geq 2$.
We refer to ~\cite{ArnborgLS91,Courcelle90,Courcelle97} for a detailed introduction to CMSO.
\pmin{} and \pmax{} problems are graph optimization problems where the objective is to find a maximum or minimum sized vertex or edge set satisfying a CMSO-expressible property. 
In particular, in a \pmm{} graph problem $\Pi$ we are given a graph $G$ as input. The objective is to find a minimum/maximum cardinality vertex/edge set $S$ such that the CMSO-expressible predicate $P_\Pi(G,S)$ is satisfied.
%
%

\medskip
\noindent\textbf{Bidimensionality and Separability.}
Our results concern graph optimization problems where the objective is to find a vertex or edge set that satisfies a feasibility constraint and maximizes or minimizes a problem-specific objective function. For a problem $\Pi$ and vertex (edge) set $S$ let $\phi_\Pi(G,S)$ be the feasibility constraint returning {\bf true} if $S$ is feasible and {\bf false} otherwise. Let $\kappa_\Pi(G,S)$ be the objective function. 
In most cases, $\kappa_\Pi(G,S)$ will return $|S|$. We will only consider problems where every instance has at least one feasible solution. Let ${\cal U}$ be the set of all graphs. 
For a graph optimization problem $\Pi$ let $\pi : {\cal U} \rightarrow \mathbb{N}$ be a function returning the objective function value of the optimal solution of $\Pi$ on $G$. We say that a problem $\Pi$ is {\em minor-closed} if $\pi(H) \leq \pi(G)$ whenever $H$ is a minor of $G$.
 We now define bidimensional problems.

\begin{definition}[\cite{DFHT05}]
  A graph optimization problem $\Pi$ is minor-bidimensional if 
  $\Pi$ is minor-closed and 
  there is   $\delta > 0$ such that $\pi(R) \geq \delta r^2$ for the $(r \times r)$-grid $R$. In other words, the value of the solution on $R$ should be at least   $ \delta r^2$.
\end{definition}

Demaine and Hajiaghayi~\cite{DH05} define the  \emph{separation} property for problems, and show how separability together with bidimensionality is useful to obtain EPTASs on $H$-minor-free graphs. In our setting a slightly weaker notion of separability is sufficient. In particular the following definition is a reformulation of the requirement $3$ of the definition of separability in~\cite{DH05} and similar to the definition used in~\cite{F.V.Fomin:2010oq} to obtain kernels for bidimensional problems. 

\begin{definition} \label{def:sep}
A minor-bidimensional problem $\Pi$ has the \emph{separation} property if given any graph $G$ and a partition of $V(G)$ into $L \uplus S \uplus R$ such that $N(L) \subseteq S$ and $N(R) \subseteq S$, and given an optimal solution $OPT$ to $G$, $\pi(G[L]) \leq \kappa_\Pi(G[L], OPT \cap L) + 
\mathcal{O}(|S|)$ and $\pi(G[R]) \leq \kappa_\Pi(G[R], OPT \cap R) + \mathcal{O}(|S|)$.
\end{definition}

In Definition~\ref{def:sep}  we slightly misused notation. Specifically, in the case that $OPT$ is an edge set we should not be considering $OPT \cap R$ and $OPT \cap L$ but $OPT \cap E(G[R])$ and $OPT \cap E(G[L])$ respectively.

\medskip
\noindent\textbf{Reducibility, $\eta$-Transversability and Graph Classes with Truly Sublinear Treewidth.}
We now define three of the central notions of this article.
 
\begin{definition}
A graph optimization problem $\Pi$ with objective function $\kappa_\Pi$ is called {\em reducible} if there exist a \pmm{} problem $\Pi'$ and a function $f : \mathbb{N} \rightarrow \mathbb{N}$ such that 
\begin{enumerate}\setlength\itemsep{-1.2mm}
 \item there is a polynomial time algorithm that given $G$ and $X \subseteq V(G)$ outputs $G'$ such that $\pi'(G') = \pi(G) \pm \cO(|X|)$ and $\tw(G') \leq f(\tw(G \setminus X))$,
 \item there is a polynomial time algorithm that given $G$ and $X \subseteq V(G)$, $G'$ and a vertex (edge) set $S'$ such that $P_{\Pi'}(G',S')$ holds, outputs $S$ such that $\phi_\Pi(G,S)=\textbf{true}$   and $\kappa_\Pi(G,S)=|S'| \pm \cO(|X|)$.
\end{enumerate}
\end{definition}

\begin{definition}
A graph optimization problem $\Pi$ is called $\eta$-transversable if there is a polynomial time algorithm that given a graph $G$ outputs a set $X$ of size $\cO(\pi(G))$ such that $\tw(G \setminus X) \leq \eta$.
\end{definition}


\begin{definition}  
Graph class ${\cal G}$ has {\em truly sublinear treewidth with parameter}  $\lambda$, $ 0<\lambda<1$, if for every  $\eta>0$, there exists  
$\beta>0$ such that for any graph $G \in {\cal G}$ and $X \subseteq V(G)$ the condition  $\tw(G \setminus X) \leq \eta$ yields that $\tw(G) \leq \eta + \beta|X|^\lambda$. \end{definition}

\noindent 
Throughout this paper $K_t$ denotes a complete graph on $t$ vertices and we say that a graph $G$ is {\em $K_t$-free} if $G$ does not contain $K_t$ as 
an induced subgraph. 

\section{Structure of $K_t$-free Geometric Graphs}
In this section we show that unit disk graphs and map graphs with bounded 
maximum clique size has truly sublinear treewidth.  

\subsection{Structure of $K_t$-free Unit Disk Graphs.}
\label{sec:udgoofbd}
We start with the following well known observation (see \cite[Lemma~3.2]{MaratheBHR95}) about $K_t$-free unit disk graphs.
\begin{observation}
\label{obs:timpliesd}
If a unit disk graph $G$ is $K_t$-free then  $\Delta(G)\leq 6t$. 
\end{observation}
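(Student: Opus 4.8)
The plan is to prove the contrapositive by a packing argument: a vertex of large degree in a unit disk graph must force a large clique among its neighbours. Fix an intersection representation of $G$ by closed disks of a common radius, which we may take to be $1$; then $uv\in E(G)$ exactly when the centres $c_u,c_v$ satisfy $\|c_u-c_v\|\le 2$. Let $v$ be an arbitrary vertex. Every neighbour $u$ of $v$ has its centre inside the closed disk $B$ of radius $2$ centred at $c_v$, so the problem reduces to bounding how many centres can lie in $B$ without $t$ of them being pairwise at distance at most $2$.

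The geometric heart of the argument is that $B$ can be partitioned into six closed circular sectors $S_1,\dots,S_6$ of opening angle $60^\circ$ (breaking ties on the common boundary rays arbitrarily), each of which has diameter at most $2$. Indeed, two points of a single sector lie at distances $\rho_1,\rho_2\in[0,2]$ from $c_v$ with angular separation $\theta\le 60^\circ$, so by the law of cosines their squared distance is $\rho_1^2+\rho_2^2-2\rho_1\rho_2\cos\theta\le \rho_1^2+\rho_2^2-\rho_1\rho_2\le 4$, the last inequality being immediate on $[0,2]^2$. Hence any two neighbours of $v$ whose centres fall in the same sector are at distance at most $2$ and are therefore adjacent in $G$, so the neighbours of $v$ with centres in a fixed sector induce a clique in $G$.

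Now I would finish by pigeonhole: if $\deg_G(v)\ge 6t$, then the corresponding (at least) $6t$ neighbour-centres in $B$ cannot be distributed with at most $t-1$ in each sector, since that would account for at most $6(t-1)<6t$ of them; hence some sector contains at least $t$ of these centres, and those $t$ vertices induce a $K_t$ in $G$, contradicting $K_t$-freeness. Therefore every vertex of a $K_t$-free unit disk graph has degree at most $6t-1$, and in particular $\Delta(G)\le 6t$, as claimed. I do not expect any real obstacle: the only step that needs any care is the elementary claim that a $60^\circ$ sector of a radius-$2$ disk has diameter $2$ (realised by the two endpoints of its bounding arc, at distance $2\cdot 2\sin 30^\circ=2$), and even this may be replaced by the folklore fact that a disk is coverable by six sets whose diameter equals its radius.
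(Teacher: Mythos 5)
The paper states this observation without proof, citing it as well known from \cite{MaratheBHR95}; your sector-decomposition argument is exactly the standard proof of that fact and is correct as written. The law-of-cosines verification that each closed $60^\circ$ sector of the radius-$2$ neighbourhood disk has diameter at most $2$ is sound, and the pigeonhole step then forces a $K_t$ among $6t$ neighbours, giving $\Delta(G)\le 6t$ (indeed even $\Delta(G)\le 6(t-1)$, since each of the six sectors can carry at most $t-1$ neighbour-centres).
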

Observation~\ref{obs:timpliesd} allows us to prove theorems on unit disk graphs of bounded maximum degree and then use these results for $K_t$-free graphs. 

Let $G$ be a unit disk graph generated by ${\cal B}=\{B_1,\ldots,B_n\}$ and $\Delta(G)=\Delta$. 
We will associate an auxiliary planar graph $P_G$ with $G$ such that the treewidth of these 
two graphs is linearly related.  Let $P_I$ be a planar graph defined as follows. Consider the 
embedding (drawing) of the unit disks ${\cal B}=\{B_1,\ldots,B_n\}$ in the plane. Let $\cal P$ be the set of points in 
the plane such that each point in the set is on the boundary of at least two disks. Essentially, this is the set of points at 
unit distance to centers of at least two disks.  
We place a vertex at each point in $\cal P$ and regard the curve between 
a pair of vertices as an edge,  then the 
embedding of unit disks ${\cal B}=\{B_1,\ldots,B_n\}$ in the plane gives rise to the drawing $P_I$ of a planar multigraph. 
Furthermore let $D_I$ be the planar dual of $P_I$; it is well known that $D_I$ is also planar. 


Next we define a notion of {\em region} which is essential for the definition of $P_G$. 
We call a face $\cal R$ of the plane graph $P_I$ a  {\em region}, if there exists a nonempty subset 
${\cal B}'\subseteq {\cal B}$ of unit disks  such that every point in $\cal R$ is an interior point of each disk in ${\cal B}'$. 
Hence with every region $\cal R$ we can associate a set of unit disks.  
Since the vertices of $G$ correspond to disks of ${\cal B}$, we can associate a subset of 
vertices of $G$, say ${\cal V}({\cal R})$, to a region $\cal R$. 
We remark that there could be two regions ${\cal R}_1$ and ${\cal R}_2$ with ${\cal V}({\cal R}_1)={\cal V}({\cal R}_2)$. 
Now we are ready to define the graph $P_G$.   
Let ${\cal R}_1,\ldots,{\cal R}_p$ be  the regions of $P_I$. These are faces in $P_I$ and hence in the dual graph $D_I$ we have 
vertices corresponding to them. That is, in $D_I$  for every region ${\cal R}_i$ we have a vertex  
$v({\cal R}_i)$.  We define 
\[P_G:=D_I[ \{v({\cal R}_i) \mid 1 \leq i \leq p\}].\]
Thus, $P_G$ is an induced subgraph of $D_I$ obtained by removing non-region vertices. Figure~\ref{unitdg} illustrates  the construction of graphs $P_I$ and $P_G$ from unit disks drawing.
%
Next we prove some properties of $P_G$. 
\begin{figure}
  \begin{center} 
    \includegraphics[scale=.3]{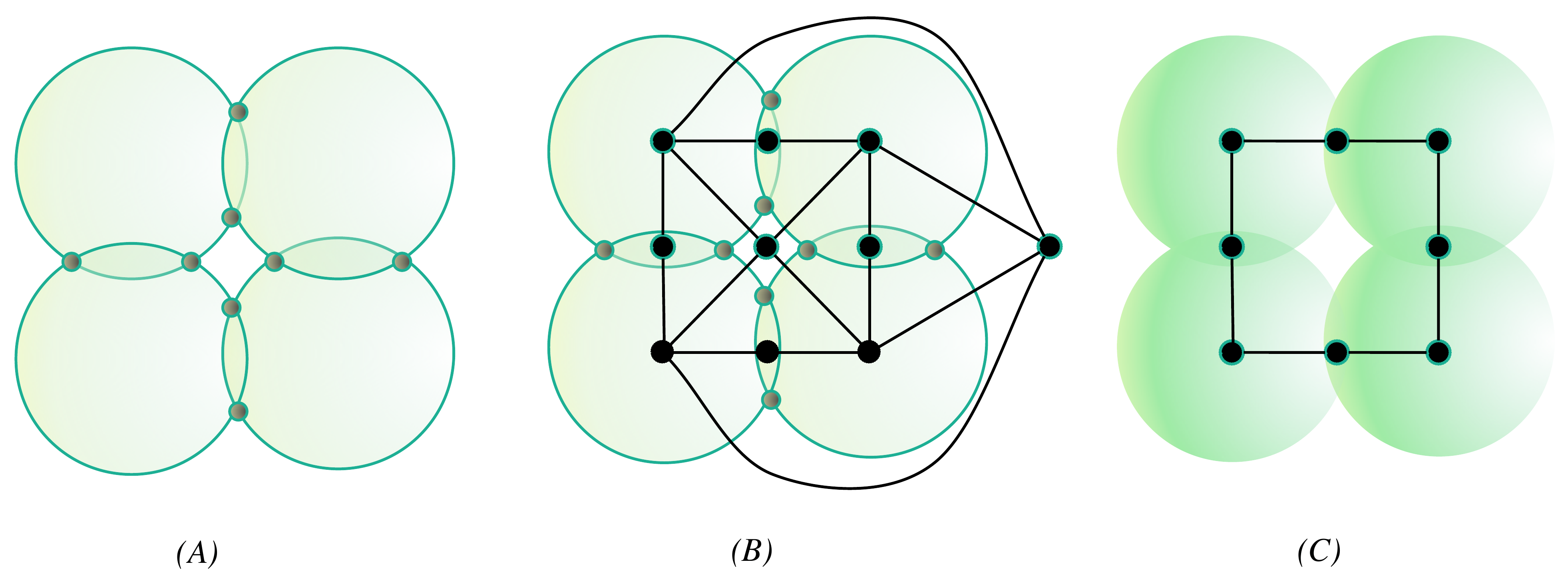}
   \end{center}
   \caption{{{(A) Drawing of a planar (multi)graph $P_{I}$ formed by the drawing of four disks; (B) The dual graph $D_I$ of $P_I$; (C) The graph $P_G$.}}}
\label{unitdg}
 \end{figure}

\begin{lemma}
\label{lem:pgisplanar}
 Let $G$ be a unit disk graph of maximum degree $\Delta$. 
 Then  $P_G$ is a planar graph and every vertex $v\in V(G)$ is a part of  at most $3(\Delta^2+\Delta)$ regions. 
\end{lemma}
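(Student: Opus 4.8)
The plan is to establish the two assertions of Lemma~\ref{lem:pgisplanar} separately, reusing the planarity of $D_I$ for the first and a counting argument on regions for the second.

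\textbf{Planarity of $P_G$.} Recall that $P_I$ is a planar multigraph obtained by overlaying the boundary circles of the unit disks in $\mathcal{B}$; its planar dual $D_I$ is therefore planar, and $P_G$ was defined to be the induced subgraph $D_I[\{v(\mathcal{R}_i)\}]$ obtained by keeping only those dual vertices that correspond to regions. Since planarity is inherited by subgraphs (and in particular by induced subgraphs), $P_G$ is planar. So this direction requires essentially no work beyond invoking the definitions and the well-known fact that the dual of a planar graph is planar.

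\textbf{Each vertex of $G$ lies in at most $3(\Delta^2+\Delta)$ regions.} Fix a vertex $v \in V(G)$, corresponding to a disk $B_v \in \mathcal{B}$. The key observation is that a region $\mathcal{R}$ with $v \in \mathcal{V}(\mathcal{R})$ is, by definition, a face of $P_I$ every point of which lies in the interior of $B_v$; hence $\mathcal{R}$ is a face of $P_I$ contained in the open disk $B_v$. So I would bound the number of faces of $P_I$ that lie inside $B_v$. The arrangement of the boundary circles $\partial B_1,\dots,\partial B_n$ restricted to the interior of $B_v$ is governed only by those disks $B_w$ whose boundary actually meets the interior of $B_v$; since $G$ has maximum degree $\Delta$, at most $\Delta$ disks intersect $B_v$, so at most $\Delta$ circles can contribute edges to the portion of $P_I$ inside $B_v$. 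I would then bound the number of faces of an arrangement of $\Delta$ circles inside a fixed disk: two circles cross in at most $2$ points, so the vertices of $P_I$ in the interior of $B_v$ number at most $2\binom{\Delta}{2} = \Delta^2-\Delta$ (plus possibly $O(\Delta)$ points where a circle meets $\partial B_v$, which I would account for carefully), and via Euler's formula for the planar subgraph drawn inside $B_v$ one gets a linear-in-(vertices+1) bound on the number of faces, yielding the claimed $3(\Delta^2+\Delta)$ after bookkeeping. A cleaner route, which I would try first, is to count incidences directly: each region containing $v$ is incident in $P_I$ to at least one vertex of $\mathcal{P}$ lying on $\partial B_w \cap \partial B_v$ for some neighbor $w$ of $v$, or on $\partial B_w \cap \partial B_{w'}$ for two neighbors $w,w'$ of $v$; bounding the number of such points of $\mathcal{P}$ by $2\binom{\Delta}{2}+2\Delta$ and the number of faces of $P_I$ around each such point by a constant gives the bound.

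\textbf{Expected main obstacle.} The planarity half is immediate; the real care goes into the region-counting half. The subtle point is that two distinct regions may carry the same vertex set of $G$ (the paper explicitly warns of this), so I must count \emph{faces of $P_I$} that sit inside a given disk, not vertex subsets, and I must be careful that the local arrangement inside $B_v$ is controlled purely by the at most $\Delta$ neighbors of $v$ — this is where Observation~\ref{obs:timpliesd} (via $\Delta \le 6t$) ultimately feeds in when we specialize to $K_t$-free graphs, but here $\Delta$ is just the given maximum degree. Getting the precise constant $3(\Delta^2+\Delta)$ rather than merely $O(\Delta^2)$ will require being slightly generous in exactly the right places in the Euler-formula count, so I would set up the arrangement bound with a small amount of slack and then verify the constant matches.
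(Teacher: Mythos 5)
Your proposal is correct and follows essentially the same route as the paper's proof: planarity is immediate from $P_G$ being an induced subgraph of the planar dual $D_I$, and the region bound is obtained by restricting attention to the (at most $\Delta+1$) circles that can reach the interior of $B_v$, bounding the intersection points by $2\binom{\Delta+1}{2}=\Delta^2+\Delta$, and then applying Euler's formula to the induced planar (multi)graph to bound the number of faces by $3(\Delta^2+\Delta)$. Your accounting ($\Delta$ neighboring circles plus $\partial B_v$, giving $2\binom{\Delta}{2}+2\Delta=\Delta^2+\Delta$ points) is arithmetically the same as the paper's $2\binom{\Delta+1}{2}$, and the Euler step with the multi-edge correction ($|E|\leq 4|\mathcal{L}|-6$, hence at most $3|\mathcal{L}|-4$ faces) is exactly where the "slight generosity" you anticipate comes in; no genuinely different argument is involved.
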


\begin{proof}
The graph $P_G$ is a subgraph of $D_I$, 
the planar dual of $P_I$, and hence it is also planar. 
Let $v\in V(G)$ be a vertex. We consider the 
embedding (drawing) of unit disks corresponding to the vertices of the closed neighborhood 
 $ N_G[v]$ in the plane. Then  $|N_G[v]|\leq \Delta +1$.  Let $\cal L$ be the set of the points in 
the plane such that each point in the set is on the boundary of at least two disks with distinct center points. This is
the induced subgraph of $P_I$ formed by the intersection points of the boundaries of disks from  $N_G[v]$. 
  Since every two circles with distinct center points intersect in at most  two points,
   we have that 
$|{\cal L}|\leq 2 {|N_G[v]|\choose 2}\leq 2 {\Delta+1 \choose 2}= \Delta^2+\Delta$. 
Consider the planar graph $P_I[{\cal L}]$, which is a subgraph of $P_I$ induced by ${\cal L}$. Observe that  $v$ can only  
 be a part of regions defined by faces of $P_I[{\cal L}]$. To obtain an upper bound on the number of faces  of $P_I[{\cal L}]$, 
we first obtain an upper bound on the number of edges of $P_I[{\cal L}]$. First observe that between any pair of vertices in 
$P_I[{\cal L}]$ there can at most be two edges and there are at most $|{\cal L}|$ pairs that have two edges between them. 
It is well known that  a planar graph on $n$ vertices without any parallel edges has at most $3n-6$ edges. Thus, the number of 
edges in  $P_I[{\cal L}]$  is at most $3|{\cal L}|-6+|{\cal L}|=4|{\cal L}|-6$. Now by  Euler's formula,  
 the number of faces   in 
$P_I[{\cal L}]$  is at most  
 \[2+|E(P_I[{\cal L}])|-|{\cal L}|\leq 2+ 3 |{\cal L}|-6\leq 2+ 3(\Delta^2+\Delta)-6\leq 3(\Delta^2+\Delta).\] 
 Thus $v$   is a part of at most $3(\Delta^2+\Delta)$ regions. We complete the proof by making a remark that 
the Euler's formula also holds for graphs with multiple edges. 
 \end{proof}

\begin{lemma} 
\label{lem:twofg}
 Let $G$ be a unit disk graph. Then $\tw(G)\leq (\Delta(G)+1) \cdot (\tw(P_G)+1)-1$.
\end{lemma}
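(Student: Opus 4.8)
The plan is to build a tree decomposition of $G$ directly from an optimal tree decomposition of $P_G$, by replacing each bag of $P_G$ (a set of regions) with the union of the vertex sets $\mathcal{V}(\mathcal{R})$ of those regions, and then argue that this collection of ``blown-up'' bags is a valid tree decomposition of $G$ whose width is controlled by the stated bound. The key observation making this work is that the vertices of $G$ correspond to unit disks, and two disks $B_i,B_j$ intersect (i.e.\ $v_i v_j\in E(G)$) precisely when they have a common interior point, hence a common region; more importantly, every disk $B_i$ contains at least one region (its own interior, subdivided by $P_I$, contains some face that is a region with $B_i$ in its associated set), so every vertex of $G$ appears in some blown-up bag.

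First I would fix an optimal tree decomposition $(\mathcal{X},T)$ of $P_G$ of width $\tw(P_G)$, so $|X_i|\leq \tw(P_G)+1$ for every $i\in V(T)$. For each $i$ define $Y_i=\bigcup_{v(\mathcal{R})\in X_i}\mathcal{V}(\mathcal{R})\subseteq V(G)$, and I claim $(\{Y_i\},T)$ is a tree decomposition of $G$. The three axioms must be checked: (1) coverage --- every $v\in V(G)$ lies in some $\mathcal{V}(\mathcal{R})$ (the disk $B_v$ has nonempty interior, which contains a face of $P_I$ lying entirely inside $B_v$, hence a region), and $v(\mathcal{R})$ appears in some $X_i$, so $v\in Y_i$; (2) edges --- if $v_iv_j\in E(G)$ then $B_i\cap B_j$ has nonempty interior, which contains some region $\mathcal{R}$ with $\{v_i,v_j\}\subseteq \mathcal{V}(\mathcal{R})$, and $v(\mathcal{R})\in X_\ell$ for some $\ell$ gives $\{v_i,v_j\}\subseteq Y_\ell$; (3) connectivity --- here I need that for a fixed vertex $v$, the set of tree nodes $i$ with $v\in Y_i$ is connected. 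This follows because $v\in Y_i$ iff $X_i$ contains $v(\mathcal{R})$ for some region $\mathcal{R}$ with $v\in\mathcal{V}(\mathcal{R})$; the regions containing $v$ form a set of at most $3(\Delta^2+\Delta)$ vertices of $P_G$ by Lemma~\ref{lem:pgisplanar}, and the union of their (individually connected) subtrees need not be connected in general --- so I have to be more careful here (see below).

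For the width bound, each $Y_i$ is a union of at most $\tw(P_G)+1$ sets $\mathcal{V}(\mathcal{R})$, and each such set has size at most $\Delta(G)+1$: a region $\mathcal{R}$ is an interior point common to all disks in its associated set $\mathcal{B}'$, and any two disks in $\mathcal{B}'$ intersect, so $\mathcal{V}(\mathcal{R})$ is a clique in $G$ and hence has at most $\Delta(G)+1$ vertices. Therefore $|Y_i|\leq (\tw(P_G)+1)(\Delta(G)+1)$, giving width at most $(\Delta(G)+1)(\tw(P_G)+1)-1$, as claimed.

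The main obstacle is the connectivity axiom (3): the naive blow-up can break it, because the regions containing a fixed vertex $v$ may be scattered across $T$. The standard fix --- and what I expect the authors do --- is to first modify $T$ so that for each vertex $v$ of $G$ the nodes whose bag contains a $v$-region form a connected subtree; this is achieved by adding, for each $v$, edges/identifications or by taking for each region $\mathcal{R}$ of $v$ the subtree $T_{\mathcal{R}}=\{i:v(\mathcal{R})\in X_i\}$ and noting these subtrees pairwise intersect or can be linked. Concretely, one argues that for a fixed $v$ the graph on $V(T)$ where we connect $T_{\mathcal{R}}$ and $T_{\mathcal{R}'}$ whenever $\mathcal{R},\mathcal{R}'$ are ``adjacent'' regions of $v$ (share a boundary edge in $P_I[\mathcal{L}]$, where $\mathcal{L}$ is as in Lemma~\ref{lem:pgisplanar}) already yields connectivity, because adjacent regions of $v$ are adjacent in $D_I$ hence in $P_G$, so their bags overlap in $T$; since the regions of $v$ form a connected region of the plane (the interior of $B_v$ minus a planar subgraph, whose faces are connected through shared edges), transitivity through these overlaps makes $\{i:v\in Y_i\}$ connected. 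Alternatively, one sidesteps this entirely by observing $\tw(G)\le (\Delta(G)+1)(\tw(D_I)+1)-1$ isn't quite what we want either; so the region-adjacency argument seems to be the essential point, and getting it exactly right is where the real work lies.
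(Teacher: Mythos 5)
Your proposal builds exactly the same tree decomposition as the paper (blow up each bag of $P_G$ to the union of the corresponding $\mathcal{V}(\mathcal{R})$), verifies coverage, edges, and the width bound the same way, and your region-adjacency argument for the connectivity axiom is precisely the paper's argument unpacked: all faces of $P_I$ lying inside $B_v$ are regions, they form a connected subgraph of $P_G$ (adjacent faces of a connected patch of the plane share an edge, hence are adjacent in $D_I$ and so in $P_G$), and a connected subgraph of $P_G$ meets a connected family of bags in $T$. Your concern that the blow-up ``can break'' connectivity was the right thing to check, but it resolves exactly as you describe and no modification of $T$ is needed; the paper states the same conclusion in one sentence, relying implicitly on the standard fact that for any $S$ inducing a connected subgraph, $\bigcup_{w\in S}\{i:w\in X_i\}$ is a subtree.
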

\begin{proof}
Let $\Delta(G)=\Delta$ and  $(\mathcal{X}',T)$ be a tree decomposition of   $P_G$ of width $\tw(P_G)$. We build  a tree 
decomposition $(\mathcal{X},T)$ of  $G$ from  the tree-decomposition  $(\mathcal{X}',T)$ 
of $P_G$.  Let $X_i'$ be the subset of $V(P_G)$ associated with the node $i$ of $T$.  
We define $X_i:=\bigcup_{v({\cal R})\in X_i'}{\cal V}({\cal R})$. Recall that ${\cal V}({\cal R})$ is a subset of vertices in 
$V(G)$ characterizing $\cal R$. 
This concludes the description of a decomposition for $G$.  
 Observe that the set 
${\cal V}({\cal R})$ is contained in $N_G[w]$ for every $w\in {\cal V}({\cal R})$ and hence the size of each of 
them is bounded above by $\Delta+1$. Hence the size of each of $X_i$ is at most $(\Delta +1)\cdot |X_i'|$. 
This implies that the size of every  bag $X_i$ is at most $(\Delta+1) (\tw(P_G)+1)$. 

Now we show that this is indeed a tree-decomposition for $G$ by proving that it  satisfies the three properties of  a tree decomposition.  
By construction,  every vertex of $V(G)$ is contained in some $X_i$. To show that for every edge 
$uv \in E(G)$ there is a node $i$ such that $u,v \in X_i$,  we argue as follows. If there is an edge between $u$ and $v$ in 
$G$ then  unit disks corresponding to these vertices intersect and hence there is a region $\cal R$ which is 
completely contained inside this intersection. This implies that $u, v \in {\cal V}({\cal R}) $. For node $i$ such that $v({\cal R})$ is 
contained inside $X_i'$, we have that the corresponding bag $X_i$ contains $u$ and $v$.  To conclude we  need 
to show that for each $v \in V(G)$ the set $Z=\{ i \mid x \in X_{i} \}$ induces a   subtree of $T$. Observe that 
$v$ appears in all the  bags corresponding to node $i$ such that $X_i'$ contains a vertex corresponding to a region   which 
$v$ is a part of. This implies that all these regions are inside the unit disk corresponding to $v$. Hence the graph induced by 
vertices corresponding to these regions  is connected.  Thus the set $Z$  induces a subtree of $T$. 
\end{proof}

We now show a linear excluded grid theorem for unit disk graphs of bounded degree.

\begin{lemma}\label{lem:udglingrid}
Any unit disk graph $G$ with maximum degree $\Delta$ contains a $\frac{\tw(G)}{100\Delta^3} \times \frac{\tw(G)}{100\Delta^3}$ grid as a minor.
\end{lemma}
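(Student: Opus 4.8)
The plan is to route through the planar auxiliary graph $P_G$, using the two lemmas already established. First I would invoke Lemma~\ref{lem:twofg}, which rearranges to
\[
\tw(P_G) \;\ge\; \frac{\tw(G)+1}{\Delta+1}-1 \;=\; \frac{\tw(G)-\Delta}{\Delta+1},
\]
so a large treewidth of $G$ forces a large treewidth of $P_G$. Since $P_G$ is planar (Lemma~\ref{lem:pgisplanar}), I would then apply the excluded grid theorem for planar graphs of Robertson, Seymour and Thomas~\cite{RobertsonST94}: a planar graph of treewidth at least $t$ contains a $g\times g$ grid as a minor with $g\ge t/6$. The remaining, and genuinely substantial, task is to transfer this grid minor from $P_G$ back to $G$ itself.

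For the transfer I would use that each vertex $v\in V(G)$ corresponds to a disk $B_v$, each vertex of $P_G$ is a region $\mathcal R$, and each region comes with the nonempty set $\mathcal V(\mathcal R)\subseteq V(G)$ of disks whose interior contains it. Two elementary facts drive the argument. (i) $\mathcal V(\mathcal R)$ induces a clique in $G$, since any two disks both containing $\mathcal R$ intersect. (ii) If $\mathcal R,\mathcal R'$ are adjacent in $P_G$, then $\mathcal V(\mathcal R)\cap\mathcal V(\mathcal R')\neq\emptyset$: the edge of $P_I$ that they share lies on a single circle $\partial B_w$, so passing from $\mathcal R$ to $\mathcal R'$ changes the enclosing-disk set by at most the single disk $B_w$, and both sets are nonempty. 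Consequently, given the $g\times g$ grid minor $\{\phi(z)\}_z$ in $P_G$, the sets $\psi(z):=\bigcup_{\mathcal R\in\phi(z)}\mathcal V(\mathcal R)$ induce connected subgraphs of $G$, and for grid-adjacent $z,z'$ the sets $\psi(z),\psi(z')$ meet. Moreover Lemma~\ref{lem:pgisplanar} bounds the number of regions through any disk $B_v$ by $D:=3(\Delta^2+\Delta)$, and — since those regions all lie inside $B_v$ — they form a connected subgraph of $P_G$ of size at most $D$; hence each $v\in V(G)$ lies in at most $D$ of the sets $\psi(z)$, and the grid-vertices $z$ with $v\in\psi(z)$ are tied together through one small connected piece of $P_G$.

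The main obstacle, and the step I would expect to require the most care, is that the sets $\psi(z)$ need not be pairwise disjoint, so they do not yet form a minor model: what we have is a ``$g\times g$ grid minor of multiplicity at most $D$''. One has to convert this into a genuine grid minor of $G$ while losing only a factor linear in $D$ in the side length. The bound on the multiplicity is what makes this possible, together with the extra structural fact noted above (the at most $D$ regions through a fixed disk are connected and few, so each disk can only glue together a small, controlled bundle of branch sets). Concretely I would pass to a suitable subgrid and re-form the branch sets so that the overlaps are absorbed/eliminated, producing a grid minor of $G$ of side at least $g/D\ge \tw(P_G)/(6D)$.

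Finally I would chain the estimates: the side of the resulting grid minor of $G$ is at least
\[
\frac{\tw(P_G)}{6D}\;\ge\;\frac{1}{6D}\cdot\frac{\tw(G)-\Delta}{\Delta+1}\;=\;\frac{\tw(G)-\Delta}{18\,\Delta(\Delta+1)^2},
\]
and a short computation shows this is at least $\tw(G)/(100\Delta^3)$ whenever $\Delta\ge 3$ and $\tw(G)\ge 2\Delta$; in the remaining cases ($\Delta\le 2$, where $G$ has treewidth at most $2$, or $\tw(G)<2\Delta$) the quantity $\tw(G)/(100\Delta^3)$ is below $1$, so the statement is vacuous. The delicate point is entirely the overlap-removal step — making precise how the connectivity and the size bound $D$ on the region-clusters let one re-route the branch sets while preserving the grid structure, at the cost of only an $O(D)=O(\Delta^2)$ factor in the side length.
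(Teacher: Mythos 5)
Your high-level route is the same as the paper's: pass to the planar auxiliary graph $P_G$, apply the Robertson--Seymour--Thomas planar excluded-grid theorem there, blow each region-vertex $v(\mathcal R)$ back up to $\mathcal V(\mathcal R)$, and use the two facts that (a) every vertex of $G$ lies in at most $\Delta' := 3(\Delta^2+\Delta)$ regions and (b) the regions containing a fixed vertex $v$ induce a connected subgraph $P_G[X_v]$. You have correctly isolated the one step that carries all the weight --- converting the ``multiplicity-$\Delta'$ grid'' $\{\psi(z)\}$ into a genuine minor model --- but you have not actually carried it out, and the phrase ``pass to a suitable subgrid and re-form the branch sets'' does not by itself constitute an argument. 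Thinning alone fails: two cells that survive the thinning could still share a vertex, and nothing in your sketch explains how a shared vertex is assigned to one branch set rather than the other while keeping both connected and keeping the required adjacencies.

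What the paper does, and what your sketch is missing, is a concrete coarsening with three ingredients. First, one must prove the quantitative disjointness statement: if $(i,j)$ and $(i',j')$ are internal cells of the $t\times t$ grid at $\ell_1$-distance more than $\Delta'$, then $S'[i,j]\cap S'[i',j']=\emptyset$, because a shared vertex $v$ would force the connected set $P_G[X_v]$ to contain a path crossing more than $\Delta'$ grid cycles, contradicting $|X_v|\le\Delta'$; similarly, distance more than $2\Delta'$ rules out edges between the blown-up sets. Your observation that the $z$'s with $v\in\psi(z)$ are ``tied together through one small connected piece'' is the germ of this, but the step from ``connected and of size $\le\Delta'$'' to ``confined to an $\ell_1$-ball of radius $\Delta'$ in the grid'' needs the path-through-grid-cycles argument and is exactly where the factor in the final bound comes from. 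Second, once disjointness at distance $>2\Delta'$ is established, the new branch sets cannot simply be single surviving cells: they must be \emph{blocks} $V[a,b]$ of roughly $2\Delta'\times 2\Delta'$ cells spaced $4\Delta'$ apart, with separate ``edge gadgets'' $E_h[a,b]$, $E_v[a,b]$ in between, so that distinct blocks are automatically disjoint and non-adjacent while each edge gadget still touches both its endpoints. Third, the edge gadgets may themselves intersect the blocks, so one takes a connected component $E^*_h[a,b]$ of $G[E'_h[a,b]\setminus(V'[a,b]\cup V'[a+1,b])]$ meeting both blocks; this yields a subdivided grid minor, which is still a grid minor. None of this appears in your write-up, and it is not a routine rearrangement --- the resulting side length is $\lfloor (t-2\Delta')/(4\Delta')\rfloor$, not $t/\Delta'$ as you assert, so even your quantitative claim is off by a constant factor until the construction is pinned down. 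In short, right skeleton, correct identification of the needed ingredients, but the proof of the lemma is precisely the coarsening construction you have deferred.
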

\begin{proof}
Let $G$ be a unit disk graph of maximum degree $\Delta$, and define $P_G$ as above.  Since $P_G$ is planar, by the excluded grid theorem for planar graphs~\cite{RobertsonST94}, $P_G$ contains a $t \times t$ grid as a minor where $t=\frac{\tw(P_G)}{6}$. By Proposition~\ref{prop:branchsets},  we know that there is a minor model of this grid, say 
$\{S[i,j] : 1\leq i,j \leq t\}$.   
We know that for every $i$,$j$, $P_G[S[i,j]]$ is connected, the sets $S[i,j]$ are pairwise disjoint and finally for every $i$,$j$,$i'$,$j'$ such that $|i-i'|+|j-j'|=1$ there is an edge in $P_G$ with one endpoint in $S[i,j]$ and the other in $S[i',j']$. For every $i$,$j$, we  build $S'[i,j]$ from $S[i,j]$ by replacing every vertex $v({\cal R}) \in S[i,j]$ by ${\cal V}({\cal R})$ and removing duplicates. We set $\Delta' = 3(\Delta^2+\Delta)$, and observe that for any vertex $v$ in $G$, Lemma~\ref{lem:pgisplanar} implies that there are at most $\Delta'$ sets $S'[i,j]$ that contain $v$. 

We say that an integer pair $(i,j)$ is {\em internal} if $\Delta' \leq i \leq t-\Delta'$ and $\Delta' \leq j \leq t-\Delta'$. We prove that for any two internal pairs $(i,j)$ and $(i',j')$ such that $|i-i'|+|j-j'|>\Delta'$ the sets $S'[i,j]$ and $S'[i',j']$ are disjoint. To obtain a contradiction 
assume  that both sets contain a vertex $v$ in $G$. Let $X_v$ be the set of vertices $v({\cal R})$ such that $v \in {\cal V}({\cal R})$. We will show that $|X_v|>\Delta'$ which contradicts that $v$ is part of at most $\Delta'$ regions. On one hand, $P_G[X_v]$ is connected. On the other hand, both $S[i,j] \cap X_v$ and $S[i',j'] \cap X_v$ are non-empty. But any path in $P_G$ between a vertex in $S[i,j]$ and a vertex in $S[i',j']$ must pass through at least $\Delta'+1$ cycles of the grid minor and thus the length of a shortest path between a pair of vertices, $x\in (S[i,j] \cap X_v)$ and $ y \in (S[i',j'] \cap X_v$), is at least $\Delta'+1$. This implies that the length of a shortest path between $x$ and $y$ in $P_G[X_v]$ is at least  $\Delta'+1$ 
and hence $|X_v|>\Delta'$, yielding the desired contradiction. By an identical argument one can show that, for any two internal pairs $(i,j)$ and $(i',j')$ such that $|i-i'|+|j-j'|>2\Delta'$ there is no edge with one endpoint in $S'[i,j]$ and the other in $S'[i',j']$. 

For every pair $a, b$ of non-negative integers such that $4\Delta' a + 2\Delta' \leq t$ and $4\Delta' b + 2\Delta' \leq t$ we define the sets 
\begin{itemize}\setlength\itemsep{-.7mm}
\item $V[a,b] = \bigcup_{i=0}^{2\Delta-1}\bigcup_{j=0}^{2\Delta-1} S[\Delta+4\Delta a+i,\Delta+4\Delta b+j]$.
\item $E_h[a,b] = \bigcup_{i=0}^{2\Delta-1} S[3\Delta+4\Delta a+i,2\Delta+4\Delta b]$.
\item $E_v[a,b] = \bigcup_{j=0}^{2\Delta-1} S[2\Delta+4\Delta a,3\Delta+4\Delta b+j]$.
\end{itemize}
One can think of each set $V[a,b]$ as a vertex of a grid, with each set $E_h[a,b]$ being a horizontal edge and each set $E_v[a,b]$ being a vertical edge in this grid. Build $V'[a,b]$ from $V[a,b]$ by replacing every vertex $v({\cal R}) \in V[a,b]$ by ${\cal V}({\cal R})$ and removing duplicates. Construct $E'_h[a,b]$ from $E_h[a,b]$ and $E'_v[a,b]$ from $E_v[a,b]$ similarly. We list the properties of the sets $V'[a,b]$, $E'_h[a,b]$ and $E'_v[a,b]$.
\begin{enumerate}\setlength\itemsep{-.7mm}
\item \label{pnt:connect} For every $a$, $b$, $G[V'[a,b]]$, $G[E'_h[a,b]]$ and $G[E'_v[a,b]]$ are connected. 
\item \label{pnt:disj1} Distinct sets $V'[a,b]$ and $V'[a',b']$ are pairwise disjoint, and there is no edge with one endpoint in $V'[a,b]$ and the other in $V'[a',b']$.
\item \label{pnt:disj2} For every $a$, $b$ the set $E'_h[a,b]$ is disjoint from every set $E'_h[a',b']$, $E'_v[a',b']$ and $V'[a',b']$, except possibly for $V'[a,b]$ and $V'[a+1,b]$.
\item \label{pnt:disj3} For every $a$, $b$ the set $E'_v[a,b]$ is disjoint from every set $E'_h[a',b']$, $E'_v[a',b']$ and $V'[a',b']$, except possibly for $V'[a,b]$ and $V'[a,b+1]$.
\item \label{pnt:adj} For every $a$, $b$ there is a vertex in $E'_h[a,b]$ which is adjacent to $V'[a,b]$ and a vertex which is adjacent to $V'[a+1,b]$. Furthermore there is a vertex in $E'_v[a,b]$ which is adjacent to $V'[a,b]$ and a vertex which is adjacent to $V'[a,b+1]$.
\end{enumerate}
Property~\ref{pnt:connect} follows directly from the fact that $P_G[V[a,b]]$, $P_G[E_h[a,b]]$ and $P_G[E_v[a,b]]$ are connected. Properties~\ref{pnt:disj1}, \ref{pnt:disj2} and~\ref{pnt:disj3} follow from the fact that for any two internal pairs $(i,j)$ and $(i',j')$ such that $|i-i'|+|j-j'|>2\Delta'$ the sets $S'[i,j]$ and $S'[i',j']$ are disjoint and have no edges between each other. Finally, Property~\ref{pnt:adj} follows from the fact that for every $a$,$b$ there is a vertex in $E_h[a,b]$ which is adjacent to $V[a,b]$ and a vertex which is adjacent to $V[a+1,b]$, and that there is a vertex in $E_v[a,b]$ which is adjacent to $V[a,b]$ and a vertex which is adjacent to $V[a,b+1]$.

For a pair $a$, $b$ of integers consider the set $E'_h[a,b]$. The properties~\ref{pnt:connect}, \ref{pnt:disj1} and~\ref{pnt:adj} ensure that some connected component $E^*_h[a,b]$ of $G[E'_h[a,b] \setminus (V'[a,b] \cup V'[a+1,b])]$ contains at least one neighbour of $V'[a,b]$ and one neighbour of  $V'[a+1,b]$. Similarly at least one connected component $E^*_v[a,b]$ of $G[E'_v[a,b] \setminus (V'[a,b] \cup V'[a,b+1])]$ contains at least one neighbour of $V'[a,b]$ and one neighbour of $V'[a,b+1]$. Then the family 
$$\{V'[a,b], E^*_h[a,b], E^*_v[a,b] : 4\Delta' a + 2\Delta' \leq t \mbox{ and } 4\Delta' b + 2\Delta' \leq t\}$$
of vertex sets in $G$ forms a model of a $\lfloor \frac{t-2\Delta'}{4\Delta'}\rfloor \times\lfloor \frac{t-2\Delta'}{4\Delta'}\rfloor$ grid minor in $G$ with every edge subdivided once. The sets $V'[a,b]$ are models of the vertices of the grid, the sets $E^*_h[a,b]$ are models of the subdivision vertices on the horizontal edges, while $E^*_v[a,b]$ are models of the subdivision vertices on the vertical edges.  Now by Lemma~\ref{lem:twofg}, we know that 
$\tw(P_G) \geq \frac{\tw(G) + 1}{(\Delta+1)} -1$. Combining this with the fact that $t=\frac{\tw(P_G)}{6}$ we can show that $G$ has a grid of size 
$\frac{\tw(G)}{100\Delta^3} \times \frac{\tw(G)}{100\Delta^3}$  as a minor. This concludes the proof.
\end{proof}

Using Lemma~\ref{lem:udglingrid} we show the following theorem and then combining this theorem with Observation~\ref{obs:timpliesd} we get an 
analogous result for $K_t$-free unit disk graphs.  

\begin{theorem} 
\label{thm:tstdiscgraphs} Let ${\cal G}_U^\Delta$ be the class of unit disk graphs such that the maximum degree of every graph $G$ in ${\cal G}_U^\Delta$ is at most $\Delta$. Then ${\cal G}_U^\Delta$ has truly sublinear treewidth with $\lambda = \frac{1}{2}$.
\end{theorem}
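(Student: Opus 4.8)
The plan is to deduce Theorem~\ref{thm:tstdiscgraphs} directly from the linear excluded grid theorem for bounded-degree unit disk graphs (Lemma~\ref{lem:udglingrid}), via the standard ``grid in the class + small addition'' argument underlying truly sublinear treewidth. Fix $\Delta$, let $\eta > 0$ be given, and let $G \in {\cal G}_U^\Delta$ with $X \subseteq V(G)$ such that $\tw(G \setminus X) \leq \eta$. We must produce $\beta$ (depending only on $\eta$ and $\Delta$) with $\tw(G) \leq \eta + \beta |X|^{1/2}$. The key observation is that $\tw(G) \leq \tw(G \setminus X) + |X| \leq \eta + |X|$ always holds, so we only need the bound when $|X|$ is large; and for large $|X|$ we will argue that a too-large treewidth forces a large grid minor, which in turn forces $G \setminus X$ to have large treewidth, contradicting $\tw(G\setminus X)\le\eta$.

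First I would apply Lemma~\ref{lem:udglingrid}: $G$ contains a $g \times g$ grid as a minor with $g = \lfloor \tw(G) / (100\Delta^3)\rfloor$ (or the stated fraction, rounded down). Deleting the $|X|$ vertices of $X$ from a grid minor: each vertex of $X$ can lie in at most one branch set, and even if it lies in a branch set it destroys at most a constant number of grid cells around it; more robustly, deleting $|X|$ vertices from a $g \times g$ grid minor leaves a grid minor of size at least roughly $(g - c\sqrt{|X|}\,) \times (g - c\sqrt{|X|}\,)$ for an absolute constant $c$ — this is the classical fact that to break all large sub-grids of a $g\times g$ grid one needs $\Omega((g/s)^2)$ vertices to eliminate every $s\times s$ subgrid, equivalently $|X|$ deletions leave an $\Omega(g - \sqrt{|X|})$-sized grid. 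Hence $G \setminus X$ contains an $(g - c\sqrt{|X|}) \times (g - c\sqrt{|X|})$ grid minor, so $\tw(G \setminus X) \geq g - c\sqrt{|X|} - 1$. Combining with $\tw(G\setminus X)\le\eta$ gives $g \leq \eta + c\sqrt{|X|} + 1$, i.e. $\tw(G) \leq 100\Delta^3 (\eta + c\sqrt{|X|} + 1)$, which is of the form $\eta + \beta|X|^{1/2}$ after absorbing constants (taking $\beta$ large enough in terms of $\eta$ and $\Delta$ — e.g. $\beta = 100\Delta^3(c + \eta + 1)$ suffices, using that for $|X|\le 1$ the trivial bound $\tw(G)\le\eta+|X|$ already works, and for $|X|\ge 1$ we have $\eta+1\le(\eta+1)|X|^{1/2}$).

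The step I expect to require the most care is the ``grid survives vertex deletion'' estimate: making precise that deleting $|X|$ vertices from a graph containing a $g\times g$ grid minor still leaves a grid minor of side $\Omega(g - \sqrt{|X|})$. The clean way is to work in the grid itself: partition the $g \times g$ grid into roughly $(g/s)^2$ disjoint $s \times s$ subgrids with $s := \lceil \sqrt{|X| + 1}\rceil$; since the branch sets of the grid minor are disjoint, the $|X|$ deleted vertices hit the branch sets corresponding to at most $|X|$ grid nodes, hence can intersect at most $|X| < s^2$ of these subgrids, so at least one $s \times s$ subgrid is entirely untouched and survives as a grid minor in $G \setminus X$. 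This already gives $\tw(G\setminus X) \geq s - 1 \geq \sqrt{|X|} - 1$, but we want a grid of side comparable to $g$, not just $\sqrt{|X|}$. For that, instead use the sharper routing argument: one can carve out of the $g\times g$ grid a subdivided $(g' \times g')$ grid, $g' = \Theta(g/\sqrt{|X|+1})$ times... — actually the simplest self-contained route avoiding this subtlety is: it is a standard fact (appearing e.g. in kernelization work on bidimensional problems and in \cite{F.V.Fomin:2010oq}) that removing $t$ vertices from a graph decreases treewidth by at most $t$, and more relevantly that a graph with a $g \times g$ grid minor, after deleting $t$ vertices, still has treewidth $\geq g/\sqrt{t+1} - 1$ or even $\geq g - O(\sqrt{t})$ via the subgrid-packing argument applied with rectangular strips. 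I would cite or reproduce this lemma and plug $t = |X|$, $g = \Theta(\tw(G)/\Delta^3)$ to close the argument; the rest is bookkeeping of constants.

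Alternatively, and perhaps more cleanly, I would route through the intermediate planar graph $P_G$ directly: by Lemma~\ref{lem:twofg} it suffices to bound $\tw(P_G)$, and one could try to relate $\tw(P_G \setminus X')$ to $\tw(G \setminus X)$ where $X'$ is the (bounded, by Lemma~\ref{lem:pgisplanar}) blow-up of $X$ in $P_G$, then invoke that planar graphs have truly sublinear treewidth with $\lambda = 1/2$ (which itself follows from the planar excluded grid theorem \cite{RobertsonST94} plus the subgrid-packing argument above). Either way the mathematical content is identical and the single nontrivial ingredient is the ``a grid minor survives a small number of vertex deletions'' lemma; everything else is the already-proved linear grid theorem for unit disk graphs together with elementary estimates.
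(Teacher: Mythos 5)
Your overall route — invoke the linear excluded-grid theorem (Lemma~\ref{lem:udglingrid}), then argue that the grid minor cannot be destroyed by the $|X|$ deletions without $G\setminus X$ retaining a grid of treewidth $>\eta$ — is exactly the paper's strategy. However, the crucial ``a grid minor survives deletion'' estimate, which you yourself flag as the delicate step, is garbled in both of the forms you offer, and one of them is simply false.

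First, the claim that deleting $t$ vertices from a $g\times g$ grid minor leaves a grid minor of side $g-O(\sqrt t)$ (equivalently treewidth $\geq g-O(\sqrt t)$) does not hold: deleting every $k$-th row of the $g\times g$ grid costs $t\approx g^2/k$ deletions and leaves a graph of treewidth about $k = g^2/t$, which for, say, $t=g^{3/2}$ is $\sqrt g \ll g - O(g^{3/4})$. Only the $g/\sqrt{t+1}$-type bound is correct. Second, your subgrid-packing derivation is a non sequitur as written: with $s=\lceil\sqrt{|X|+1}\rceil$ you observe $|X|<s^2$, but what is needed for ``at least one $s\times s$ subgrid is untouched'' is that the \emph{number of disjoint subgrids}, $\lfloor g/s\rfloor^2$, exceeds $|X|$ — and with your choice of $s$ this requires $g\gtrsim|X|$, which is not available. (Indeed, were the conclusion $\tw(G\setminus X)\geq\sqrt{|X|}-1$ true unconditionally, it would yield a $\Delta$-independent $\beta$, which is impossible since unit disk graphs of unbounded degree contain large cliques.) The fix is to choose the subgrid side as a \emph{constant} depending on $\eta$, not on $|X|$: this is precisely what the paper does. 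Assuming for contradiction $\tw(G)\geq(\eta+1)200\Delta^3\sqrt{|X|}$, Lemma~\ref{lem:udglingrid} gives a $\bigl(2(\eta+1)\sqrt{|X|}\bigr)\times\bigl(2(\eta+1)\sqrt{|X|}\bigr)$ grid minor, which contains $4|X|$ vertex-disjoint $(\eta+1)\times(\eta+1)$ subgrids; each has treewidth $\eta+1>\eta$, so $X$ must meet the branch sets of every one of them, forcing $|X|\geq 4|X|$, a contradiction. If you prefer the direct (non-contradiction) phrasing you sketched, the correct lemma to cite and plug in is ``removing $t$ vertices from a graph with a $g\times g$ grid minor leaves treewidth at least $g/\lceil\sqrt{t+1}\rceil - O(1)$,'' and with that replacement your calculation does close the argument and recovers the paper's bound up to constants.
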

\begin{proof}
Let $G \in {\cal G}_U^\Delta$ have a vertex set $X$ such that $\tw(G \setminus X) \leq \eta$. Suppose for contradiction that $\tw(G) \geq (\eta+1)200\Delta^3\sqrt{|X|}$. Then, by Lemma ~\ref{lem:udglingrid}, $G$ contains a 
$(2(\eta+1)\sqrt{|X|}) \times (2(\eta+1)\sqrt{|X|})$ grid as a minor. This grid contains $4|X|$ vertex disjoint $(\eta+1) \times (\eta+1)$ grids. However, the set $X$ must intersect each of these grids, as $\tw(G \setminus X) \leq \eta$ and the treewidth of each grid is $\eta+1$. But then $|X| \geq 4|X|$, a contradiction.
\end{proof}

\begin{corollary} \label{cor:tstdiscgraphs} Let ${\cal G}_U^t$ be the class of unit disk graphs such that every graph $G$ in ${\cal G}_U^t$ is $K_t$ 
free. Then ${\cal G}_U^t$ has truly sublinear treewidth with $\lambda = \frac{1}{2}$.
\end{corollary}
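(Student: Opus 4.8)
The plan is to derive this directly from Observation~\ref{obs:timpliesd} and Theorem~\ref{thm:tstdiscgraphs}, using the fact that the property ``truly sublinear treewidth with parameter $\lambda$'' is monotone under passing to subclasses. First I would note that every $K_t$-free unit disk graph $G$ has bounded maximum degree: by Observation~\ref{obs:timpliesd}, $\Delta(G) \leq 6t$. (Here we use that a clique $K_t$ occurs as a subgraph of $G$ if and only if it occurs as an induced subgraph, so being $K_t$-free in the induced sense is the same as having no $K_t$ subgraph, which is what Observation~\ref{obs:timpliesd} needs.) Consequently ${\cal G}_U^t \subseteq {\cal G}_U^{6t}$, the class of unit disk graphs of maximum degree at most $6t$.

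Next I would invoke Theorem~\ref{thm:tstdiscgraphs} with $\Delta = 6t$: the class ${\cal G}_U^{6t}$ has truly sublinear treewidth with $\lambda = \tfrac{1}{2}$, i.e.\ for every $\eta > 0$ there is a $\beta = \beta(\eta, t) > 0$ so that for every $G \in {\cal G}_U^{6t}$ and every $X \subseteq V(G)$, the condition $\tw(G \setminus X) \leq \eta$ forces $\tw(G) \leq \eta + \beta |X|^{1/2}$. Since the definition of truly sublinear treewidth only quantifies over graphs in the class and imposes no requirement on $G \setminus X$, the very same $\beta$ witnesses the bound for every $G \in {\cal G}_U^t \subseteq {\cal G}_U^{6t}$. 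Hence ${\cal G}_U^t$ has truly sublinear treewidth with $\lambda = \tfrac{1}{2}$, as claimed.

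There is essentially no obstacle here: the statement is a routine corollary, and the only point that deserves an explicit word is the monotonicity observation, namely that ``truly sublinear treewidth with parameter $\lambda$'' is automatically inherited by every subclass because the definition is a universally quantified statement over $G \in {\cal G}$. All the substantive work has already been carried out in Lemma~\ref{lem:udglingrid} (the linear excluded-grid bound for bounded-degree unit disk graphs) and in the grid-packing argument of Theorem~\ref{thm:tstdiscgraphs}.
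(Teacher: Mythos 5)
Your proof is correct and matches the paper's intent exactly: the paper states that the corollary follows by combining Theorem~\ref{thm:tstdiscgraphs} with Observation~\ref{obs:timpliesd}, which is precisely your argument of embedding ${\cal G}_U^t$ into ${\cal G}_U^{6t}$ and using the monotonicity of the truly-sublinear-treewidth property under subclasses. Your parenthetical remark that $K_t$-freeness as an induced subgraph coincides with $K_t$-freeness as a subgraph is a nice explicit bridge between the paper's definition and the hypothesis of Observation~\ref{obs:timpliesd}.
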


\subsection{Structure of $K_t$-free Map Graphs.}
In this section we show that map graphs with bounded clique size have truly sublinear treewidth.
It is known that for every map graph $G$ 
one can associate a map ${\cal M} =(\mathscr{E} ,\omega )$ such that 
(i) no vertex in $\mathscr{E}$   is incident only to lakes;  
(ii)  there are no edges in $\mathscr{E}$ whose two incident faces are both lakes (possibly the same lake);
(iii) every vertex in $\mathscr{E}$ is incident to at most one lake, and incident to such a lake at most once ~\cite[p.~149]{DemaineHK09}. From now onwards we will assume that we are given map satisfying the above properties. For our proof we also need the following combinatorial lemma.


\begin{lemma}
\label{lem:mapboundeddegree}
Let $G$ be a map graph associated with $\cal M$ such that the maximum clique size in $G$ is at most $t$.
Then the maximum vertex degree of $\mathscr{E}$ is at most  $t+2$. 
\end{lemma}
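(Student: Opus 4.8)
The plan is to bound the degree of a vertex $u$ in $\mathscr{E}$ by relating the faces incident to $u$ to a clique in the map graph $G$. The key observation is that around a single vertex $u$ of the plane graph $\mathscr{E}$, the faces incident to $u$ appear in a cyclic order, and any two faces that are \emph{consecutive} in this cyclic order share the edge of $\mathscr{E}$ that separates them, hence are adjacent whenever both are nations. More importantly, since $u$ is a single point, \emph{all} faces incident to $u$ pairwise share the vertex $u$, so any two nations among them are adjacent in $G$. Thus the set of nations incident to $u$ forms a clique in $G$, which by hypothesis has size at most $t$.

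First I would set up notation: let $d = \deg_{\mathscr{E}}(u)$ be the number of edges of $\mathscr{E}$ incident to $u$. Since each connected component of $\mathscr{E}$ is biconnected, $u$ is not a cut vertex of its component and the edges at $u$ together with the faces at $u$ interleave cyclically; there are exactly $d$ faces incident to $u$ (counted with multiplicity in the rotation around $u$), appearing as $f_1, f_2, \ldots, f_d$ in cyclic order, with $f_i$ and $f_{i+1}$ separated by the $i$-th edge at $u$. Now I invoke the three normalization properties (i)--(iii) assumed from \cite{DemaineHK09}: property (iii) says $u$ is incident to at most one lake and to it at most once, so among $f_1,\ldots,f_d$ at most one is a lake; property (ii) rules out two consecutive lakes anyway. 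Hence at least $d-1$ of the faces $f_1,\ldots,f_d$ are nations. These nations all contain the point $u$, so they are pairwise adjacent in $G$ and form a clique of size at least $d-1$. Since the maximum clique size is at most $t$, we get $d - 1 \leq t$, i.e.\ $d \leq t+1$.

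To reach the claimed bound $t+2$ (rather than $t+1$) one must be careful about the case where distinct faces in the cyclic list $f_1,\ldots,f_d$ coincide as faces of $\mathscr{E}$ — i.e.\ the same nation touches $u$ more than once. Property (iii) only controls repetitions of the \emph{lake}, not of nations, so a nation could in principle appear twice in the rotation around $u$; then the count of \emph{distinct} nations could be as low as $\lceil (d-1)/2 \rceil$, which would only give $d \leq 2t+1$. I would handle this by a sharper local argument: if a nation $f$ appears at two non-consecutive positions in the rotation around $u$, then $\mathscr{E}$ restricted to a small disk around $u$ has $f$ using two separate "wedges" at $u$, and the biconnectivity of the component of $\mathscr{E}$ together with the plane structure forces the other wedges to be "cut off", contradicting that each component of $\mathscr{E}$ is biconnected (the face $f$ would disconnect the link of $u$). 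Thus each nation, and also the lake, touches $u$ in a single wedge, so the $d$ faces $f_1,\ldots,f_d$ are genuinely distinct, at most one is a lake, and the remaining $\geq d-1$ distinct nations form a clique; combined with a possible off-by-one from how one counts the edge–face incidences at a degree-$d$ vertex lying on the outer face or on the boundary of the lake, one arrives at $d \leq t+2$.

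\textbf{The main obstacle} I expect is precisely this multiplicity issue — ruling out (or correctly accounting for) a nation that wraps around $u$ several times, and pinning down the exact constant. The clean conceptual content ("nations at a common point form a clique") is immediate; the work is in the careful plane-topology bookkeeping at $u$, using biconnectivity of the components of $\mathscr{E}$ and normalization properties (i)--(iii) to guarantee that the cyclic list of faces around $u$ has no repetitions and at most one lake, so that the clique bound translates directly into the degree bound $\deg_{\mathscr{E}}(u) \leq t+2$.
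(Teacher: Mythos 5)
Your proposal is correct and takes essentially the same route as the paper: the nations incident to a common vertex $u$ of $\mathscr{E}$ pairwise share $u$ and hence form a clique in $G$, biconnectivity of the components of $\mathscr{E}$ guarantees that the faces in the rotation around $u$ are distinct (each face boundary is a simple cycle, so no face can visit $u$ twice), and the normalization property (iii) caps the number of incident lakes at one. Note that your clean count actually yields the slightly sharper $\deg_{\mathscr{E}}(u) \leq t+1$; the closing sentence about an "off-by-one" needed to reach $t+2$ is unnecessary, since $t+1 \leq t+2$ already gives the lemma as stated (the paper's own proof is similarly loose, dropping one from the degree before counting faces without explanation).
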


\begin{proof}
Targeting towards a contradiction, let us assume that there is a vertex $v\in V(\mathscr{E})$ of degree at least 
$t+3$. By definition,  each connected component of $\mathscr{E}$ is 
biconnected and hence there are at least $t+2$ cyclic faces adjacent to $v$. By the properties of $\cal M$, we have that all, except maybe one,   adjacent faces are not  lakes. However the vertices corresponding to nation faces  form a clique of size $t+1$ in $G$, a contradiction.    \end{proof}

For our proof we also need the notions of radial   and dual of map graphs. 
The \emph{radial graph} $R = R({\cal M})$ has a vertex for every vertex of $\mathscr{E}$ and for every nation of $\mathscr{E}$, and $R$ 
is a bipartite graph with bipartition $V(\mathscr{E})$ and $N(\mathscr{E})$. Two vertices $v \in V(\mathscr{E})$ and 
$f\in N(\mathscr{E})$ are adjacent in $R$ if $v$ is incident to nation $f$.  
The  \emph{dual} $D = D({\cal M})$ of $\cal M$ has vertices corresponding only to the nations of $\mathscr{E}$.  The graph $D$ 
has a vertex for every nation of $\mathscr{E}$, and two vertices are adjacent in $D$ if 
the corresponding nations of $G$ share an edge. We now show a linear excluded grid theorem for map graphs with bounded maximum clique. 

\begin{lemma}
\label{lem:maplingrid}
There exists a constant $\rho$ such that any map graph $G$ with maximum clique size $t$ contains a 
$\frac{\rho \cdot \tw(G)}{t} \times \frac{\rho \cdot \tw(G)}{t}$ grid minor.
\end{lemma}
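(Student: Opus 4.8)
The plan is to follow the same template as the unit disk case: attach a planar auxiliary graph to $G$, apply the planar excluded-grid theorem to it, and transfer the resulting grid minor back into $G$. The natural auxiliary graph here is the radial graph $R=R({\cal M})$, which is planar by construction and bipartite with sides $V(\mathscr E)$ and $N(\mathscr E)$. Two elementary facts drive everything. First, any two nations incident to a common vertex $v\in V(\mathscr E)$ share $v$, hence are adjacent in $G$; so the nations incident to $v$ form a clique of $G$ and there are at most $t$ of them, i.e. $\deg_R(v)\le t$ for every $v\in V(\mathscr E)$ (alternatively this follows from Lemma~\ref{lem:mapboundeddegree}). Second, if a nation $f$ and a nation $g$ have a common neighbour $v$ in $R$, then $f$ and $g$ are adjacent in $G$.

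First I would prove $\tw(G)\le t\,(\tw(R)+1)-1$. From a width-$\tw(R)$ tree decomposition $(\mathcal X',T)$ of $R$, build bags $X_i$ by keeping the nations of $X_i'$ and replacing each $v\in X_i'\cap V(\mathscr E)$ by the at most $t$ nations incident to $v$; then $|X_i|\le t\,|X_i'|$. Every edge $ff'$ of $G$ is covered because $f,f'$ are both incident to some $v\in V(\mathscr E)$, which lies in some bag of $R$; and for a fixed nation $f$ the bags of the new decomposition containing $f$ form a connected subtree, being the union of the connected subtree $\{i:f\in X_i'\}$ with the connected subtrees $\{i:v\in X_i'\}$ over the $v$ incident to $f$, each of which meets it (since $vf\in E(R)$). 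Hence $\tw(R)\ge \tw(G)/t-1$.

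Since $R$ is planar, the excluded-grid theorem for planar graphs~\cite{RobertsonST94} yields a minor model $\{S[i,j]:1\le i,j\le k\}$ of the $k\times k$ grid in $R$ with $k=\lfloor \tw(R)/6\rfloor$. The core step is lifting this to $G$. The key observation is: if $Y\subseteq V(R)$ is connected in $R$ and meets $N(\mathscr E)$, then $G[Y\cap N(\mathscr E)]$ is connected, because any path of $R[Y]$ between two nations alternates between the two sides of the bipartition, so consecutive nations on it share the intervening $\mathscr E$-vertex and are adjacent in $G$. One cannot simply put $S'[i,j]=S[i,j]\cap N(\mathscr E)$, since a branch set may be a single vertex of $V(\mathscr E)$ and then project to $\emptyset$, and deleting the corresponding cells — an independent set in the grid — can destroy all treewidth. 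Instead I would coarsen: group the grid into $2\times2$ blocks, let $\hat S[a,b]$ be the union of the four branch sets in block $(a,b)$, and set $\hat S'[a,b]=\hat S[a,b]\cap N(\mathscr E)$. Each $\hat S[a,b]$ is connected in $R$ and contains two grid-adjacent branch sets joined by an edge of $R$, whose nation endpoint shows $\hat S[a,b]$ meets $N(\mathscr E)$; thus $\hat S'[a,b]$ is non-empty and connected in $G$, and the $\hat S'[a,b]$ are pairwise disjoint. For adjacency of blocks $(a,b)$ and $(a+1,b)$, a crossing edge of $R$ has a nation endpoint $x$ in one block and an $\mathscr E$-vertex endpoint $y$ in the other; within $y$'s block $y$ has a neighbour $x'\in N(\mathscr E)$, and $x,x'$ share $y$, so $xx'\in E(G)$. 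Hence $\{\hat S'[a,b]\}$ is a minor model of the $\lfloor k/2\rfloor\times\lfloor k/2\rfloor$ grid in $G$. Chaining the estimates, $G$ contains a grid minor of side $\Omega(k)=\Omega(\tw(R))=\Omega(\tw(G)/t)$, i.e. of side at least $\rho\,\tw(G)/t$ for a suitable absolute constant $\rho$ (for $\tw(G)$ below a constant the statement is trivial).

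The main obstacle is the lifting step: recognising that projecting onto the nation side preserves connectivity thanks to the bipartiteness of $R$, and realising that one must first coarsen the grid because pure $V(\mathscr E)$-singleton branch sets project to nothing and cannot be discarded naively. The treewidth comparison and the degree/clique bound are routine.
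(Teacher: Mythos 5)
Your proof is correct, but it takes a genuinely different route from the paper's. The paper avoids the lifting step entirely by passing through the \emph{dual} graph $D=D({\cal M})$ rather than staying in the radial graph $R$: it chains two cited lemmas from~\cite{DemaineHK09}, namely $\tw(G)\le (t+2)(\tw(R)+1)$ (their Lemma~4) and $\tw(R)=O(\tw(D))$ (their Lemma~3), to obtain $\tw(G)=O(t\cdot\tw(D))$, and then observes that $D$ is a \emph{planar subgraph of} $G$ (nations sharing an edge of $\mathscr E$ certainly share a vertex). Hence the $\Omega(\tw(D))\times\Omega(\tw(D))$ grid minor guaranteed in $D$ by~\cite{RobertsonST94} sits inside $G$ directly, and no transfer argument is needed. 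You instead stop at $R$, reprove the $\tw(G)\le t(\tw(R)+1)-1$ bound from scratch (a clean, self-contained version of the cited Lemma~4, with a marginally sharper constant since $\deg_R(v)\le t$ rather than $t+2$), and then do the extra work of projecting the grid model from $R$ onto the nation side of the bipartition. Your key observation --- that a connected subset of the bipartite $R$ meeting $N(\mathscr E)$ projects to a connected subset of $G$, because nations at distance two in $R$ share an $\mathscr E$-vertex and are therefore adjacent in $G$ --- is sound, and your coarsening into $2\times 2$ blocks correctly dodges the pitfall of branch sets that live entirely in $V(\mathscr E)$ and would project to the empty set. The trade-off is: the paper's route is shorter and leans on two black-box lemmas plus the slick subgraph observation $D\subseteq G$, whereas your route is longer but self-contained and makes the combinatorial mechanism (how the planar structure of $R$ forces a grid in $G$) fully explicit. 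Both yield the same conclusion with a different absolute constant $\rho$.
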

 \begin{proof}
 Let $\cal M$ be the map such that the graph associated with it is $G$. 
We now apply the result from~\cite[Lemma 4]{DemaineHK09} that states that the treewidth of the map graph $G$ is at most the product of the maximum vertex degree in $\mathscr{E}$ and $\tw(R)+ 1$. 
By Lemma~\ref{lem:mapboundeddegree}, we know that the maximum vertex degree of $\mathscr{E}$ is at most $t+2$, and hence 
$\tw(G)\leq (t+2)\cdot (\tw(R)+ 1)$. We now apply~\cite[Lemma 3]{DemaineHK09} which bounds the treewidth of a radial graph of a map. 
In particular, by~\cite[Lemma 3]{DemaineHK09} we have that for $R$, the radial graph of $\cal M$, $\tw(R)=O(\tw(D))$. This implies that 
$\tw(G)=O(t\cdot \tw(D))$. Observe that the graph $D$, the dual of $\cal M$, is a planar subgraph of $G$. By a result of Robertson 
et al.~\cite{RobertsonST94}, we have that for every planar graph $H$ there exists a constant $d$ such that it has 
$d\cdot \tw(H)\times d\cdot \tw(H)$ grid graph as a minor. This implies that there exists a constant $d$ such that $D$ has  
$d \cdot \tw(D)\times d \cdot \tw(D)$ grid graph as a minor. This combined with facts that $\tw(G)=O(t\cdot \tw(D))$ and $D$ is a subgraph of $G$ 
implies that there exists a constant $\rho$ such that $G$ has $\frac{\rho\cdot \tw(D)}{t}\times \frac{\rho \cdot \tw(G)}{t}$ grid graph as a minor. 
 \end{proof}

\begin{theorem}
\label{thm:tstmapgraphs}
Let ${\cal G}_M^t$ be the class of map graphs such that the every 
graph $G\in {\cal G}_M^t$ is $K_t$-free. Then ${\cal G}_M^t$ has truly sublinear treewidth with 
$\lambda = \frac{1}{2}$.
\end{theorem}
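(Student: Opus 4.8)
The plan is to transfer the proof of Theorem~\ref{thm:tstdiscgraphs} essentially verbatim, substituting the map-graph excluded-grid bound of Lemma~\ref{lem:maplingrid} for the unit-disk bound of Lemma~\ref{lem:udglingrid}. Concretely, I would fix $\eta>0$, let $\rho$ be the constant supplied by Lemma~\ref{lem:maplingrid}, and set $\beta$ to be a suitable constant multiple of $t(\eta+1)/\rho$ (for instance $\beta = 4t(\eta+1)/\rho$; the precise value is immaterial). Given any $G\in{\cal G}_M^t$ and any $X\subseteq V(G)$ with $\tw(G\setminus X)\le\eta$, the goal is to show $\tw(G)\le\eta+\beta\sqrt{|X|}$, which is exactly the truly-sublinear-treewidth condition with $\lambda=\tfrac12$. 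If $X=\emptyset$ this is immediate since then $\tw(G)=\tw(G\setminus X)\le\eta$, so I would assume $|X|\ge 1$.

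The heart of the argument is the same packing-of-subgrids contradiction used for unit disk graphs. Suppose for contradiction that $\tw(G)>\eta+\beta\sqrt{|X|}$. Since $G$ is $K_t$-free its maximum clique has size at most $t$, so Lemma~\ref{lem:maplingrid} produces a grid minor of side $\rho\,\tw(G)/t$, which by the choice of $\beta$ exceeds $2(\eta+1)\sqrt{|X|}$. Such a grid can be tiled by strictly more than $|X|$ pairwise vertex-disjoint $(\eta+1)\times(\eta+1)$ subgrids (partition the rows and the columns into blocks of $\eta+1$; the number of full blocks is $\lfloor 2\sqrt{|X|}\rfloor^2>|X|$ for every integer $|X|\ge 1$). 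Restricting the branch sets of the big grid minor to each block exhibits each subgrid as a minor of $G$, realised by pairwise disjoint vertex subsets of $V(G)$; hence $X$ can meet at most $|X|$ of them, so at least one subgrid is realised entirely inside $G\setminus X$. That gives an $(\eta+1)\times(\eta+1)$ grid minor in $G\setminus X$, whose treewidth is $\eta+1$; by minor-monotonicity of treewidth this forces $\tw(G\setminus X)\ge\eta+1>\eta$, contradicting the hypothesis. Hence $\tw(G)\le\eta+\beta\sqrt{|X|}$, as required.

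I do not expect any genuine obstacle here: Lemma~\ref{lem:maplingrid} already supplies the only nontrivial ingredient — a linear relation between treewidth and the largest grid minor — and the rest is the identical bookkeeping as in Theorem~\ref{thm:tstdiscgraphs}. The two points I would take care with are (i) observing that ``$K_t$-free'' yields clique size at most $t$ so that Lemma~\ref{lem:maplingrid} is applicable, and (ii) the constant bookkeeping that turns the linear grid bound into the $\sqrt{|X|}$ gap, including the easy verification that a grid of side $2(\eta+1)\sqrt{|X|}$ accommodates more than $|X|$ disjoint $(\eta+1)$-subgrids. I would also note that one does not even need the (true) fact that $G\setminus X$ is again a map graph — only minor-monotonicity of treewidth together with the given bound $\tw(G\setminus X)\le\eta$.
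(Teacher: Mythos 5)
Your proof is correct and matches the paper's argument essentially verbatim: both invoke Lemma~\ref{lem:maplingrid} to extract a grid minor of side $\Theta(\tw(G)/t)$ from the $K_t$-free map graph, tile that grid with more than $|X|$ disjoint $(\eta+1)\times(\eta+1)$ subgrids, and derive a contradiction with $\tw(G\setminus X)\le\eta$. You are somewhat more careful with the constants (the paper's stated threshold $\tw(G)\ge(\eta+1)2\rho t\sqrt{|X|}$ has the factor $\rho$ in the wrong place relative to Lemma~\ref{lem:maplingrid}, where $t/\rho$ is what is needed, as you correctly use), but these are cosmetic differences in bookkeeping, not in method.
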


\begin{proof}
Let $G \in {\cal G}_M^t$ have a vertex set $X$ such that $\tw(G \setminus X) \leq \eta$. Suppose for contradiction that 
$\tw(G) \geq (\eta+1)2 \rho t \sqrt{|X|}$. Then, by Lemma ~\ref{lem:maplingrid}, $G$ contains a 
$(2(\eta+1)\sqrt{|X|}) \times (2(\eta+1)\sqrt{|X|})$ grid as a minor. This grid contains $4|X|$ vertex disjoint $(\eta+1) \times (\eta+1)$ grids. However, the set $X$ must intersect each of these grids, as $\tw(G \setminus X) \leq \eta$ and the treewidth of each grid is $\eta+1$. But then $|X| \geq 4|X|$, a contradiction.
\end{proof}


\subsection{$K_4$-Free Disc Graphs.}
Our result in Section~\ref{sec:udgoofbd} can easily be generalized to disk graphs of bounded degree. 
There is another widely used concept of {\emph{ply}} related to geometric graphs which has turned out 
be very useful algorithmically \cite{Leeuwen:2009ec}.
 An intersection graph $G$  generated by set of disks ${\cal B}=\{B_1,\ldots,B_n\}$  
(not necessarily unit disks) is said to have ply $\ell$ if every point in the plane is contained inside at most $\ell$ 
disks in ${\cal B}$. 
Observe that if a unit disk graph has bounded ply then it also has bounded vertex degree but this is not 
true for disk graphs. Here, we show that already disk graphs with ply $3$ and $K_4$-free disk graphs do not have truly sublinear treewidth. 

\begin{theorem}
\label{thm:k4freediscgraphs}
$K_4$-free disk graphs and disk graphs graphs with ply $3$ do not have truly sublinear treewidth. 
\end{theorem}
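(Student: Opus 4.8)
The plan is to construct, for arbitrarily large $n$, a family of $K_4$-free disk graphs $G_n$ (in fact of ply $3$) together with a vertex set $X_n$ such that $\tw(G_n \setminus X_n)$ is bounded by an absolute constant while $\tw(G_n)$ grows linearly in $|X_n|$. Since truly sublinear treewidth with parameter $\lambda<1$ would force $\tw(G_n) \le \eta + \beta|X_n|^\lambda = O(|X_n|^\lambda)$, a linear lower bound on $\tw(G_n)$ in terms of $|X_n|$ rules this out for every $\lambda<1$ simultaneously.

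First I would describe the gadget. Take the $n \times n$ grid and realize it as a disk graph in the natural way: place a small disk at each grid vertex, and along each grid edge insert one ``connector'' disk that overlaps exactly the two small disks at its endpoints and nothing else. This subdivided grid is planar, has ply $3$ at worst (a point lies in at most one endpoint disk plus the connectors of incident edges, and with care in the drawing at most three disks cover any point), and is trivially $K_4$-free since it is a subdivision of a grid, hence triangle-free. Its treewidth is $\Theta(n)$ because contracting each connector disk back into one of its neighbours recovers the $n\times n$ grid as a minor, and the $n\times n$ grid has treewidth $n$. Now I would let $X_n$ be the set of all connector disks together with, say, one full row of the grid — more precisely, choose $X_n$ so that $G_n \setminus X_n$ has bounded treewidth: removing all connector disks already disconnects the grid vertices into isolated points, so $\tw(G_n \setminus X_n)$ with $X_n = \{\text{connector disks}\}$ is $0$. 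But $|X_n| = \Theta(n^2)$, which only gives $\tw(G_n) = \Theta(\sqrt{|X_n|})$, not a contradiction to $\lambda = 1/2$. So the gadget must be refined.

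The fix — and this is the main obstacle — is to make $|X_n|$ genuinely linear in $n$ while still killing the treewidth of the complement, exploiting that disk graphs, unlike planar graphs, can encode long connections through few vertices by using disks of wildly different radii. The idea is to build a ``path-like'' structure: take $\Theta(n)$ gadget pieces each of bounded size but arranged so that a single additional ``hub'' disk, or a bounded number of them per piece, when present creates a large-treewidth pattern (e.g. a wall or grid minor realized via nested disks), and when deleted leaves a forest. Concretely I would aim for a construction where $X_n$ is a set of $\Theta(n)$ special disks such that $G_n$ contains an $\Omega(n)\times\Omega(n)$ grid minor (giving $\tw(G_n) = \Omega(n) = \Omega(|X_n|)$) but $G_n \setminus X_n$ is a disjoint union of bounded-size components. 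Ensuring simultaneously that the realization uses only disks (closed disks in $\mathbb{R}^2$), stays $K_4$-free, and keeps ply $3$ is the delicate part: one must check no point of the plane is covered four deep, and no four disks pairwise intersect. This is exactly where disk graphs diverge from planar graphs — the "shortcut" disks that produce the linear-in-$n$ bound on $|X_n|$ are precisely the ones that cannot be drawn in a planar or bounded-degree fashion.

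Finally I would record the two-part conclusion. For the $K_4$-free claim, verify triangle-freeness or at least $K_4$-freeness of $G_n$ directly from the construction. For the ply-$3$ claim, note that the same (or a slightly adjusted) construction can be drawn with every point covered at most three times; since ply $\ell$ disk graphs are $K_{\ell+1}$-free and the ply-$3$ family we build is not $K_4$-free a priori we keep the two statements separate but derived from essentially the same picture. Either way, with $|X_n| = \Theta(n)$ and $\tw(G_n) = \Omega(n)$ but $\tw(G_n\setminus X_n) = O(1)$, the definition of truly sublinear treewidth is violated for every $0<\lambda<1$, proving the theorem. The heart of the argument is purely the geometric construction; once it is in hand, the treewidth bounds follow from the excluded-grid theorem for planar graphs (for the lower bound on $\tw(G_n)$, via the grid minor) and from an explicit bounded-width tree decomposition of the components of $G_n \setminus X_n$.
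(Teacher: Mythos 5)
You correctly identify the target shape of the counterexample (a family with $|X_n|=\Theta(n)$, $\tw(G_n\setminus X_n)=O(1)$ and $\tw(G_n)=\Omega(|X_n|)$, which kills every $\lambda<1$ at once), you correctly reject the naive subdivided grid, and you correctly locate the key freedom that disk graphs have over planar graphs: disks of wildly different radii can act as hubs. But then the proposal stops at exactly the point where the content is: you write that producing the $\Theta(n)$ hub disks while keeping ply $3$ and $K_4$-freeness ``is the delicate part'' and that you ``would aim for'' a construction, without giving one. That is the theorem. The paper's construction is in fact considerably simpler than what you gesture at: take $t$ horizontal rows, each a long path of overlapping small disks (radius $0.99$, centres $1.25$ apart, rows $2$ apart so rows never touch), and then $t$ pairwise-disjoint huge disks (radius $\approx t$) whose centres sit between the rows so that each huge disk meets at least one small disk in every row. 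Here $X$ is just the $t$ huge disks, so $|X|=t$ and $G\setminus X$ is a disjoint union of paths with treewidth $1$. The small disks never stack more than two deep and a huge disk adds at most one more, giving ply $3$ and, by direct inspection, $K_4$-freeness.

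Two smaller points. First, for the lower bound you reach for an $\Omega(n)\times\Omega(n)$ grid minor, which is both harder to exhibit in this geometry and unnecessary; the paper instead takes the bramble whose $i$-th element is row $i$ together with the $i$-th huge disk. These $t$ sets are connected, pairwise disjoint, and pairwise joined by an edge (every huge disk meets every row), so it is a bramble of order $t$, hence $\tw(G_t)\ge t-1=|X|-1$. (Equivalently, the same $t$ sets are branch sets of a $K_t$ minor.) Second, your parenthetical claim that ply-$\ell$ disk graphs are $K_{\ell+1}$-free is false in general: Helly's theorem in the plane requires every \emph{triple} to have a common point, and pairwise intersection of $\ell+1$ disks does not give that. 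You hedge correctly afterwards, but the correct reading is that $K_4$-freeness must be checked directly on the specific construction, which the paper does. In short: the framing is right, but the construction and the treewidth certificate --- the two things the proof actually consists of --- are missing.
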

For the proof  of Theorem~\ref{thm:k4freediscgraphs},  we need the concept of a bramble.  
A {\em bramble} in a graph $G$ is a
family of connected subgraphs of $G$ such that any two of these subgraphs have a nonempty intersection or
are joined by an edge. The {\em order} of a bramble is the minimum number of vertices required to hit  all subgraphs
in the bramble. Seymour and Thomas~\cite{SeymourT89} proved that a graph has treewidth $k$ 
if and only if the maximum order of a bramble of $G$ is $k+1$. Thus a bramble of order 
$k+1$ is a witness that the graph has treewidth at least $k$. We will use this characterization to get a lower bound on the treewidth of the
graph we  construct. 

\begin{proof}
 We define a family $\cal F$ of disk graphs of ply $3$ such that for every $G\in \cal F$ we can find a set $X\subseteq V(G)$ such that 
$\tw(G\setminus X)\leq 1$ while $\tw(G)\geq |X|-1$. Given a natural number $t\geq 2$, our graph $G_t$ is defined as follows. We give the coordinates for centers of these disks. 
\begin{itemize}
\setlength{\itemsep}{-2pt}
\item We  have ``small"   disks of radius $0.99$ centered at $(1.25p,2q)$ for $0\leq p \leq 3t^2$ and $0 \leq q \leq t-1$. 
\item  We have ``large"  disks with radius $t-0.01$ centered at $((2p+1)t,t)$, $0\leq p\leq t-1$.
\end{itemize}
\begin{figure}
  \begin{center} 
    \includegraphics[scale=.2]{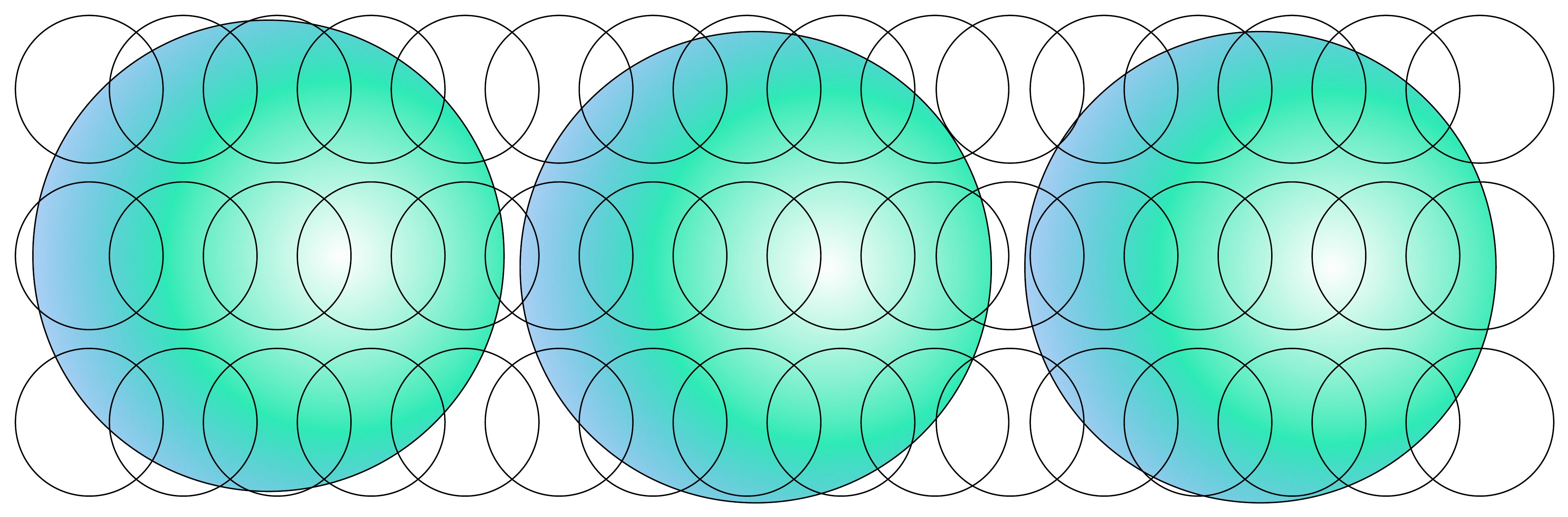}
   \end{center}
   \caption{{{Family of disks used to construct graph $G_t$ for $t=3$}}}
\label{ply3}
 \end{figure}
Intuitively, we have small disks stacked in $t$ rows, where in each row two consecutive disks intersect. Large
disks   intersect some unit disks in each row and they are pairwise disjoint among themselves. 
See Figure~\ref{ply3}, for an example of our construction. 
Let $G_t$ be 
the disk graph obtained from the intersection of disks placed as above. Observe that every point in the plane only occurs in at most 
$3$ disks and hence the ply of the graph is $3$.  Furthermore, since at most $3$ disks mutually intersect we have that $G$ is also $K_4$-free.  
Let $A$ be the set of vertices corresponding to small
disks in rows and $X$ be the set of remaining vertices. Observe that the graph induced by $A$ is a set of 
vertex disjoint paths and hence $\tw(G_t[A]) = \tw(G_t \setminus X) = 1$. 
We show that the treewidth of $G_t$ is  at least $t-1$ by exhibiting a  bramble of order $t$. Let us take the following set $S_i$, $0\leq i \leq t-1$. The set $S_i$ consists of vertices corresponding to small disks centered at $(1.25p,2i)$, where $0\leq p \leq 3t^2$ and a vertex corresponding to large disk with radius $t-0.01$ centered at $((2i+1)t,t)$. Since the disk with radius $t-0.01$ intersects at least one small disk in each row we have that the sets $S_i$ mutually intersect. Furthermore $S_i\cap S_j=\emptyset$ for all $i\neq j$. This implies that the smallest number of vertices required to cover all $S_i$ is at least $t$. This implies that $\tw(G)\geq t-1=|X|-1$.  
\end{proof}

\section{Applications}

In this section we show that every reducible minor-bidimensional problem with the separation property has EPTAS on 
unit disk graphs and map graphs with bounded maximum clique. Finally, we will show how we can 
extend these results to unit disk graphs and map graphs for variety of problems.  We recall that a PTAS is an algorithm which takes an instance $I$ of an optimization problem and a parameter $\epsilon>0$ and, runs in time $n^{\cO(f(1/\epsilon))}$, produces a solution that is within a factor $\epsilon$ of being optimal. A PTAS with running time $f(1/\epsilon)\cdot n^{\cO(1)}$, is called an efficient PTAS (EPTAS).

\paragraph{EPTAS on ${\cal G}_U^t$ and ${\cal G}_M^t$.}
Towards our goal we need the following result from~\cite{FLRSsoda2011}. 
\begin{proposition}[\cite{FLRSsoda2011} ]
\label{thm:eptastrans}
Let $\Pi$ be an $\eta$-transversable, reducible graph optimization problem. 
Then $\Pi$ has an EPTAS on every graph class ${\cal G}$ with truly  sublinear treewidth.
\end{proposition}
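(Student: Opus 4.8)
The plan is to run the standard bidimensionality pipeline: extract an approximate transversal, sparsify it by a lossy decomposition built on truly sublinear treewidth, reduce to a bounded-treewidth instance of the associated \pmm{} problem, solve that exactly, and lift the solution back. First I would run the $\eta$-transversability algorithm on the input graph $G$, obtaining in polynomial time a set $X\subseteq V(G)$ with $|X|=\cO(\pi(G))$ and $\tw(G\setminus X)\le\eta$; by the definition of truly sublinear treewidth this already yields $\tw(G)\le\eta+\beta|X|^{\lambda}=\cO(\pi(G)^{\lambda})$. (As a warm-up, feeding $(G,X)$ straight into the reduction guaranteed by reducibility produces a \pmm{} instance $G'$ with $\tw(G')\le f(\eta)=\cO(1)$ and $\pi'(G')=\pi(G)\pm\cO(|X|)$; solving $G'$ optimally and lifting back already gives an $\cO(1)$-approximation. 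The rest of the argument turns this into an EPTAS.)

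The heart of the proof is a \emph{lossy decomposition}: given $\epsilon>0$, produce in polynomial time a set $Y\subseteq V(G)$ with $|Y|\le\epsilon|X|=\cO(\epsilon\,\pi(G))$ and $\tw(G\setminus Y)\le\gamma$ for some $\gamma=\gamma(\epsilon,\eta)$ that does not depend on $G$. I would construct $Y$ by a recursive balanced-separator argument. Maintain the invariant that the current graph is a disjoint union of components each containing at most $M_i$ vertices of $X$ (with $M_0=|X|$); since each such component $C$ belongs to the hereditary class ${\cal G}$ and satisfies $\tw(C\setminus X)\le\eta$, truly sublinear treewidth gives $\tw(C)\le\eta+\beta M_i^{\lambda}$. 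From a tree decomposition of width $\eta+\beta M_{i-1}^{\lambda}$ of such a component, the standard weighted balanced-separator argument for bounded-treewidth graphs extracts a separator of size $\cO\!\big((\eta+\beta M_{i-1}^{\lambda})\cdot|V(C)\cap X|/M_i\big)$ whose deletion breaks $C$ into pieces with fewer than $M_i$ vertices of $X$ each; summing over all components and using $\sum_C|V(C)\cap X|\le|X|$, round $i$ removes $\cO\!\big((\eta+\beta M_{i-1}^{\lambda})\,|X|/M_i\big)$ vertices in total. Taking $M_i\approx M_{i-1}^{\lambda}/\delta_i$ with a schedule $\sum_i\delta_i\le\epsilon$ keeps the cumulative size of all removed separators below $\epsilon|X|$, and since $\lambda<1$ the exponent of $|X|$ in $M_i$ collapses like $\lambda^{i}$, so after $\cO(\log\log|X|)$ rounds $M_i$ is bounded in terms of $\epsilon$ and $\eta$ only; then the residual treewidth is the desired $\gamma(\epsilon,\eta)$, and $Y$ is the union of all separators removed. (This is essentially the decomposition of~\cite{FLRSsoda2011}.)

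With $Y$ in hand I would finish as follows. Apply the reduction of reducibility to $(G,Y)$ to obtain a \pmm{} instance $G'$ of the associated problem $\Pi'$ with $\pi'(G')=\pi(G)\pm\cO(|Y|)=\pi(G)(1\pm\cO(\epsilon))$ and $\tw(G')\le f(\tw(G\setminus Y))\le f(\gamma(\epsilon,\eta))=\cO_{\epsilon,\eta}(1)$. Since $\Pi'$ is a \pmm{} problem and $G'$ has bounded treewidth, compute an optimal $S'$ (with $P_{\Pi'}(G',S')$ true and $|S'|=\pi'(G')$) in time $h(\tw(G'))\cdot|V(G')|^{\cO(1)}$ via the Arnborg--Lagergren--Seese extension of Courcelle's theorem~\cite{ArnborgLS91}. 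Finally feed $(G,Y,G',S')$ into the lifting algorithm of reducibility, obtaining a vertex (edge) set $S$ with $\phi_\Pi(G,S)=\textbf{true}$ and $\kappa_\Pi(G,S)=|S'|\pm\cO(|Y|)=\pi(G)(1\pm\cO(\epsilon))$. Thus $S$ is a feasible solution within a factor $1\pm\cO(\epsilon)$ of optimum; rescaling $\epsilon$ by the hidden constant gives a genuine $(1+\epsilon)$-approximation, and the running time is $f'(1/\epsilon)\cdot n^{\cO(1)}$, i.e.\ an EPTAS.

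The main obstacle is the lossy decomposition. Because iterating truly sublinear treewidth only replaces an exponent $c$ by $\lambda c$, one needs roughly $\log\log\pi(G)$ rounds to bring a treewidth that is polynomial in $\pi(G)$ down to a bound free of $\pi(G)$; hence the separator budget $\epsilon$ must be spread over a non-constant number of rounds while the final $M_i$ --- and therefore the residual treewidth $\gamma(\epsilon,\eta)$, and hence $f(\gamma(\epsilon,\eta))$ --- must still be independent of $n$. Balancing the schedule $\{\delta_i\}$ against the shrinkage of $\{M_i\}$ so that these two requirements hold simultaneously, and arguing that the separators really can be taken inside bounded-treewidth pieces so that truly sublinear treewidth applies at every level, is the delicate point; the rest is bookkeeping enabled by the reduction and lifting algorithms of reducibility and by Courcelle's theorem on graphs of bounded treewidth.
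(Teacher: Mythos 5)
Your high-level pipeline matches the approach of~\cite{FLRSsoda2011} as used in this paper: run the $\eta$-transversal algorithm to get $X$ with $|X|=\cO(\pi(G))$, thin it to a set $Y$ of size $\leq\epsilon|X|$ whose removal leaves bounded treewidth, feed $(G,Y)$ to the reduction of reducibility, solve the resulting bounded-treewidth \pmm{} instance exactly via Borie--Parker--Tovey/Courcelle, and lift back. The crux, as you correctly identify, is the lossy decomposition, and here your construction diverges from the one in the source and has a real gap. You propose a level-by-level scheme, recomputing the treewidth bound only once per round and setting $M_i\approx M_{i-1}^{\lambda}/\delta_i$ with $\sum_i\delta_i\leq\epsilon$. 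With any natural schedule this does \emph{not} make the terminal $M_r$ independent of $n$: unrolling gives $M_r = M_0^{\lambda^r}\prod_{j=1}^{r}\delta_j^{-\lambda^{r-j}}$, and since you need $r=\Theta(\log\log|X|)$ rounds to kill the $M_0^{\lambda^r}$ factor, an equal split $\delta_j=\epsilon/r$ yields a factor $(r/\epsilon)^{1/(1-\lambda)}$ that still grows with $|X|$. A carefully tuned reverse-geometric schedule $\delta_j\propto\lambda^{r-j}$ does make the product bounded, but you neither give such a schedule nor show it works; your sketch only asserts the conclusion and then flags exactly this balancing as ``the delicate point'' without resolving it. The paper's decomposition (stated as Lemma~\ref{lem:theLemma} and proved in exactly the same style in Lemma~\ref{lem:thenewLemma}) sidesteps the scheduling problem entirely: it is a single divide-and-conquer recursion that recomputes the separator size $\cO(\eta+\beta k^{\lambda})$ at \emph{every} node (not once per round), stops at a fixed threshold $\gamma=\gamma(\epsilon,\eta,\lambda)$, and bounds the total separator cost by proving a recurrence of the form $T(k)\leq\epsilon k-\delta k^{\lambda}$ by induction, where the needed slack comes from the elementary fact that $\rho=\min_{1/3\leq\alpha\leq2/3}\bigl(\alpha^{\lambda}+(1-\alpha)^{\lambda}\bigr)>1$ for $\lambda<1$. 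That amortization buys a fixed $\gamma$ with no round bookkeeping at all. Replacing your round-based decomposition with this single-recurrence argument (or, equivalently, committing to and verifying the reverse-geometric schedule) would close the gap; the remainder of your proof --- the constant-factor warm-up, the use of reducibility for both reduction and lifting, and the call to the CMSO algorithm on bounded-treewidth graphs --- is exactly what the paper does, modulo rescaling $\epsilon$ by the hidden constants.
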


To use this result we only need to show that every reducible minor-bidimensional problem with the separation property is 
$\eta$-transversable for some $\eta$ on ${\cal G}_U^t$ and ${\cal G}_M^t$. For every fixed integer $\eta$ we define the {\sc $\eta$-Transversal} problem as follows. Input is a graph $G$, and the objective is to find a minimum cardinality vertex set $S \subseteq V(G)$ such that $\tw(G \setminus S) \leq \eta$. We now give a polynomial time constant factor approximation for the  {\sc $\eta$-Transversal} problem on ${\cal G}_U^t$ and ${\cal G}_M^t$. The 
proof of the following result is similar to the one in~\cite[Lemma 4.1]{FLRSsoda2011}.  We give the proof here for completeness. 
\begin{lemma}
\label{lem:transversapproxudcmap} 
For every integer $\eta$ there is a constant $c$ and a polynomial time $c$-approximation algorithm for the {\sc $\eta$-Transversal} problem on ${\cal G}_U^t$ and ${\cal G}_M^t$. 
\end{lemma}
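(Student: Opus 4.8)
The plan is to follow the proof of \cite[Lemma~4.1]{FLRSsoda2011}, substituting our linear excluded-grid theorems (Lemmas~\ref{lem:udglingrid} and~\ref{lem:maplingrid}) for the excluded-grid theorem of $H$-minor-free graphs used there; all hidden constants below are allowed to depend on $\eta$ and $t$. The first ingredient I would establish is that treewidth on ${\cal G}_U^t$ and ${\cal G}_M^t$ behaves like treewidth of a planar graph. For $G\in{\cal G}_U^t$, Observation~\ref{obs:timpliesd} gives $\Delta(G)\le 6t$, and Lemma~\ref{lem:twofg} together with the grid-minor lifting in the proof of Lemma~\ref{lem:udglingrid} gives $\tw(G)=\Theta(\tw(P_G))$ for the planar auxiliary graph $P_G$; for $G\in{\cal G}_M^t$ the dual $D=D({\cal M})$ is a planar subgraph of $G$ with $\tw(G)=\Theta(\tw(D))$ by the estimates in the proof of Lemma~\ref{lem:maplingrid}. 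Since treewidth is approximable within a constant factor in polynomial time on planar graphs, this yields, in polynomial time: a constant-factor estimate of $\tw(G)$; an explicit grid minor of $G$ of side $\Omega(\tw(G))$ (lifted from one in the planar auxiliary graph); and, whenever $\tw(G)$ is below a prescribed constant, an exact tree decomposition of that width. Fix a threshold $\beta=\beta(\eta,t)$.

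If $\tw(G)\le\beta$, I would solve {\sc $\eta$-Transversal} optimally: by the Robertson--Seymour excluded-minor characterization \cite{Diestel2005} the class of graphs of treewidth at most $\eta$ has a finite obstruction set, so ``$\tw(G\setminus S)\le\eta$'' is expressible in monadic second-order logic with a free vertex-set variable $S$, and the optimization version of Courcelle's theorem \cite{ArnborgLS91} then finds a minimum feasible $S$ in time $f(\beta)\cdot|V(G)|$ --- an optimal, hence certainly constant-factor, solution.

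If $\tw(G)>\beta$, I would iterate a reduction step on a current graph $H$ (initially $G$, shrunk by vertex deletions). While $\tw(H)=W>\beta$: the linear grid theorem yields a grid minor of $H$ of side $2(\eta+1)\lceil\sqrt s\rceil$ with $s=\Theta(W^2)$, which contains $4s$ pairwise vertex-disjoint $(\eta+1)\times(\eta+1)$ grid minors, each of treewidth $\eta+1>\eta$. Since $H$ is an induced subgraph of $G$ we have $\pi(G)\ge\pi(H)$, where $\pi$ denotes the optimum of {\sc $\eta$-Transversal}, and any feasible solution for $H$ must meet each of these $4s$ disjoint models; hence $\pi(G)\ge 4s=\Omega(W^2)$, i.e.\ $W=\mathcal{O}(\sqrt{\pi(G)})$. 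I would then delete a set $Y$ of $\mathcal{O}(W)$ vertices with $\tw(H\setminus Y)\le W/2$ and recurse on $H\setminus Y$. The treewidth at least halves each round, so after $\mathcal{O}(\log n)$ rounds it drops below $\beta$; the sizes of the deleted sets telescope to $\mathcal{O}(W)=\mathcal{O}(\sqrt{\pi(G)})=\mathcal{O}(\pi(G))$ (as $\tw(G)>\beta>\eta$ forces $\pi(G)\ge 1$), and one final call to the bounded-treewidth subroutine adds at most $\pi(G)$ further vertices. The union of all deleted sets is then a feasible solution of size $\mathcal{O}(\pi(G))$.

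The hard part is the reduction step: deleting only $\mathcal{O}(W)$ vertices to halve $\tw(H)$ without wasting deletions on regions of $H$ that already have bounded treewidth --- such regions may be arbitrarily large while contributing nothing to $\pi$, so a naive Baker-style layering of the whole graph can delete far more than $\pi(G)$ vertices. I would handle this by running a balanced-separator / layering argument on the planar auxiliary graph, but only inside a subgraph that still carries treewidth $\Omega(W)$. An alternative that may match the bookkeeping of \cite{FLRSsoda2011} more closely is to drop the iteration entirely and instead delete the vertex set of a maximal family ${\cal Q}$ of pairwise vertex-disjoint minor models of the $(\eta+2)\times(\eta+2)$ grid, each confined to a bounded-treewidth protrusion so that $|V({\cal Q})|=\mathcal{O}(|{\cal Q}|)\le\mathcal{O}(\pi(G))$: by maximality $H\setminus V({\cal Q})$ contains no $(\eta+2)\times(\eta+2)$ grid minor, hence has treewidth $\mathcal{O}(\eta)$ by the linear grid theorem, and is finished by the bounded-treewidth subroutine. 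Either way, Lemmas~\ref{lem:udglingrid} and~\ref{lem:maplingrid} are precisely what drive the argument, and the remainder parallels \cite[Lemma~4.1]{FLRSsoda2011}.
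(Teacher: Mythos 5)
Your proposal spots the right tools (the linear grid theorems and the bounded-treewidth subroutine), but the central reduction step---producing a transversal of size $\mathcal{O}(\pi(G))$---is not completed in either variant you offer. In the iterative treewidth-halving variant, the claim that one can delete $\mathcal{O}(W)$ vertices of $H$ and halve $\tw(H)$ is asserted but not proved, and it is genuinely delicate here: the linear grid bound for ${\cal G}_U^t$ is obtained through the auxiliary planar graph $P_G$, and deleting vertices of $P_G$ (which are regions) does not correspond to deleting vertices of $G$ (which are disks), so the planar-graph intuition does not transfer directly. You yourself flag this step as the hard part. In the grid-packing variant, the bound $|V({\cal Q})|=\mathcal{O}(|{\cal Q}|)$ needs every minor model in ${\cal Q}$ to have bounded size, which you gesture at with ``confined to a bounded-treewidth protrusion'' but never establish; a minimal minor model of a fixed grid can still have arbitrarily many vertices (e.g.\ a long path contracted to a single branch set), and you give no polynomial-time procedure for finding a maximal family of bounded-size disjoint models.

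The paper proceeds along a different line, using Lemma~\ref{lem:theLemma} (\cite[Lemma~3.2]{FLRSsoda2011}), which you never invoke. Letting $X$ be an optimal transversal (unknown to the algorithm), that lemma guarantees the \emph{existence} of a set $X'$ with $|X'|\le|X|/2$ such that every component $C$ of $G\setminus X'$ has $|C\cap X|\le\gamma$ and $|N(C)|\le\gamma$. Since $|X'|<|X|$ and $X$ is optimal, some component $C$ of $G\setminus X'$ still has $\tw(G[C])>\eta$, and it is isolated by $Z=N(C)$, a set of size at most $\gamma$. The algorithm then enumerates all vertex subsets $Z$ of size at most $\gamma$, locates a component of $G\setminus Z$ with treewidth strictly greater than $\eta$ but $\mathcal{O}(\sqrt{\gamma})$, solves $\eta$-{\sc Transversal} optimally on it via CMSO dynamic programming, adds $N(C)$ together with that optimal solution to the output, and recurses on $G\setminus(C\cup N(C))$. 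Because $X$ must intersect each such $C$, the recursion runs at most $|X|$ rounds and the result is a $(\gamma+1)$-approximation. The missing ingredient in your sketch is precisely Lemma~\ref{lem:theLemma}, which turns the truly-sublinear-treewidth property directly into a decomposition with small interfaces and thereby sidesteps both treewidth halving and grid packing.
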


For the proof of Lemma~\ref{lem:transversapproxudcmap} we also need the following result from ~\cite[Lemma 3.2]{FLRSsoda2011}. 

\begin{lemma}\label{lem:theLemma} Let ${\cal G}$ be a hereditary graph class of truly sublinear treewidth with parameter $\lambda$. For any $\epsilon < 1$ there is a $\gamma$ such that for any $G \in {\cal G}$ and $X \subseteq V(G)$ with $\tw(G \setminus X) \leq \eta$ there is a $X' \subseteq V(G)$ satisfying $|X'| \leq \epsilon|X|$ and for every connected component $C$ of $G \setminus X'$ we have $|C \cap X| \leq \gamma$ and
$|N(C)| \leq \gamma$.
Moreover $X'$ can be computed from $G$ and $X$ in polynomial time, where the polynomial is independent of  $\epsilon$, $\lambda$ and $\eta$. 
\end{lemma}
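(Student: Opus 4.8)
Write $k=|X|$. Everything rests on one consequence of truly sublinear treewidth together with heredity: for any induced subgraph $H$ of $G$, putting $m_H:=|X\cap V(H)|$, we have $H\in\mathcal G$ and $\tw(H\setminus X)\le\tw(G\setminus X)\le\eta$, hence $\tw(H)\le\eta+\beta m_H^{\lambda}$. So $H$ has a tree decomposition of width $\cO(\eta+\beta m_H^{\lambda})$, computable in polynomial time by any fixed constant-factor treewidth approximation (whose running-time polynomial does not involve $\eta,\lambda,\epsilon$); after binarising its decomposition tree, there is, for the weight function equal to $1$ on $X\cap V(H)$ and $0$ elsewhere, a bag $B$ whose deletion splits $H$ into parts each carrying at most $\tfrac23 m_H$ of the weight, with $|B|\le\eta+\beta m_H^{\lambda}+1$. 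The plan is to let $X'$ be a union of such separating bags peeled off recursively, so that when no heavy piece is left every component of $G\setminus X'$ has few vertices of $X$; the role of $\lambda<1$ is to make the total amount removed sublinear in $k$ --- impossible without truly sublinear treewidth, since for $\lambda=1$ the analogous estimates are only linear in $k$.

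First I would do the peeling: start with the single active piece $G$; while some active piece $H$ has $m_H>\gamma$, compute a separating bag $B$ as above, put $B$ into $X'$, and replace $H$ by the connected components of $H\setminus B$ meeting $X$. Then every component of $G\setminus X'$ is contained in a final piece, so $|C\cap X|\le\gamma$ holds by construction. The heart of the proof is bounding $|X'|$: charging the bag of a split piece $H$ to $H$ and using $|B|\le(\beta+1)m_H^{\lambda}$ whenever $\gamma\ge(\eta+1)^{1/\lambda}$, one must show $\sum_{P}m_P^{\lambda}$ over the recursive pieces $P$ stays below $\tfrac{\epsilon}{\beta+1}k$. This needs care: at depth $d$ the pieces have disjoint $X$-parts summing to $\le k$ and each $\le(\tfrac23)^d k$, but a crude three-way balanced split can blow up $\sum_P m_P^{\lambda}$ by a constant factor per level; one has to keep the split tightly balanced (or, equivalently, alternate cheap ``coarse'' rounds that cut $G$ into pieces with far fewer vertices of $X$ --- hence much smaller treewidth, since $\lambda<1$ --- with ``fine'' rounds inside those pieces), and allocate the budget $\epsilon k$ across the rounds so that the accumulated removed set stays below $\epsilon k$ while every piece is eventually pushed down to $\le\gamma$ vertices of $X$, with $\gamma$ depending only on $\eta,\beta,\lambda,\epsilon$ and not on $k$. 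Getting the thresholds to make $\gamma$ genuinely constant is the delicate step I expect to spend the most effort on.

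It remains to also enforce $|N(C)|\le\gamma$. Since each separating bag is a bag of a tree decomposition of its piece, an induction along the root-to-leaf branch producing $C$ shows $N_G(C)$ lies in the union of the separating bags on that branch and is seen only through the last few of them; the bags low on the branch have constant size, because those pieces already have few vertices of $X$ and hence constant treewidth. A final cleanup pass --- inside each near-final piece remove one more bag of its constant-width tree decomposition, chosen to cut off its (already small) boundary, at constant cost per piece and $\cO(k/\gamma)$ overall --- then gives $|N(C)|\le\gamma$ after inflating $\gamma$ by a constant factor. The main obstacle throughout is this simultaneous control of the three quantities --- driving the pieces down in both $X$-content and boundary while keeping the removed set below $\epsilon|X|$ and $\gamma$ free of any dependence on $|X|$; the surrounding tree-decomposition surgery, and the fact that the whole computation is polynomial with a polynomial independent of $\epsilon,\lambda,\eta$, are routine.
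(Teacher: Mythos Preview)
Your recursive balanced-separator approach is exactly what the paper uses for the two analogous results it does prove in full (Lemmas~\ref{lem:minor_bidem} and~\ref{lem:thenewLemma}); the present lemma itself is only quoted from~\cite{FLRSsoda2011}. Your worry that $\sum_P m_P^{\lambda}$ blows up by a constant factor per level, and your proposed remedy of alternating coarse and fine rounds, are both unnecessary. The clean device, used verbatim in those two proofs, is to strengthen the inductive hypothesis to $T(k)\le \epsilon k-\delta k^{\lambda}$ and exploit $\rho:=\min_{1/3\le\alpha\le 2/3}\bigl(\alpha^{\lambda}+(1-\alpha)^{\lambda}\bigr)>1$ for $\lambda<1$: the slack $-\delta k^{\lambda}$ pays for the current separator, and $\rho>1$ regenerates strictly more slack on the two children. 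Choosing $\delta$ with $\delta(\rho-1)$ dominating the separator constant, and then $\gamma$ large enough for the base case $\epsilon\gamma/3-\delta\gamma^{\lambda}\ge 0$, closes the induction in one line with $\gamma$ depending only on $\eta,\beta,\lambda,\epsilon$.

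There is, however, a genuine gap in your treatment of $|N(C)|\le\gamma$. The assertion that $N_G(C)$ is ``seen only through the last few'' separators on its branch is not justified: $N_G(C)$ can meet every separator on the root-to-leaf path, and the topmost one already has size of order $|X|^{\lambda}$, so no cleanup inside the final constant-treewidth piece can cut off a boundary that large. One way to repair this is to fold the boundary into the recursion from the start: at each step work in $G[P\cup N_G(P)]$---which lies in ${\cal G}$ by heredity and, since deleting $(X\cap P)\cup N_G(P)$ leaves an induced subgraph of $G\setminus X$, has treewidth at most $\eta+\beta\bigl(|X\cap P|+|N_G(P)|\bigr)^{\lambda}$---and balance the separator against the combined weight $|X\cap P|+|N_G(P)|$. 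Then each child piece $P'$ satisfies $|X\cap P'|+|N_G(P')|\le \tfrac23\bigl(|X\cap P|+|N_G(P)|\bigr)+|S|$, and the same strengthened-hypothesis recurrence on this combined quantity simultaneously yields $|X'|\le\epsilon|X|$ and $|C\cap X|+|N(C)|\le\gamma$.
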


\begin{proof} [Proof of Lemma~\ref{lem:transversapproxudcmap}] 
Let $X$ be a smallest vertex set in $G$ such that $\tw(G \setminus X) \leq \eta$. By Lemma~\ref{lem:theLemma} with $\epsilon = 1/2$ there exists a $\gamma$ depending only on $t$ and $\eta$ and a set $X'$ with $|X'| \leq |X| / 2$ such that for any component $C$ of $G \setminus X'$, $|C \cap X| \leq \gamma$ and $|N(C)| \leq \gamma$. Since $X$ is the {\em smallest} set such that $\tw(G \setminus X) \leq \eta$, there is a component $C$ of $G \setminus X'$ with treewidth strictly more than $\eta$. Let $Z = N(C)$ and observe that $Z \subseteq X'$ is a set of size at most $\gamma$ such that $C$ is a connected component of $G \setminus Z$.

The algorithm proceeds as follows. It tries all possibilities for $Z$ and looks for a connected component $C$ of $G \setminus Z$ such that $\eta < \tw(G[C]) = O(\sqrt{\gamma})$. It solves the {\sc $\eta$-Transversal} problem optimally on $G[C]$ by noting that {\sc $\eta$-Transversal} can be formulated as a \pmin{} problem and applying the algorithm by Borie et al.~\cite{BoriePT92}. Let $X_C$ be the solution obtained for $G[C]$. The algorithm adds $X_C$ and $N(C)$ to the solution and repeats this step on $G \setminus (C \cup N(C))$ as long as $\tw(G) \geq \eta$. 

Clearly, the set returned by the algorithm is a feasible solution. We now argue that the algorithm is a $(\gamma+1)$-approximation algorithm. Let $C_1$, $C_2$, $\ldots ,C_t$ be the components found by the algorithm in this manner. Since $X$ must contain at least one vertex in each $C_i$ it follows that $t \leq |X|$. Now, for each $i$, $N(C_i)$ contains at most $\gamma$ vertices outside of $\bigcup_{j < i} N(C_j)$. Thus $\bigcup_{i \leq t} N(C_i) \leq \gamma|X|$. Furthermore for each $C$, $|X_C| \leq |X \cap C|$ and hence the size of the returned solution is at most $(\gamma+1)|X|$, which proves the lemma.
\end{proof}

By exchanging the Excluded grid theorem for $H$-minor free graphs by Linear Grid lemmas (Lemmas~\ref{lem:udglingrid} and ~\ref{lem:maplingrid}), 
we can adapt the proof of~\cite[Lemma 3.2]{F.V.Fomin:2010oq} to show the following. 
\begin{lemma}\label{lem:minor_bidem} Let $\Pi$ be a minor-bidimensional problem with the separation property. There exists a constant $\eta$ such that for every $G$ in ${\cal G}_U^t$ or ${\cal G}_M^t$, there is a subset $S \subseteq V(G)$  such that $|X|=O(t^{O(1)}\pi(G))$, and $\tw(G \setminus S)\leq \eta$. 
\end{lemma}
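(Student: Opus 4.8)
The plan is to mimic the proof of~\cite[Lemma 3.2]{F.V.Fomin:2010oq}, which establishes exactly this kind of statement for $H$-minor-free graphs, and to substitute the $H$-minor-free excluded grid theorem by our linear grid lemmas (Lemma~\ref{lem:udglingrid} for unit disk graphs and Lemma~\ref{lem:maplingrid} for map graphs). Concretely, fix $G \in {\cal G}_U^t$ (the map graph case is identical with Lemma~\ref{lem:maplingrid} in place of Lemma~\ref{lem:udglingrid}, using that $K_t$-freeness implies $\Delta(G) = O(t)$ via Observation~\ref{obs:timpliesd}). Let $OPT$ be an optimal solution to $\Pi$ on $G$, so $\kappa_\Pi(G,OPT) = \pi(G)$. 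The goal is to find a balanced-separator-style partition and recurse, showing that a set $S$ of size $O(t^{O(1)}\pi(G))$ suffices to bring the treewidth down to a constant $\eta$.

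The heart of the argument is a single recursive decomposition step. First I would show that if $\tw(G) \geq \eta$ for a suitably large constant $\eta$ (to be chosen depending only on $\delta$ from bidimensionality and on $\Delta$), then by the linear grid lemma $G$ contains an $(r \times r)$-grid minor with $r = \Omega(\tw(G)/\Delta^3)$, hence (since $\Delta = O(t)$) with $r = \Omega(\tw(G)/t^{O(1)})$. By minor-bidimensionality, $\pi$ restricted to this grid minor is $\geq \delta r^2$, and since $\Pi$ is minor-closed we get $\pi(G) \geq \delta r^2 = \Omega(\tw(G)^2 / t^{O(1)})$; equivalently $\tw(G) = O(t^{O(1)} \sqrt{\pi(G)})$. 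That bound on $\tw(G)$ is not yet what we want — we need a small set whose removal makes treewidth constant — so the next step is to use the grid minor to produce a small balanced separator. A grid minor of side $r$ has a separator of size $O(r)$ splitting it into two roughly equal halves; pulling this back through the minor model and using planarity of the auxiliary graph $P_G$ (together with Lemma~\ref{lem:pgisplanar}, which controls how many grid-model sets a single vertex of $G$ can hit), one obtains a set $S_0 \subseteq V(G)$ with $|S_0| = O(t^{O(1)} \sqrt{\pi(G)})$ such that $G \setminus S_0$ splits into parts $L$ and $R$ with $N(L) \subseteq S_0$, $N(R) \subseteq S_0$, and each of $\pi(G[L]), \pi(G[R])$ at most, say, a constant fraction of $\pi(G)$ away from half. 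Here the separation property of $\Pi$ is exactly what is needed: it guarantees $\pi(G[L]) \leq \kappa_\Pi(G[L], OPT \cap L) + O(|S_0|)$ and likewise for $R$, so that $\pi(G[L]) + \pi(G[R]) \leq \pi(G) + O(|S_0|)$, which is what makes the recursion converge.

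With the decomposition step in hand, I would run it recursively: put $S_0$ into our set $S$, recurse on $G[L]$ and $G[R]$, and stop the recursion on any part whose treewidth is already at most $\eta$. To bound the total size of $S$, I set up the recurrence $g(n') \leq g(n'_L) + g(n'_R) + O(t^{O(1)}\sqrt{p})$ where $p$ is the value of $\pi$ on the current subgraph and $p_L + p_R \leq p + O(\sqrt p)$ as above, and I would argue — this is the standard ``$\sum \sqrt{p_i}$ is maximized when spread out but the depth is logarithmic'' calculation from~\cite{F.V.Fomin:2010oq} — that the total is $O(t^{O(1)}\pi(G))$, i.e. genuinely linear (not $\sqrt{\cdot}$) in $\pi(G)$ once summed over the whole recursion tree. (The reason it becomes linear rather than sublinear: the recursion bottoms out at pieces of constant treewidth, hence of size depending only on $\eta$ and $t$, so the number of leaves is $\Theta(\pi(G))$ and each contributes $O(t^{O(1)})$; the internal separator costs form a geometrically-controlled sum dominated by the leaf count.) Finally, after removing $S$, every remaining component has treewidth $\leq \eta$, and since the components are separated by subsets of $S$, $\tw(G \setminus S) \leq \eta$, which is the conclusion.

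The main obstacle I expect is the separator-pullback step: the linear grid lemmas are stated as ``contains a large grid minor,'' but for the recursion I need a genuine small balanced vertex separator of $G$ itself, and transferring a separator of the grid minor back to a separator of $G$ requires care because a single vertex of $G$ may lie in many branch sets of the minor model (that is precisely why Lemma~\ref{lem:pgisplanar} and the $\Delta'$-bookkeeping in the proof of Lemma~\ref{lem:udglingrid} were developed). The clean way to handle this is not to go through a raw grid minor but to invoke the excluded-grid machinery in the contrapositive form: large treewidth of the planar auxiliary graph $P_G$ yields, by the planar separator theorem, a balanced separator of $P_G$ of size $O(\sqrt{\tw(P_G)^2}) = O(\tw(P_G))$, which via the bag-expansion construction of Lemma~\ref{lem:twofg} and the degree bound lifts to a balanced separator of $G$ of size $O(\Delta \cdot \tw(P_G)) = O(t^{O(1)} \tw(G)) = O(t^{O(1)} \sqrt{\pi(G)})$, where the last equality uses bidimensionality as above. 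Once phrased this way the rest is the routine recursion bookkeeping of~\cite{F.V.Fomin:2010oq}, and $\eta$ can be taken to be any constant exceeding the threshold at which the grid lemma and bidimensionality together force $\pi(G) > 0$, e.g. a fixed absolute constant.
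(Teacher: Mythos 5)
Your overall strategy matches the paper's: use the linear grid lemma plus minor-bidimensionality to get $\tw(G)=O(t^{O(1)}\sqrt{\pi(G)})$, find a small balanced separator, peel it off, invoke separability to control how $\pi$ splits across the two sides, recurse, and sum up along the recursion tree exactly as in~\cite[Lemma~3.2]{F.V.Fomin:2010oq}. That is indeed what the paper does, and the ``$T(k)\leq k - q\sqrt{k}$'' induction you are deferring to is the same recurrence the paper solves.

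There are two places where what you wrote does not hold up. First, the separator step. Your ``obstacle'' paragraph worries about pulling a grid-minor separator back through the branch sets, and then pivots to applying ``the planar separator theorem'' to $P_G$ to get a separator of size ``$O(\sqrt{\tw(P_G)^2})=O(\tw(P_G))$'' --- but the planar separator theorem gives size $O(\sqrt{|V(P_G)|})$, not $O(\tw(P_G))$; what gives $O(\tw(P_G))$ is the generic bounded-treewidth balanced-separator lemma, and once you have that lemma there is no reason to route through $P_G$ at all. The paper simply observes that $\tw(G)=O(t^{O(1)}\sqrt{\pi(G)})$ (which you already derived) and applies the standard treewidth balanced separator (Lemma~\ref{lemma:balsep1}) \emph{directly to $G$}, weighted by the $\{0,1\}$-indicator of a fixed optimum solution $Z$. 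This sidesteps all of the $P_G$/minor-model bookkeeping you were worried about, and --- crucially --- the weighting by $Z$ is what makes the separator balanced in the sense you need: $|L\cap Z|\leq 2\pi(G)/3$ and $|R\cap Z|\leq 2\pi(G)/3$, so that after applying separability, $\pi(G[L])$ and $\pi(G[R])$ are both a constant fraction smaller than $\pi(G)$. You write ``each of $\pi(G[L]),\pi(G[R])$ at most, say, a constant fraction of $\pi(G)$ away from half'' but never say how this balance is obtained; an unweighted balanced separator would only balance $|L|$ and $|R|$, which is useless here.

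Second, your intuition for why the total is linear is off: ``pieces of constant treewidth, hence of size depending only on $\eta$ and $t$'' is false --- bounded treewidth does not bound the number of vertices. What the paper's recursion actually terminates on is $\pi(C)\leq\gamma$ for a fixed constant $\gamma$; such a $C$ can be arbitrarily large, but bidimensionality plus the grid lemma force $\tw(C)=O(\sqrt{\gamma})=\eta$, which is what is needed. Likewise ``the internal separator costs form a geometrically-controlled sum dominated by the leaf count'' is not the right explanation --- the separator costs actually \emph{grow} with depth, and the argument that the total is $O(\pi(G))$ is the superadditivity-of-$\sqrt{\cdot}$ induction ($\rho=\min_{1/3\leq\alpha\leq 2/3}\alpha^{1/2}+(1-\alpha)^{1/2}>1$), which you are right to defer to~\cite{F.V.Fomin:2010oq} but which your informal paraphrase does not capture. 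These are not fatal --- the references you cite contain the correct arguments --- but as written the proposal would not quite fly as a proof.
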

For the proof of the Lemma~\ref{lem:minor_bidem}, we also need the following well known lemma, see e.g. \cite{Bodlaender98},  on separators in graphs of bounded treewidth.  
\begin{lemma}
\label{lemma:balsep1}
Let $G$ be a graph of treewidth at most $t$ and $w : V(G) \rightarrow \{0,1\}$   
 be a weight function. Then there is a partition of $V(G)$ into $L \uplus S \uplus R$ such that
\begin{itemize}\setlength\itemsep{-1.2mm}
 \item $|S| \leq t+1$, $N(L) \subseteq S$ and $N(R) \subseteq S$,
 \item every connected component $G[C]$ of $G \setminus S$ has $w(C)\leq w(V)/2$,
 \item $\frac{w(V(G))-w(S)}{3}\leq w(L)\leq \frac{2(w(V(G))-w(S))}{3}$ and $\frac{w(V(G))-w(S)}{3}\leq w(R)\leq \frac{2(w(V(G)-w(S))}{3}$.
\end{itemize}
\end{lemma}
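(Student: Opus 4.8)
The plan is to run the classical two-step balanced-separator argument: first extract a single bag of a tree decomposition whose removal cuts $G$ into pieces each of weight at most $\tfrac12 w(V(G))$, and then distribute these pieces between $L$ and $R$ so as to balance the two sides.

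\medskip\noindent\textbf{Step 1 (a balanced bag).} First I would fix a tree decomposition $(\mathcal{X},T)$ of $G$ of width at most $t$, so $|X_i|\le t+1$ for every $i\in V(T)$; root $T$ arbitrarily, and for a node $i$ write $T_i$ for the subtree rooted at $i$ and $V_i:=\bigcup_{k\in T_i}X_k$. Starting at the root one walks down, moving from the current node $i$ to the child $c$ with $w(V_c\setminus X_i)>\tfrac12 w(V(G))$ whenever such a child exists (there is at most one, since the sets $V_c\setminus X_i$ over the children $c$ are pairwise disjoint by the running-intersection property and thus have total weight at most $w(V(G))$). Let $i^\ast$ be the node where the walk stops and put $S:=X_{i^\ast}$, so $|S|\le t+1$. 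Deleting the bag $X_{i^\ast}$ separates $G$: every connected component $C$ of $G\setminus S$ lies inside one branch set $V_c\setminus X_{i^\ast}$ ($c$ a child of $i^\ast$) or inside the upward set $V(G)\setminus V_{i^\ast}$, and in particular $N(C)\subseteq S$. Each branch set has weight $\le\tfrac12 w(V(G))$ by the stopping rule, and $w(V(G)\setminus V_{i^\ast})\le w(V(G))-w(V_{i^\ast})<\tfrac12 w(V(G))$ because $w(V_{i^\ast})\ge w(V_{i^\ast}\setminus X_{\mathrm{parent}(i^\ast)})>\tfrac12 w(V(G))$ (the upward set being empty when $i^\ast$ is the root). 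Hence every component of $G\setminus S$ has weight $\le\tfrac12 w(V(G))$, which is the second bullet.

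\medskip\noindent\textbf{Step 2 (balancing).} Next I would order the components $C_1,\dots,C_m$ of $G\setminus S$ by weight, $w(C_1)\ge\dots\ge w(C_m)$, and set $W':=w(V(G))-w(S)=\sum_j w(C_j)$. Assigning each $C_j$ entirely to $L$ or to $R$ automatically keeps $N(L)\subseteq S$ and $N(R)\subseteq S$, so only the weight bounds remain. If $w(C_1)\le\tfrac13 W'$, every $C_j$ has weight $\le\tfrac13 W'$, and one adds components to $L$ until $w(L)$ first reaches $\tfrac13 W'$; then $w(L)\in[\tfrac13 W',\tfrac23 W')$, hence $w(R)=W'-w(L)\in(\tfrac13 W',\tfrac23 W']$. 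If $\tfrac13 W'<w(C_1)\le\tfrac23 W'$, one puts $C_1$ into $L$ and everything else into $R$, giving $w(L)=w(C_1)\in(\tfrac13 W',\tfrac23 W']$ and $w(R)=W'-w(C_1)\in[\tfrac13 W',\tfrac23 W')$. (If $W'=0$ all pieces have weight $0$ and any split works.) Either way the third bullet holds.

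\medskip\noindent\textbf{Expected main obstacle.} Steps 1 and 2 are routine; the genuine point is that Step 2 needs $w(C_1)\le\tfrac23 W'$, whereas Step 1 only guarantees $w(C_1)\le\tfrac12 w(V(G))$, and $\tfrac12 w(V(G))\le\tfrac23 W'$ exactly when $w(S)\le\tfrac14 w(V(G))$. Since $w$ is $\{0,1\}$-valued and $|S|\le t+1$, a failure would force $w(V(G))<4(t+1)$, i.e.\ a bounded-weight instance; I expect this residual regime to be the hard part. I would handle it by choosing the separating bag more carefully: among all bags of $(\mathcal{X},T)$ whose removal leaves every piece of weight $\le\tfrac12 w(V(G))$, take one of minimum total weight $w(\cdot)$; or, when the heavy component $C_1$ is still too large, re-route the cut into $C_1$ --- exploiting that $N(C_1)$ is contained in a single bag and that $G[C_1]$ again has treewidth $\le t$ --- so as to split $C_1$ into pieces of weight $\le\tfrac12 w(C_1)$ while keeping $|S|\le t+1$, which restores $w(C_1)\le\tfrac23 W'$ and lets Step 2 finish. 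Verifying that such a re-routing does not blow up $|S|$ beyond $t+1$ vertices and does not create a fresh over-heavy piece is the delicate step.
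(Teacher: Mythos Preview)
The paper does not prove this lemma: it is quoted as a well-known separator fact with a pointer to Bodlaender's survey. So there is no ``paper's proof'' to compare against, and one can only evaluate your argument on its own.

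Your Steps~1 and~2 are the standard argument, and they correctly establish the first two bullets together with the \emph{weaker} balance $w(L),w(R)\le \tfrac23\, w(V(G))$. That weaker inequality is in fact all the paper ever uses when it invokes the lemma later on. The obstacle you flag for the \emph{stated} third bullet --- that the heaviest component must weigh at most $\tfrac23\bigl(w(V(G))-w(S)\bigr)$ rather than merely $\tfrac12\, w(V(G))$ --- is genuine, and you are right to isolate it.

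Your proposed repairs, however, do not close the gap. Take $G=P_3$, the path $v_1v_2v_3$, with all weights equal to~$1$; here $t=1$ and $w(V(G))=3$. Every width-$1$ tree decomposition has exactly the two bags $\{v_1,v_2\}$ and $\{v_2,v_3\}$, each of weight~$2$; deleting either bag leaves a single component of weight~$1$, so $W'=1$ and no split can achieve $w(L)\le\tfrac23 W'=\tfrac23$. Thus ``pick a bag of minimum weight'' does nothing, and ``re-route into $C_1$'' cannot help because $C_1$ is a single vertex. The only separator satisfying all three bullets is $S=\{v_2\}$, which is the adhesion of the two bags rather than a bag itself. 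Any argument for the lemma \emph{as literally stated} must therefore allow $S$ to be a proper subset of a bag; your outline never leaves the ``$S$ is a full bag'' paradigm, so it cannot succeed without a genuinely new ingredient. If your aim is only to support the applications in the paper, the weaker balance $w(L),w(R)\le\tfrac23\, w(V(G))$ that your Steps~1--2 already deliver is enough.
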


\begin{proof}[Proof of Lemma~\ref{lem:minor_bidem}]
We prove our result for graphs in ${\cal G}_U^t$. The result for graphs in ${\cal G}_M^t$ is analogous. Since $\Pi$ is a minor-bidimensional problem,  there exists a constant $\delta >0$ such that $\pi(R)\geq \delta r^2$ on $r\times r$ grid minor. The definition of 
minor-bidimensionality together with Lemma~\ref{lem:udglingrid} imply that for every $G \in {\cal G}_U^t$ we have that $\tw(G)\leq 10 t^3\sqrt{\frac{\pi(G)}{\delta}}$. Thus, there is a constant $d'$ depending on $t$ and $\delta$ such that $\tw(G)\leq d' \sqrt{\pi(G)}$. 

We first make the following observation.  For any numbers $a>0$, $b>0$, since $\lambda < 1$, we have  that $a^\lambda + b^\lambda > (a+b)^\lambda$. 
Thus we have $\rho = \min_{1/3 \leq \alpha \leq 2/3}\alpha^\lambda + (1-\alpha)^\lambda > 1$. Now we proceed to our proof. 

We want to construct a set $X$ such that the treewidth of $G[V(G)\setminus X]$ is at most $\eta$ (to be fixed later).  
Let us fix a solution $Z$ of size $\pi(G)$ for $G$ and a weight
function $w~:V(G)\rightarrow \{0,1\}$ that assigns $1$ to every vertex in $Z$ and $0$ otherwise. By Lemma~\ref{lemma:balsep1}, there is a partition of $V(G)$ into $L$, $S$ and $R$ such that $|S| \leq d'\sqrt{\pi(G)} +1 $, $N(L) \subseteq S$, $N(R) \subseteq S$, $|L \cap Z| \leq 2\pi(G)/3$ and 
$|R \cap Z| \leq 2\pi(G)/3$. By deleting $S$ from the graph $G$, we obtain   two graphs $G[L]$ and $G[R]$ with no edges between them. Since $\Pi$ is separable, there exists a constant $\beta$ such that $\pi(G[L]) \leq |Z\cap L|+\beta |S|$  and   $\pi(G[R]) \leq |Z\cap R|+\beta |S|$. Thus we put $S$ 
into $X$ and then proceed recursively in $G[L]$ and $G[R]$. Since $\Pi$ is minor-bidimensional problem with the separation property, we have 
that in recursive step for a graph $G'$ with solution of size $\ell$  we  find a separator of size $\cO(\sqrt \ell)$. We set $k=\pi(G)$. 
Then the size of the set $X$ we are looking for is governed by the following recurrence. 

\begin{align}
\label{equ1}
 &T(k) \leq \underset{\frac{1}{3}  \leq \alpha \leq \frac{2}{3}}{\max}\Big\{T(\alpha k + \beta \sqrt{k})+T((1-\alpha)k+ \beta \sqrt k)+ d' \sqrt{k}+1\Big\}.  
\end{align}
We first set two constants $q$ and $\gamma$ which will be used in the proof of above recurrence.  Set $q=\frac{6\beta+3d'}{\rho-1}$ and 
$\gamma=\lceil 4\delta^{-1} \rceil + (3\beta)^2 +(3q)^2+1$. The base case of the recursion is when $k\leq \gamma$ and once we reach this case, we 
do not decompose the graph any further. Thus for the base case we set $T(k)=0$. Now using induction we can show that the size of 
$|X|\leq k-q\sqrt{k}$ and for every component $C$ in $G[V(G)\setminus X]$ we have that $\pi(C)=O(\gamma)$.  We first show that if $k \geq \gamma / 3$ 
then $T (k) \leq k - q \sqrt{k}$ by induction on $k$. For the base case if $\gamma / 3 \leq k \leq \gamma$ then the choice of $\gamma$ implies that 
$k - q \sqrt{k} \geq  \frac{\gamma}{3} - q \sqrt{\gamma} \geq 0 = T(k)$.

We now consider $T(k)$ for $k > \gamma$. By our choice of $q$ we have that for all $k>\gamma$, $\alpha k+ \beta \sqrt{k} <k$ and 
$(1 - \alpha k) + \beta \sqrt{k} < k$. The induction hypothesis then yields the following inequality. 

\begin{align*}
T (k) & \leq & \max_{1/3 \leq \alpha \leq 2/3} & T( \alpha k + \beta \sqrt{k} + 1) + T((1-\alpha) k + \beta \sqrt{k} + 1) + d' \sqrt{k}+1  \\
& \leq & \max_{1/3 \leq \alpha \leq 2/3} & k - q \sqrt{\alpha k} - q (\sqrt{(1-\alpha)k)} + 2 (\beta \sqrt{k} + 1) + d' \sqrt{k}+1 \\
& \leq & \max_{1/3 \leq \alpha \leq 2/3} & k - q\sqrt{k}(\alpha^{1/2} + (1-\alpha)^{1/2}) + 2 (\beta \sqrt{k} + 1) + d' \sqrt{k}+1  \\
& \leq & &  k - q \sqrt{k} - q (\rho-1) \sqrt{k} + 2 (\beta \sqrt{k} + 1) + d' \sqrt{k}+1 \\
& \leq & &  k - q \sqrt{k} .
\end{align*}
The last inequality holds whenever $q  (\rho-1) \sqrt{k} \geq  2 (\beta \sqrt{k} + 1) + d' \sqrt{k}+1$, which is ensured by the choice of $q$ 
and the fact that $\sqrt{k}\geq 1$. Thus $T (k) \leq k$ for all $k$. This proves the bound on the size of $X$. Notice that every component 
$C$ in $G[V(G)\setminus X]$ has at most $\gamma $ vertices and thus by bidimensionality we have that $\tw(C)=O(\sqrt{\gamma})$. We set $\eta$ as  the $\mu \sqrt{\gamma}$, where $\mu$ is the constant appearing in the term $O(\sqrt{\gamma})$. This proves the theorem. 
\end{proof}

\begin{theorem}
\label{thm:udmapeptas} 
Let $\Pi$ be a reducible minor-bidimensional problem with the separation property. 
There is an EPTAS for $\Pi$ on ${\cal G}_U^t$ and ${\cal G}_M^t$.
\end{theorem}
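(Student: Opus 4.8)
The plan is to reduce everything to Proposition~\ref{thm:eptastrans}: it already does the heavy lifting, so I only need to verify its three hypotheses for $\Pi$ on the classes ${\cal G}_U^t$ and ${\cal G}_M^t$. Two of them are immediate — reducibility of $\Pi$ is assumed, and truly sublinear treewidth (with $\lambda=\tfrac12$) of ${\cal G}_U^t$ and of ${\cal G}_M^t$ is exactly Corollary~\ref{cor:tstdiscgraphs} and Theorem~\ref{thm:tstmapgraphs}. So the real content of the proof is to establish that $\Pi$ is $\eta$-transversable on these classes for a suitable fixed $\eta$.

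To get $\eta$-transversability I would chain the two lemmas proved above. First, Lemma~\ref{lem:minor_bidem} — which applies because $\Pi$ is minor-bidimensional with the separation property — pins down a constant $\eta$ (depending only on $\Pi$ and $t$) and shows that every $G$ in ${\cal G}_U^t$ or ${\cal G}_M^t$ admits a set $S$ with $|S|=\mathcal{O}(t^{\mathcal{O}(1)}\pi(G))$ and $\tw(G\setminus S)\le\eta$. Consequently the optimum of the {\sc $\eta$-Transversal} problem on $G$ is $\mathcal{O}(\pi(G))$, with the hidden constant depending only on $t$. Second, Lemma~\ref{lem:transversapproxudcmap} gives, for this very $\eta$, a polynomial-time constant-factor approximation for {\sc $\eta$-Transversal} on ${\cal G}_U^t$ and ${\cal G}_M^t$; running it on $G$ returns in polynomial time a set $X$ with $\tw(G\setminus X)\le\eta$ and $|X|\le c\cdot\mathrm{OPT}_{\eta\text{-Transversal}}(G)\le c\,|S|=\mathcal{O}(\pi(G))$. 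That is precisely the definition of $\eta$-transversability restricted to these classes. Feeding reducibility, $\eta$-transversability, and truly sublinear treewidth into Proposition~\ref{thm:eptastrans} then yields the claimed EPTAS on ${\cal G}_U^t$ and ${\cal G}_M^t$.

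The part I expect to need the most care is the interface between the ``global'' definitions in Section~2 and the class-restricted statements we actually have: Lemma~\ref{lem:transversapproxudcmap} and Lemma~\ref{lem:minor_bidem} genuinely use the geometric and $K_t$-free structure, so $\Pi$ is $\eta$-transversable only on ${\cal G}_U^t$ and ${\cal G}_M^t$, not on all graphs. This is not a real obstacle — the EPTAS of Proposition~\ref{thm:eptastrans} from~\cite{FLRSsoda2011} only ever feeds graphs from the class ${\cal G}$ into its subroutines, so its proof uses only a class-relative version of $\eta$-transversability — but the reduction should say so explicitly when citing it. A smaller bookkeeping point is consistency of the constant $\eta$: Lemma~\ref{lem:minor_bidem} fixes one $\eta$ from $\Pi$ and $t$, while Lemma~\ref{lem:transversapproxudcmap} holds for every integer $\eta$, so the same value can be used throughout, and all hidden constants (in particular $c$ and the $t^{\mathcal{O}(1)}$ factor) depend only on $\Pi$ and $t$ — which is exactly what is permitted for an EPTAS on a fixed class ${\cal G}_U^t$ or ${\cal G}_M^t$.
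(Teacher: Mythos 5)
Your proposal matches the paper's proof exactly: both derive $\eta$-transversability on ${\cal G}_U^t$ and ${\cal G}_M^t$ by combining Lemma~\ref{lem:minor_bidem} (existence of the transversal) with Lemma~\ref{lem:transversapproxudcmap} (its polynomial-time constant-factor approximability), and then invoke Proposition~\ref{thm:eptastrans} together with Corollary~\ref{cor:tstdiscgraphs} and Theorem~\ref{thm:tstmapgraphs}. The additional remarks about the class-relative reading of $\eta$-transversability and the consistency of $\eta$ are sensible bookkeeping that the paper leaves implicit, and they are harmless.
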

\begin{proof}
Combining Lemmas \ref{lem:transversapproxudcmap} and  \ref{lem:minor_bidem} we get that every reducible minor-bidimensional problem $\Pi$ with the separation property is 
$\eta$-transversable for some $\eta$ on ${\cal G}_U^t$ and ${\cal G}_M^t$. Thus the theorem follows by applying 
Proposition~\ref{thm:eptastrans} in combination with Theorem~\ref{thm:tstmapgraphs} and Corollary~\ref{cor:tstdiscgraphs}. 
\end{proof}

\paragraph{EPTAS on Unit Disc Graphs and Map Graphs.}
In this section we give EPTAS for several problems on unit disk graphs and map graphs. Our first problem is the following generic problem. Let  $\cal F$ be a finite set of graphs. In the {\fd} problem, we are given an $n$-vertex graph
$G$ as an input, and asked to find a minimum sized subset $S\subseteq V(G)$ 
such that $G\setminus S$ does not contain a graph from  ${\cal
F}$ as a minor. We refer to such a subset $S$ as a $\cal F$-hitting set.   
The {\fd}~problem is a generalization of several fundamental problems. For example, when  ${\cal 
F}=\{K_2\}$, a complete graph on two vertices, this is the 
{\sc Vertex Cover} problem. When ${\cal
F}=\{C_3\}$,   a cycle on three vertices, this is the 
   {\sc Feedback Vertex Set} problem.  Other famous cases are  ${\cal
F} =\{K_{2,3}, K_4\}$, ${\cal
F} =\{K_{3,3}, K_5\}$ and ${\cal
F} =\{K_{3}, T_2\}$, which correspond to removing vertices to obtain outerplanar graphs, planar 
graphs, and graphs of pathwidth one respectively.  Here, $K_{i,j}$ denotes the complete bipartite graph with bipartitions of sizes $i$ and $j$, and $K_i$ denotes the complete graph on $i$ vertices. Further, a $T_2$ is a star on three leaves, each of whose edges has been subdivided exactly once. 
In literature, these problems are known as   {\sc Outerplanar Deletion Set}, {\sc Planar Deletion Set} and 
{\sc Pathwidth One Deletion Set} respectively. Now we show that if $\cal F$ contains a planar graph then {\fd} has EPTAS on unit disk graphs and map graphs. It is known from~\cite{FLRSsoda2011} that  {\fd} problem is reducible minor-bidimensional problem with the separation property whenever $\cal F$ contains  a planar graph. 

\begin{theorem}
\label{lem:fdeptas}
 Let $\cal F$ be  a finite set of graphs containing a planar graph. 
   Then  {\fd}   admits  an EPTAS on unit disk graphs and map graphs. 
\end{theorem}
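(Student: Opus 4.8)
The plan is to reduce the statement to Theorem~\ref{thm:udmapeptas}. That theorem says that every reducible minor-bidimensional problem with the separation property admits an EPTAS on ${\cal G}_U^t$ and ${\cal G}_M^t$, i.e.\ on $K_t$-free unit disk graphs and map graphs. So the remaining task is twofold: first, to verify that {\fd} with $\cal F$ containing a planar graph satisfies the three structural hypotheses (reducibility, minor-bidimensionality, separation property); and second, to handle the fact that arbitrary unit disk graphs and map graphs are \emph{not} $K_t$-free, by ``cleaning'' them of large cliques.

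For the first part I would simply invoke the cited fact from~\cite{FLRSsoda2011}: when $\cal F$ contains a planar graph, {\fd} is a reducible minor-bidimensional problem with the separation property. (Minor-bidimensionality is clear since on a large grid one must delete $\Omega(r^2)$ vertices to destroy all copies of the planar member of $\cal F$, which appears $\Omega(r^2)$ times in an $r\times r$ grid; minor-closedness of $\pi$ is immediate; reducibility and separability are the content of~\cite{FLRSsoda2011}.)

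The main obstacle — and the only genuinely new ingredient — is the clique-cleaning step. The idea is that for {\fd} with $\cal F$ containing a planar graph, there is a constant $t = t(\cal F)$ such that $K_t$ itself contains every planar graph in $\cal F$ as a minor (take $t$ larger than the number of vertices of that planar graph). Hence any $K_t$-subgraph of $G$ must be hit by \emph{at least one} vertex of every feasible $\cal F$-hitting set restricted to that clique. More usefully, I would argue that one can greedily remove cliques: repeatedly find a maximal clique of size $\geq t$, and since deleting all but a constant number of its vertices is forced up to an additive constant, put all its vertices into the partial solution $X$; this loses only a constant factor per clique, and the number of disjoint such large cliques is $O(\pi(G))$ because each contributes to the optimum. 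After this cleaning, the residual graph $G'$ is $K_t$-free, lies in ${\cal G}_U^t$ (resp.\ ${\cal G}_M^t$), and $\pi(G') = \pi(G) \pm O(|X|)$ with $|X| = O(\pi(G))$. One then runs the EPTAS of Theorem~\ref{thm:udmapeptas} on $G'$ and lifts the solution back, adding $X$; the approximation guarantees compose because the error terms are all $O(\epsilon \cdot \pi(G))$. Care is needed to make the clique cleaning robust in the sense of not requiring a geometric representation — but by Observation~\ref{obs:timpliesd} a unit disk graph that is $K_{t'}$-free has bounded degree, so a clique of size $\geq t'$ can be detected combinatorially by examining neighborhoods of bounded-degree vertices or certifying that a high-degree vertex lies in a large clique; for map graphs a similar local check works. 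Making this cleaning argument precise, and checking that it preserves membership in the geometric class and only perturbs $\pi$ by $O(|X|)$, is where the real work lies; the rest is assembling previously established results.
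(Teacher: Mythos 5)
Your high-level plan is the same as the paper's: reduce to Theorem~\ref{thm:udmapeptas} via the cited fact from~\cite{FLRSsoda2011} that {\fd} (with a planar graph in $\cal F$) is reducible, minor-bidimensional, and separable, and prepend a clique-cleaning step to enter the $K_t$-free regime. The paper also finds a maximum clique, adds it wholesale to the solution, and recurses; and it resolves the robustness worry you raise by simply invoking polynomial-time maximum-clique algorithms for unit disk and map graphs (in particular the robust algorithm of Raghavan and Spinrad, which does not need a geometric representation), so that part of your concern is handled by a citation rather than by the ad hoc degree-based detection you sketch.

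The substantive gap is in your clique-cleaning accounting. You fix the clique-size threshold as a constant $t=t(\cal F)$ depending only on $\cal F$ (namely, larger than the order $h$ of the planar obstruction), observe that each cleaned clique forces all but at most $h-1$ of its vertices into any solution, and then claim ``the error terms are all $O(\epsilon\cdot\pi(G))$.'' That last claim does not follow. With a threshold independent of $\epsilon$, adding the whole clique $C$ to the solution costs up to $h-1$ extra vertices per clique compared to the forced $|C|-(h-1)$ vertices, i.e.\ a multiplicative overhead of $\frac{|C|}{|C|-(h-1)}$, which is bounded only by a constant, not by $1+\epsilon$. Summed over $\Theta(\pi(G))$ many cliques, the cleaning step alone can already lose a constant factor, destroying the PTAS guarantee. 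The fix — which the paper uses — is to make the threshold $\epsilon$-dependent: only clean cliques of size at least $\frac{(1+\epsilon)h}{\epsilon}$. Then for each cleaned clique $C_i$ one has $|C_i|\leq(1+\epsilon)\bigl(|C_i|-h\bigr)\leq(1+\epsilon)|X\cap C_i|$, where $X$ is an optimal hitting set, so each cleaned clique is paid for with only a $(1+\epsilon)$ multiplicative loss, and the residual graph is $K_{t'}$-free with $t'=\lceil(1+\epsilon)h/\epsilon\rceil$. Since Theorem~\ref{thm:udmapeptas} applies to $\mathcal{G}_U^{t'}$ and $\mathcal{G}_M^{t'}$ for every fixed $t'$, and $t'$ depends only on $\epsilon$ and $\cal F$, the running time remains of EPTAS form. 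Your proof needs this $\epsilon$-dependent threshold and the associated per-clique $(1+\epsilon)$-accounting; without it, the argument as written only yields a constant-factor approximation.
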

\begin{proof}
Let  $G$ be the input graph, $\epsilon $ be a fixed constant and 
$\cal F$ be an obstruction set containing a planar graph of size $h$. 
This implies that any optimal $\cal F$-hitting set in $G$ must contain all but $h-1$ 
vertices from any clique in $G$.  We outline a proof below only for unit disk graphs, the proof for map graphs is similar. 

The algorithm proceeds as follows. 
It finds a maximum clique 
$C$ of $G$. One can find a maximum sized clique in unit disk graphs and map graphs in polynomial time~\cite{ChenGP98,ChenGP02,ClarkCJ90,RaghavanS03}. 
The algorithm adds $C$ to the solution and repeats this step on $G \setminus C$ as long 
as there is a  clique of size $ \frac{(1 + \epsilon)h}{\epsilon}$. Once we have that the maximum size of a 
clique is bounded by $\frac{(1 + \epsilon)h}{\epsilon}$,  
we can use the EPTAS obtained in Theorem~\ref{thm:udmapeptas} to get  a $\cal F$-hitting set of $G$ of size  
$(1+\epsilon)OPT$, where $OPT$ is the size of a minimum $\cal F$-hitting set.

Clearly, the set returned by the algorithm is a feasible solution. We now argue that the algorithm is an EPTAS.
Clearly the running time of our algorithm is of desired form. As the step where we find a clique and add all its vertices in our solution can be done in polynomial time 
and finally we run an EPTAS on a graph where the maximum degree is bounded by a function of $\epsilon$.   
Let $X$ be an optimal $\cal F$-hitting set of $G$. 
Let $C_1$, $C_2$, $\ldots ,C_t$ be the cliques found by the algorithm and $G_q$ be the graph where we apply Theorem~\ref{thm:udmapeptas}. 
Since $X$ must contain at least $|C_i|-h$ vertices  and  $|C_i|\geq \frac{(1 + \epsilon)h}{\epsilon}$, we have that $|C_i|\leq (1+\epsilon)(|C_i|-h)\leq (1+\epsilon)(|X\cap C_i|)$. 
Thus the size of the solution returned by the algorithm satisfies the following inequality $\sum_{i=1}^t |C_i|+(1+\epsilon) |X\cap V(G_q)| 
\leq (1+\epsilon)(\sum_{i=1}^t |X\cap C_i| +  |X\cap V(G_q)|) \leq (1+\epsilon)|X|=(1+\epsilon)OPT$. This completes the proof. 
\end{proof}
Next we show how we can obtain an EPTAS for {\sc Connected Vertex Cover} on unit disk graphs and map graphs. In {\sc Connected Vertex Cover} we are given a graph $G$ and the objective is to find a minimum size subset $S \subseteq V(G)$ such that $G[S]$ is connected and every edge in $E(G)$ has at least one endpoint in $S$.  

\begin{theorem}
\label{lem:cvcunitmap}
  {\sc Connected Vertex Cover}  admits  an EPTAS on unit disk graphs and map graphs. 
  \end{theorem}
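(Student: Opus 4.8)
The plan is to follow the same clique-cleaning template used for \textsc{$\mathcal{F}$-Deletion} in Theorem~\ref{lem:fdeptas}, but with the extra care that \textsc{Connected Vertex Cover} requires: the contracted pieces must not destroy connectivity of a solution. The key observation is that \textsc{Vertex Cover} of any clique $C$ of size $\geq 2$ must include all but at most one vertex of $C$, so removing a large clique can only change the optimum by a controlled additive term. So first I would repeatedly find a maximum clique $C$ in the current graph (doable in polynomial time on unit disk graphs and map graphs by the cited results~\cite{ChenGP98,ChenGP02,ClarkCJ90,RaghavanS03}), and as long as $|C| \geq 2/\epsilon$, add all of $C$ to the solution and delete it; when the process halts the remaining graph $G_q$ is $K_t$-free for $t = \lceil 2/\epsilon\rceil + 1$, hence lies in ${\cal G}_U^t$ or ${\cal G}_M^t$.

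Second, I would verify that \textsc{Connected Vertex Cover} (or a suitable variant) is a reducible minor-bidimensional problem with the separation property, so that Theorem~\ref{thm:udmapeptas} applies to $G_q$. Minor-bidimensionality is clear: \textsc{Connected Vertex Cover} is minor-closed up to the usual caveats (contracting an edge does not increase the connected vertex cover number once the graph is connected), and on the $(r\times r)$-grid a connected vertex cover has size $\Omega(r^2)$ since an ordinary vertex cover already does. The separation property and reducibility for \textsc{Connected Vertex Cover} are exactly the properties established in~\cite{FLRSsoda2011} (the problem is expressible as a \pmm{} predicate once one adds a connectivity constraint, and connectivity can be enforced via a bounded number of "annotated" terminals in the protrusion replacement), so I would cite that. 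This gives a $(1+\epsilon)$-approximate connected vertex cover $S_q$ of $G_q$ in time $f(1/\epsilon)\cdot n^{O(1)}$.

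Third, I would argue the combined output is a feasible connected vertex cover of $G$ of size $(1+\epsilon)\mathrm{OPT}$. Feasibility for the \emph{cover} part is immediate: every edge of $G$ is either inside some removed clique $C_i$ (covered since all of $C_i$ is taken) or inside $G_q$ (covered by $S_q$). For the \emph{size} bound, as in Theorem~\ref{lem:fdeptas}: if $X$ is an optimum connected vertex cover of $G$, then $X$ must contain at least $|C_i|-1$ vertices of each $C_i$, and since $|C_i| \geq 2/\epsilon$ we get $|C_i| \leq (1+\epsilon)|X\cap C_i|$; summing and adding $(1+\epsilon)|X\cap V(G_q)| \geq |S_q|$ yields total size at most $(1+\epsilon)|X|$.

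The one genuinely delicate point — and the main obstacle — is \emph{connectivity} of the returned set $\bigl(\bigcup_i C_i\bigr)\cup S_q$. Adding whole cliques and then gluing in a connected vertex cover of the leftover graph need not produce a connected subgraph, because different removed cliques and the components of $S_q$ may sit in different parts of $G$. I would handle this the standard way: instead of a bare connected vertex cover of $G_q$, run on $G_q$ the \emph{annotated} version of \textsc{Connected Vertex Cover} that additionally requires the solution to be connected in $G$ together with the already-chosen sets $\bigcup_i C_i$ (equivalently, treat the $C_i$'s as "virtual terminals" that must end up in one component); this annotated problem is still CMSO-expressible and still reducible/minor-bidimensional/separable, and a connected vertex cover of $G$ restricted to $V(G_q)$ is a feasible annotated solution there, so the optimum does not increase by more than the $O(|X|)$-type slack already accounted for. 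Since $G$ itself may be assumed connected (otherwise handle components separately, and if $G$ has an isolated vertex \textsc{Connected Vertex Cover} is defined only on connected graphs), the annotated solution plus $\bigcup_i C_i$ is connected in $G$, completing the proof.
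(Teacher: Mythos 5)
Your overall template (clique-cleaning, then running the $K_t$-free EPTAS, then arguing the size bound) is the same as the paper's, and you correctly identify connectivity as the delicate point, but the way you resolve it has a genuine gap. You propose running on $G_q$ an annotated problem that requires the solution ``to be connected in $G$ together with the already-chosen sets $\bigcup_i C_i$,'' and assert this is CMSO-expressible, reducible, and separable. That claim is not justified and is in fact problematic: to express it on $G_q$ alone you would have to encode which removed cliques are adjacent to one another and how they merge in $G$, i.e., an unbounded set-partition over the $q$ removed cliques, which cannot be packed into a bounded CMSO annotation. Moreover your size analysis only shows $|C_i| \leq (1+\epsilon)|X\cap C_i|$ with threshold $2/\epsilon$, leaving no budget to pay for any connectivity repair should the pieces come out disconnected.

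The paper instead uses a deliberately \emph{weaker} annotation: find a minimum $W\subseteq V(G_q)$ such that every edge of $G_q$ is covered and every connected component of $G_q[W]$ contains a vertex of $R=N(Z)\cap V(G_q)$. This is a single unary predicate (membership in $R$) and is genuinely CMSO-expressible, reducible, and $0$-transversable, so Proposition~\ref{thm:eptastrans} applies. Because every component of $W$ touches $N(Z)$, the set $W\cup Z$ is a vertex cover of $G$ with at most $q$ components (one per removed clique), and it is made connected by adding at most $q-1$ extra vertices. That extra cost of (up to) one vertex per clique is precisely what the stricter threshold $(2+\epsilon)/\epsilon$ pays for: it guarantees $|C_i|+1 \leq (1+\epsilon)(|C_i|-1) \leq (1+\epsilon)|X\cap C_i|$, after which summing gives the $(1+\epsilon)\mathrm{OPT}$ bound. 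So the missing ingredient in your writeup is this pair of moves: weaken the annotation to ``each component hits $N(Z)$'' (rather than ``connected with the removed cliques in $G$''), and charge the explicit $q-1$ reconnecting vertices to the approximation slack via the $(2+\epsilon)/\epsilon$ clique threshold.
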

  \begin{proof}
Observe that {\sc Connected Vertex Cover} is $0$-transversable. Given a graph $G$ we find a maximal matching in linear time and output the endpoints of the matching as $X$. Any vertex cover must contain at least one endpoint from each edge in the matching, and thus $|X| \leq 2\pi(G)$. Also, $\tw(G \setminus X) = 0$. We recall the proof from~\cite{FLRSsoda2011}  that  {\sc Connected Vertex Cover} is reducible, as we will use this to get 
an EPTAS here.   
Given a graph $G$ and set $X$, let $G' = G \setminus X$ and let $R = N(X)$.  The annotated problem $\Pi'$ is to find a minimum sized set $S' \subseteq V(G')$ such that every edge in $G'$ has an end point in $S'$ and every connected component of $G'[S']$ contains a vertex in $R$. Notice that for any connected vertex set $S$ of $G$, $S \setminus X$ is a feasible solution to $\Pi'$ on $G'$. Conversely, for any feasible solution $S'$ of $\Pi'$ on $G'$, we have that $S = S' \cup X$ is a vertex cover of $G$ and has at most 
$|X|$ connected components. Since $S$ is a vertex cover it is sufficient to add $(|X|-1)$ vertices to $S$ in order to make it a connected vertex cover of $G$. Hence, {\sc Connected Vertex Cover} is reducible. One can similarly show that the annotated problem 
$\Pi'$ is $0$-transversable and reducible. This implies that $\Pi'$ has EPTAS on ${\cal G}_U^t$ and ${\cal G}_M^t$. 

To get our EPTAS for {\sc Connected Vertex Cover} we do similar to what we did for {\fd} problem in Theorem~\ref{lem:fdeptas}. The only change 
is that we keep finding clique and including it in our solution until there is no clique of size 
$ \frac{(2 + \epsilon)}{\epsilon}$. 
Let $C_1$, $C_2$, $\ldots ,C_q$ be the cliques found by the algorithm and $G_q$ be the graph on which we apply Proposition~\ref{thm:eptastrans}. Let $Z$ be the union of cliques, that is, $Z=\cup_{i\leq q}C_i$. Now we define the annotated problem $\Pi'$ with respect to set $Z$ and using Proposition~\ref{thm:eptastrans} obtain a set $W$ of size $(1+\epsilon)OPT'$, where $OPT'$ is the size of a minimum cardinality set in $G_q$ such that every edge in $G_q$ has 
an end point in $W$ and every connected component of $G_q[W]$ contains a vertex in $R=N(Z)\cap V(G_q)$. Now consider the set $W\cup Z$. This is a vertex cover of $G$ such that it has $q$ components and hence we can make it connected by adding at most 
$q-1$ vertices. Let the final solution returned by our algorithm be $S$. Let $X$ be an optimal connected vertex cover of $G$. 
Since $X$ must contain at least $|C_i|-1$ vertices  and 
the size of $|C_i|\geq \frac{(2 + \epsilon)}{\epsilon}$, we have that $|C_i|+1\leq (1+\epsilon)(|C_i|-1)\leq (1+\epsilon)(|X\cap C_i|)$. 
Thus the size of the solution returned by the algorithm satisfies the following inequality $\sum_{i=1}^t (|C_i|+1)+(1+\epsilon) 
|X\cap V(G_q)| \leq (1+\epsilon)(\sum_{i=1}^t (|X\cap C_i|) +  |X\cap V(G_q)|) \leq (1+\epsilon)|X|=(1+\epsilon)OPT$. This completes the proof. 
\end{proof}

\paragraph{EPTAS for {\sc Cycle Packing} on Unit Disk Graphs.}
A {\em cycle packing} in a graph $G$ is a collection $C_1, C_2, \ldots , C_p$ of pairwise disjoint vertex sets such that for every $i$, $G[C_i]$ induces a cycle. The integer $p$ is the {\em size} of the cycle packing and in the {\sc Cycle Packing} problem the objective is to find a cycle packing of maximum size in the input graph. {\sc Cycle Packing} is known to be minor-bidimensional, separable and reducible~\cite{FLRSsoda2011}. Thus  by Theorem~\ref{thm:udmapeptas}, the problem admits an EPTAS on ${\cal G}_U^t$. Hence, in order to give an EPTAS for {\sc Cycle Packing} on unit disk graphs it is sufficient to prove the following lemma. In particular, the following lemma implies that if we find a sufficiently large clique $X$, partition $X$ into triangles and add this partition to our packing, this will give a good approximation of how the optimum solution intersects with $X$. Here a {\em triangle} is a cycle on three vertices.

\begin{lemma}
\label{lem:cyclepackclique} Let $G$ be a unit disk graph and $X$ be a clique in $G$. There is a maximum size cycle packing $C_1, C_2, \ldots ,C_p$ in $G$ such that at most $405$ cycles $C_i$ in the packing satisfy $C_i \cap X \neq \emptyset$ and $C_i \setminus X \neq \emptyset$. \end{lemma}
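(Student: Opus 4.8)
The plan is to take an arbitrary maximum cycle packing $\mathcal{C} = \{C_1,\ldots,C_p\}$ in $G$ and transform it, without decreasing its size, into one in which only a bounded number of cycles ``straddle'' the clique $X$, i.e.\ meet both $X$ and $V(G)\setminus X$. The first observation is that among the straddling cycles, the ones that use many vertices of $X$ are wasteful: a cycle $C_i$ with $|C_i \cap X| \geq 3$ can be replaced by $\lfloor |C_i\cap X|/3 \rfloor$ vertex-disjoint triangles inside $X$ (since $X$ is a clique, every $3$ vertices of $X$ form a triangle), which is at least as many cycles as the single cycle $C_i$ contributed, and these new triangles lie entirely inside $X$, hence are not straddling. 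After doing this replacement for all straddling cycles with $|C_i\cap X|\geq 3$, the only remaining straddling cycles satisfy $|C_i\cap X| \in \{1,2\}$.

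The heart of the argument is then to bound the number of straddling cycles that use exactly one or two vertices of $X$. Each such cycle $C_i$ contains a path $P_i$ with both endpoints in $N(X)\setminus X$ (or is itself a cycle through one/two vertices of $X$ whose rest lies outside), and the interiors of these paths are vertex-disjoint and lie in $G \setminus X$. Here I would invoke the unit-disk geometry: by Observation~\ref{obs:timpliesd} the clique $X$ has at most $6t$ vertices only if $G$ is $K_t$-free, but we are not assuming that, so instead I would argue directly about the disks. A clique $X$ in a unit disk graph corresponds to a family of pairwise-intersecting unit disks, whose union has bounded diameter (at most $4$), hence is covered by a bounded number of unit disks; consequently $N[X]$ — and in particular the set of vertices whose disk meets the region occupied by $X$ — while possibly large, has the property that any independent-ish structure around $X$ is limited. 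More precisely, I expect one shows that the set of disks that can serve as an ``entry point'' of a straddling cycle, i.e.\ vertices in $N(X)$, together with the routing constraints, forces that one can reroute all but $O(1)$ straddling $1$-vertex and $2$-vertex cycles through triangles of $X$: whenever there are at least $3$ pairwise-disjoint such cycles using distinct vertices of $X$ (or reusing vertices is impossible since the packing is vertex-disjoint), their $X$-vertices can be regrouped into triangles, one triangle per three straddling cycles, and this never decreases the count. The constant $405$ presumably comes from a careful accounting: $3$ vertices of $X$ per saved triangle, times a geometric bound (something like $6\cdot 6 + \text{const}$ coming from the degree/packing bound around a unit disk, squared or similar), yielding $405 = 3 \cdot 135$ or $81\cdot 5$, etc.

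Concretely, the steps in order: (1) fix a maximum packing $\mathcal{C}$; (2) for every cycle $C_i$ with $C_i\cap X\neq\emptyset$ and $|C_i\cap X|\geq 3$, delete $C_i$ from the packing and insert $\lfloor|C_i\cap X|/3\rfloor \geq 1$ disjoint triangles inside $C_i\cap X \subseteq X$ — this keeps the packing vertex-disjoint (we only use vertices already in $C_i$), does not decrease $p$, and removes $C_i$ from the straddling set; (3) now all straddling cycles use $1$ or $2$ vertices of $X$; bound the number of vertices of $X$ available to them and the number of disjoint ``external connectors'' in $G\setminus X$ near $X$ using the unit-disk packing property of $X$'s disks, obtaining a constant upper bound $m_0$ on how many vertices of $X$ can be touched by straddling cycles in any maximum packing that is ``reduced'' as above — if more than $m_0$ were touched, we could again regroup triples into interior triangles; (4) since each straddling cycle touches at least one vertex of $X$ and these are disjoint, the number of straddling cycles is at most $m_0$, and tracking the constants gives $m_0 \leq 405$.

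The main obstacle will be step (3): proving the geometric lemma that the number of vertices of $X$ that can simultaneously be endpoints of vertex-disjoint externally-routed paths is bounded by an absolute constant. The subtlety is that $X$ can be arbitrarily large, and many disks outside $X$ can touch $X$; what saves us is that such outside disks all lie within a bounded-diameter region (since $X$'s disks pairwise intersect, their centers lie in a disk of radius $2$, so any disk meeting one of them has center within radius $4$, and a bounded-radius region holds only $O(1)$ pairwise-disjoint unit disks, hence $O(1)$ pairwise-disjoint subgraphs). I would make this precise via a standard area/packing argument: the centers of all disks involved in straddling connectors lie in a disk of radius $4$, which can contain at most a constant number — roughly $64$ — of pairwise-disjoint unit disks, and a connector path needs at least one private such disk as a neighbor of $X$; combining with the factor $3$ from triangle-regrouping and slack yields the stated constant $405$.
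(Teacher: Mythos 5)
Your high-level plan matches the paper's: fix a maximum packing, kill crossing cycles that hit $X$ in $\geq 3$ vertices by replacing them with triangles inside $X$, and then bound the remaining crossing cycles (those hitting $X$ in one or two vertices) by an exchange argument plus a constant geometric bound near $X$. However, your step (3) has a genuine gap in two places.

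First, the counting. Suppose you have three crossing cycles $C_a,C_b,C_c$, each using exactly two vertices of $X$. Their $X$-parts give six vertices of the clique $X$, which you can repack into two triangles. But you discarded three cycles and gained only two, a net loss; the same arithmetic is worse for cycles using one vertex of $X$. To make the exchange lossless you must manufacture at least one additional triangle out of the cycles' vertices \emph{outside} $X$. Your proposal never produces such a triangle.

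Second, the geometric bound itself. You argue that the relevant neighbors of $X$ have centers in a disk of radius $O(1)$ and therefore there are $O(1)$ pairwise-disjoint unit disks there. That is true but does not bound what you need: the crossing cycles are vertex-disjoint, not disk-disjoint, so they may use arbitrarily many distinct but mutually adjacent vertices of $N(X)$ (think of many unit disks all passing through a common point). $N(X)$ can be arbitrarily large even though $X$ sits in a bounded-diameter region, so a packing bound on disjoint disks says nothing about the number of straddling cycles. What the paper uses instead is the Dumitrescu--Pach result that $N(X)$ can be partitioned into $27$ cliques $S_1,\ldots,S_{27}$. This supplies exactly the missing triangle: among $\geq 55$ two-vertex crossers, by pigeonhole three hit the same $S_i$, and since $S_i$ is a clique their three $S_i$-vertices form a triangle $T_1$; together with the two triangles from the six $X$-vertices this gives a lossless replacement of three cycles by three triangles, reducing the number of crossers. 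The one-vertex crossers are handled similarly with a pigeonhole over ordered pairs $(i,j)$ of cliques, yielding two extra triangles and one from $X$. Your outline would be correct if you replaced the disjoint-disk packing bound by this clique-partition argument; without it, both the bound and the exchange fail. (Separately, note the paper's lemma statement says $405$, but the proof and the later application both establish $1485 = 27 + 2\cdot 27^2$; this is an internal inconsistency in the paper, not something your proof needs to reproduce.)
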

\begin{proof}
Let $X$ be a clique in $G$. The centers of all disks corresponding to vertices of $X$ must be inside a $2 \times 2$ square. Thus the centers of all disks corresponding to vertices in $N(X)$ must be in a $6\times 6$ square. By~\cite[Lemma 2]{DumPach09}, the vertices in $N(X)$ can be partitioned into $27$ cliques $S_1$, $S_2$, \ldots $,S_{27}$. Note that in the definition of unit disk graphs used in ~\cite[Lemma 2]{DumPach09} two vertices are adjacent if the centers of the corresponding disks is at distance at most $1$ from each other, while in this paper two vertices are adjacent if the centers of their disks are at distance at most $2$. This difference is taken into account when applying \cite[Lemma 2]{DumPach09}. We say that a cycle $C$ {\em crosses} $X$ if $C \cap X \neq \emptyset$ and  $C \setminus X \neq \emptyset$. Let $C_1, C_2, \ldots, C_p$ be a maximum cycle packing in $G$ that has the fewest cycles 
crossing $X$. Observe that any cycle $C$ that crosses $X$ intersects with $X$ in at most two vertices --- since otherwise $G[C \cap X]$ induces a triangle, say $T$  and then we can replace $C$ by $T$ in the cycle packing and  obtain a maximum size cycle packing with fewer cycles that cross $X$.  This contradicts the choice of the packing $C_1, C_2, \ldots, C_p$.

We prove that there can be at most $54$ cycles in the packing that intersect $X$ in exactly $2$ vertices. Suppose for contradiction that there are at least $55$ such cycles. Each such cycle contains at least one vertex in $N(X)$. Since each vertex in $N(X)$ is in one of the $27$ cliques $S_1, \ldots ,S_{27}$ the pigeon hole principle implies that there are three cycles $C_a$, $C_b$ and $C_c$ in the packing which all intersect $X$ in exactly two vertices and a clique $S_i$ such that $C_a \cap S_i \neq \emptyset$, $C_b \cap S_i \neq \emptyset$ and $C_c \cap S_i \neq \emptyset$. Since all cycles in the packing are vertex disjoint, this means that $S_i \cap (C_a \cup C_b \cup C_c)$ contains a triangle $T_1$. On the other hand, $X \cap (C_a \cup C_b \cup C_c)$ is a clique on $6$ vertices, and can be partitioned into two triangles $T_2$ and $T_3$. Now we can remove $C_a$, $C_b$ and $C_c$ from the proposed packing and replace them by $T_1$, $T_2$ and $T_3$. The resulting packing has the same size, but fewer cycles that cross $X$. This contradicts the choice of the packing $C_1, C_2, \ldots, C_p$.

Now we show that there can be at most $2(27\times 27)=  1458$ cycles in the packing that intersect with $X$ in exactly $1$ vertex. Every such cycle contains at least two vertices in $N(X)$. For a pair $(i,j)$ of integers $1 \leq i \leq j \leq 27$ we say that a cycle $C_a$ is an $(i,j)$ cycle if $C_a$ contains two distinct vertices $u$ and $v$ such that $u \in S_i$ and $v \in S_j$. If there are more than $1458$ cycles in the packing that intersect with $X$ in exactly $1$ vertex then there are $i$ and $j$ such that there are three $(i,j)$-cycles $C_a$, $C_b$ and $C_c$ in the packing that intersect $X$ in one vertex. Let $u_a$, $u_b$ and $u_c$ be three vertices in $C_a \cap S_i$,  $C_b \cap S_i$ and  $C_c \cap S_i$ respectively. Similarly, let $v_a$, $v_b$ and $v_c$ be the three vertices in $C_a \cap S_j$,  $C_b \cap S_j$ and  $C_c \cap S_j$ respectively. Now $T_1 = \{u_a,u_b,u_c\}$, $T_2 = \{v_a,v_b,v_c\}$ and $T_3 = X \cap (C_a \cup C_b \cap C_c)$ are vertex disjoint triangles. We can replace $C_a$, $C_b$, and $C_c$ by $T_a$, $T_b$ and $T_c$ in the cycle packing and obtain a maximum size cycle packing with fewer cycles that cross $X$, contradicting the choice of $C_1, \ldots ,C_p$. Hence there are at most $27+1458=1485$ cycles in the packing that cross $X$.
\end{proof}

\begin{theorem} 
\label{lem:CP}
{\sc Cycle Packing} admits an EPTAS on unit disk graphs.\end{theorem}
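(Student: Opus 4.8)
The plan is to follow the clique-peeling template already used in the proof of Theorem~\ref{lem:fdeptas}. First I would recall that, by~\cite{FLRSsoda2011}, {\sc Cycle Packing} is a reducible, minor-bidimensional problem with the separation property, so Theorem~\ref{thm:udmapeptas} already hands us an EPTAS on ${\cal G}_U^t$ for every fixed $t$. Hence the only obstruction to an EPTAS on all unit disk graphs is the presence of arbitrarily large cliques, and Lemma~\ref{lem:cyclepackclique} is exactly the tool that neutralizes them: a clique on $s$ vertices carries $\lfloor s/3\rfloor$ vertex-disjoint triangles, and \emph{some} maximum cycle packing of $G$ interacts with a given clique only through such internal triangles apart from at most $405$ ``crossing'' cycles.

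Concretely, I would analyze the following algorithm. Fix the threshold $\tau := \lceil 1215/\epsilon\rceil + 3$. While the current graph has a clique of size at least $\tau$, compute a maximum clique $C$ --- polynomial-time on unit disk graphs~\cite{ClarkCJ90,RaghavanS03} --- partition $C$ into $\lfloor |C|/3\rfloor$ vertex-disjoint triangles while discarding at most two leftover vertices, add these triangles to the packing under construction, delete $C$, and repeat. When no clique of size $\tau$ remains, the residual graph $G_q$ is $K_\tau$-free, hence belongs to ${\cal G}_U^\tau$; run the EPTAS of Theorem~\ref{thm:udmapeptas} on $G_q$ with error parameter $\epsilon$ and add its output. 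Return the union of all the triangle packings and the $G_q$-packing. It is plainly a feasible cycle packing, and the running time is $f(1/\epsilon)\cdot n^{O(1)}$: there are only polynomially many clique-extraction rounds, each polynomial-time, followed by one EPTAS call on a $K_\tau$-free unit disk graph.

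For the approximation guarantee, let $C_1,\dots,C_q$ be the cliques extracted in order, put $s_i := |C_i|\ge\tau$, and set $G_0 := G$, $G_i := G_{i-1}\setminus C_i$. Each $G_{i-1}$ is an induced subgraph of a unit disk graph, hence again a unit disk graph, and $C_i$ is a clique in it, so Lemma~\ref{lem:cyclepackclique} supplies a maximum packing of $G_{i-1}$ with at most $405$ cycles crossing $C_i$; among the remaining cycles, those meeting $C_i$ lie entirely inside the clique and hence number at most $\lfloor s_i/3\rfloor$ (every cycle has at least three vertices), while those avoiding $C_i$ form a packing in $G_i$. Thus $\pi(G_{i-1})\le\lfloor s_i/3\rfloor+405+\pi(G_i)$, and telescoping gives
\[
\pi(G)\;\le\;\sum_{i=1}^{q}\lfloor s_i/3\rfloor\;+\;405\,q\;+\;\pi(G_q).
\]
On the other hand the returned packing has size at least $\sum_{i=1}^{q}\lfloor s_i/3\rfloor+(1-\epsilon)\pi(G_q)$. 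Since $s_i\ge\tau$ gives $\sum_i\lfloor s_i/3\rfloor\ge q\lfloor\tau/3\rfloor\ge 405\,q/\epsilon$, substituting the displayed bound on $\pi(G_q)$ shows the returned packing has size at least
\[
\epsilon\sum_{i=1}^{q}\lfloor s_i/3\rfloor+(1-\epsilon)\pi(G)-(1-\epsilon)405\,q\;\ge\;(1-\epsilon)\pi(G),
\]
so the algorithm is an EPTAS.

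The step I expect to need the most care is the telescoping bookkeeping: one must verify that Lemma~\ref{lem:cyclepackclique} may be reapplied inside each $G_{i-1}$ (legitimate, since being a unit disk graph is a hereditary property and $C_i$ stays a clique), that the trichotomy ``inside $C_i$ / crossing $C_i$ / disjoint from $C_i$'' partitions the cycles of the chosen optimum and the ``disjoint'' part genuinely survives into $G_i$, and that the threshold $\tau=\Theta(1/\epsilon)$ is large enough for the additive loss $405q$ to be absorbed by the multiplicative slack $\epsilon\sum_i\lfloor s_i/3\rfloor$. Everything else is a verbatim instance of the clique-cleaning scheme of Theorem~\ref{lem:fdeptas}, with ``pack the clique into triangles'' playing the role of ``keep $h-1$ vertices of each clique'' there.
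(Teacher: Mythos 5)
Your proposal is correct and is essentially the paper's own proof, merely phrased iteratively with a telescoping inequality over the peeled cliques $C_1,\dots,C_q$ rather than recursively with an induction on $|V(G)|$; the trichotomy ``inside $C_i$ / crossing $C_i$ / disjoint from $C_i$'', the use of Lemma~\ref{lem:cyclepackclique} inside each hereditary residual graph, the threshold $\tau=\Theta(1/\epsilon)$, and the arithmetic absorbing the $O(q)$ additive loss into the $\epsilon\sum_i\lfloor s_i/3\rfloor$ slack all match the paper's argument. (The discrepancy between your constant $405$ and the paper's $1485$ is an internal inconsistency between the statement and the proof of Lemma~\ref{lem:cyclepackclique} in the paper itself, and is immaterial to the existence of an EPTAS.)
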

\begin{proof}
Given a unit disk graph $G$ and $\epsilon$, we choose $t$ to be $\frac{(1485\times 3)=4455}{\epsilon}$. If $G$ does not contain a clique of size $t$ then we apply the EPTAS for {\sc Cycle Packing} on ${\cal G}_U^t$ guaranteed by Theorem~\ref{thm:udmapeptas} to give a $(1-\epsilon)$-approximation for {\sc Cycle Packing}. If $G$ contains a clique $X$ of size $t$, the algorithm partitions $X$ into $\frac{|X|}{3}$ triangles $T_1, \ldots T_x$, recursively finds a $(1-\epsilon)$-approximate cycle packing $C_1, \ldots C_p$ in $G \setminus X$ and returns  $T_1,\ldots T_x, C_1, \ldots C_p$ as an approximate solution. Clearly, the algorithm terminates in $f(\epsilon)\cdot n^{O(1)}$ time, so it remains to argue that the returned solution is indeed a  $(1-\epsilon)$-approximate cycle packing of $G$. We prove this by induction on the number $n$ of vertices in $G$. Let $OPT$ be the size of the largest cycle packing in $G$.

If there is no clique of size $t$  and we apply the EPTAS for {\sc Cycle Packing} on ${\cal G}_U^t$ then clearly the returned solution is a $(1-\epsilon)$-approximation. If the algorithm finds such a clique $X$, Lemma~\ref{lem:cyclepackclique} ensures that there is a cycle packing of size $OPT$ such that at most $1485$ cycles in the packing cross $X$. All cycles in the packing that intersect with $X$ but do not cross $X$ are triangles in $X$. Hence $G \setminus X$ contains a cycle packing of size at least $OPT-\frac{|X|}{3}-1485$. By the inductive hypothesis the algorithm returns a cycle packing in $G \setminus X$ of size at least $(OPT-\frac{|X|}{3}-1485)(1-\epsilon)$. Now, $X$ contains $\frac{|X|}{3}$ triangles $T_1, \ldots T_x$. Hence, the total size of the packing returned by the algorithm is at least
$$\Big(OPT-\frac{|X|}{3}-1485\Big)(1-\epsilon) + \frac{|X|}{3} = OPT(1-\epsilon) -\Big (\frac{|X|}{3}+1485\Big)(1-\epsilon) + \frac{|X|}{3} \geq OPT(1-\epsilon)$$
since $|X| \geq t$. This concludes the proof.
\end{proof}

\paragraph{EPTAS for  {\sc (Connected) Vertex Cover} 
on Unit Ball Graphs in $\mathbb{R}^d$.}
Our results in this section are based on an observation that if for some graph class $\cal G$ the size of an optimum solution for a problem $\Pi$ 
and the number of vertices in the input graph are linearly related then to obtain EPTAS it is sufficient that $\cal G$ has sublinear 
treewidth rather than truly sublinear treewidth. The crux of this result is based on the following adaptation of the decomposition 
lemma proved in~\cite[Lemma 3.2]{FLRSsoda2011}. 
\begin{lemma}
\label{lem:thenewLemma} Let ${\cal G}$ be a hereditary graph class of sublinear treewidth with parameter $\lambda<1$, that is, 
for every $G\in \cal G$, $\tw(G)=O(|V(G)|^\lambda)$.  
For every $\epsilon < 1$ there is   $\gamma$ such that for any $G \in {\cal G}$ 
there is   $X \subseteq V(G)$ satisfying $|X| \leq \epsilon |V(G)|$ and for every connected component $C$ of $G \setminus X$,  
we have that $|C|\leq \gamma$. 
Moreover $X$ can be computed from $G$  in polynomial time.
\end{lemma}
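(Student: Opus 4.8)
The plan is to mimic the recursive balanced-separator argument used in the proof of Lemma~\ref{lem:minor_bidem}, but now driven purely by the vertex count rather than by $\pi(G)$. Since $\cal G$ is hereditary and has sublinear treewidth with parameter $\lambda<1$, every induced subgraph $H$ of a graph in $\cal G$ satisfies $\tw(H) \le c|V(H)|^\lambda$ for a fixed constant $c$. Applying Lemma~\ref{lemma:balsep1} with the all-ones weight function $w \equiv 1$ to such an $H$ yields a partition $V(H) = L \uplus S \uplus R$ with $|S| \le \tw(H)+1 \le c|V(H)|^\lambda + 1$, with $N(L),N(R)\subseteq S$, and with $\frac{|V(H)|-|S|}{3} \le |L|,|R| \le \frac{2(|V(H)|-|S|)}{3}$. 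So each recursive call on an $m$-vertex graph splits it into two pieces of size roughly between $m/3$ and $2m/3$, at the cost of putting $\mathcal{O}(m^\lambda)$ vertices into $X$. We recurse until the current piece has at most $\gamma$ vertices, where $\gamma$ is chosen depending only on $\epsilon$, $\lambda$ and $c$.

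The main work is the recurrence analysis, and here I would borrow exactly the bookkeeping device from Lemma~\ref{lem:minor_bidem}. Let $T(m)$ be the maximum number of vertices the procedure places in $X$ when run on an $m$-vertex induced subgraph. Then
\begin{align*}
T(m) \le \max_{\frac13 \le \alpha \le \frac23}\Bigl\{ T\bigl(\alpha m + cm^\lambda\bigr) + T\bigl((1-\alpha)m + cm^\lambda\bigr) + cm^\lambda + 1 \Bigr\},
\end{align*}
with $T(m)=0$ for $m \le \gamma$. As in Lemma~\ref{lem:minor_bidem}, set $\rho = \min_{1/3\le\alpha\le2/3}\alpha^\lambda + (1-\alpha)^\lambda$, which is strictly greater than $1$ because $x\mapsto x^\lambda$ is strictly concave and so $a^\lambda+b^\lambda>(a+b)^\lambda$. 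Guessing the ansatz $T(m) \le \epsilon m - q m^\lambda$ for a suitable constant $q$ and proving it by induction, the inductive step reduces to an inequality of the form $q(\rho-1)m^\lambda \ge (2c+1)m^\lambda + \text{lower-order terms}$ (plus the $\epsilon$-savings bookkeeping from the $\alpha m$ and $(1-\alpha)m$ split), which holds once $q$ is large enough and $m \ge \gamma$ with $\gamma$ large enough. Choosing $q$ and then $\gamma$ in this order — exactly the dependency pattern in Lemma~\ref{lem:minor_bidem} — closes the induction and gives $T(m)\le \epsilon m$, hence $|X| \le \epsilon|V(G)|$ for the top-level call. By construction every connected component of $G\setminus X$ is one of the leaf pieces of the recursion and so has at most $\gamma$ vertices; and since each separator is computed in polynomial time (Lemma~\ref{lemma:balsep1} via a polynomial-time constant-factor treewidth approximation) and the recursion has polynomially many nodes, $X$ is computed in polynomial time.

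One subtlety I would be careful about: the $+cm^\lambda$ terms inside the recursive arguments mean the two subproblems can, for small $m$, fail to be strictly smaller than $m$, which is why the recursion must bottom out at a threshold $\gamma$ large enough that $\alpha m + cm^\lambda < m$ and $(1-\alpha)m + cm^\lambda < m$ for all $m > \gamma$ and all $\alpha \le 2/3$; this is guaranteed for $\gamma$ sufficiently large since $\lambda<1$. A second point is that when we recurse into $G[L]$ and $G[R]$, heredity of $\cal G$ is exactly what lets us re-invoke the treewidth bound on these induced subgraphs. The hard part is purely the constant-chasing in the recurrence — there is no new structural idea beyond what Lemma~\ref{lem:minor_bidem} already contains — so the proof is essentially a simplification of that lemma's argument with $\pi(G)$ replaced by $|V(G)|$ and the separation/bidimensionality hypotheses dropped.
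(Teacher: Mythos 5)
Your proposal is essentially the paper's own proof: the paper also recurses on the all-ones weight function via Lemma~\ref{lemma:balsep1}, sets up the same recurrence for the function counting vertices placed into $X$, proves the ansatz $T_\gamma(n)\le \epsilon n - \delta n^\lambda$ by induction using $\rho=\min_{1/3\le\alpha\le2/3}\alpha^\lambda+(1-\alpha)^\lambda>1$, and chooses $\delta$ and then $\gamma$ in exactly the order you describe. The one small inaccuracy is your appeal to ``a polynomial-time constant-factor treewidth approximation'': no such algorithm is known on general graphs, and a constant-factor approximation would not run in polynomial time when $\tw(G)$ can be as large as $n^\lambda$. The paper instead invokes the polynomial-time $O(\sqrt{\log\tw(G)})$-approximation of Feige, Hajiaghayi and Lee, and absorbs the extra $\sqrt{\log}$ factor by passing from $\lambda$ to any $\lambda'\in(\lambda,1)$ with a correspondingly larger constant $\beta'$; this is a necessary adjustment your proof would need as well, but it is a local fix that leaves your argument intact.
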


\begin{proof} Let the number of vertices of $G$ be $n$, that is, $|V(G)|=n$. 
For any $\gamma \geq 1$, define $T_\gamma : \mathbb{N} \rightarrow \mathbb{N}$ such that $T_\gamma(n)$ is the smallest integer such that if 
$G \in {\cal G}$ and $|V(G)| \leq n$, then there is a $X \subseteq V(G)$ of size at most $T_\gamma(n)$ such that for 
every connected component $C$ of $G \setminus X$ we have $|C | \leq \gamma$. 
Informally, $T_\gamma(n)$ is the minimum size of a vertex set $X$ such that every connected component $C$ of $G \setminus X$ has at most 
$\gamma$ vertices. Furthermore, since ${\cal G}$ is a hereditary graph class of sublinear treewidth with parameter $\lambda$ there exists a constant $\beta$ such that $\tw(G) \leq  \beta n^\lambda$.  We will make choices for the constants $\delta$ and $\gamma$ and $\rho$ based on 
 $\lambda$, $\beta$ and $\epsilon$.  Our aim is to show that $T_\gamma(n) \leq \epsilon n$ for every $n$.

Observe that for any numbers $a>0$, $b>0$, we have $a^\lambda + b^\lambda > (a+b)^\lambda$ since $\lambda < 1$. Thus we have $\rho = \min_{1/3 \leq \alpha \leq 2/3}\alpha^\lambda + (1-\alpha)^\lambda > 1$. We choose $\delta = \frac{(2\epsilon+1)(\beta+1)}{\rho - 1}$ and $\gamma=(\frac{3\delta}{\epsilon})^{\frac{1}{1-\lambda}}$. 
If $|V(G)| \leq \gamma$ then we set $X=\emptyset$, so $T_\gamma(n) = 0 \leq \epsilon n$ for $n \leq \gamma$.  We now show that if $n \geq \gamma / 3$ then $T_\gamma(n) = 0 \leq \epsilon n- \delta n^\lambda$ by induction on $n$. For the base case if $\gamma / 3 \leq n \leq \gamma$ then the choice of $\gamma$ implies the following inequality.
$$\epsilon n - \delta n^\lambda \geq \epsilon \frac{\gamma}{3} - \delta \gamma^\lambda \geq 0 = T_\gamma(n)$$

We now consider $T_\gamma(n)$ for $n > \gamma$.  We know that the treewidth of $G$ is at most $ \beta n^\lambda$. Construct a weight function $w : V(G) \rightarrow \mathbb{N}$ such that $w(v)=1$,  for all $v\in V(G)$. By Lemma~\ref{lemma:balsep1}, there is a partition of $V(G)$ into $L$, $S$ and $R$ such that $|S| \leq  \beta n^\lambda + 1$, $N(L) \subseteq S$, $N(R) \subseteq S$, $|L | \leq 2n/3$ and $|R | \leq 2n/3$. Deleting $S$ from the graph $G$ yields two graphs $G[L]$ and $G[R]$ with no edges between them. Thus we put $S$ into $X$ and then proceed recursively in 
$G[L ]$ and $G[R ]$. 
This yields the following recurrence for $T_\gamma$.
$$
T_\gamma(n) \leq \max_{1/3 \leq \alpha \leq 2/3} T(\alpha n +  \beta n^\lambda + 1) + T((1-\alpha) n +  \beta n^\lambda + 1) + \beta n^\lambda + 1.
$$
Observe that since $n \geq \gamma$ we have $\alpha n \geq \gamma / 3$ and $(1 - \alpha n) \geq \gamma / 3$. The induction hypothesis then yields the following inequality.
%
\begin{align*}
T_\gamma(n) & \leq & \max_{1/3 \leq \alpha \leq 2/3} & T(\alpha n +  \beta n^\lambda + 1) + T((1-\alpha) n +  \beta n^\lambda + 1) + 
  \beta n^\lambda + 1 \\
& \leq & \max_{1/3 \leq \alpha \leq 2/3} & \epsilon n - \delta (\alpha n)^\lambda - \delta((1-\alpha)n)^\lambda + (2\epsilon+1)(\beta n^\lambda + 1) \\
& \leq & \max_{1/3 \leq \alpha \leq 2/3} & \epsilon n - \delta n^\lambda(\alpha^\lambda + (1-\alpha)^\lambda) + (2\epsilon+1)(\beta n^\lambda + 1) \\
& \leq & & \epsilon n - \delta n^\lambda - \delta (\rho-1) n^\lambda + (2\epsilon +1)(\beta n^\lambda +  1) \\
& \leq & & \epsilon n - \delta n^\lambda.
\end{align*}
%
The last inequality holds whenever $\delta (\rho-1) n^\lambda \geq (2\epsilon +1)(\beta n^\lambda +  1)$, which is ensured by the choice of $\delta$ and the fact that $n^\lambda \geq 1$. Thus $T_\gamma(n) \leq \epsilon n$ for all $n$. Hence there exists a set $X$ of size at most $\epsilon n$ such that for every component $C$ of $G \setminus X$ we have $|C|\leq \gamma$.


What remains is to show that $X$ can be computed from $G$ in polynomial time. The inductive proof can be converted into a recursive algorithm. The only computationally hard step of the proof is the construction of a tree-decompositon of $G$ in each inductive step. Instead of computing the treewidth exactly we use the $d^*\sqrt{\log \tw(G)}$-approximation algorithm by Feige et al.~\cite{FeigeHajLee05}, where $d^*$ is a fixed constant. Thus when we partition $V(G)$ into $L$, $S$, and $R$ using Lemma~\ref{lemma:balsep1}, the upper bound on the size of $S$ will be 
$d^*( \beta n^\lambda+1)\sqrt{\log(\beta n^\lambda)}$ instead of $ \beta n^\lambda+1$. However, for any $\lambda < \lambda' < 1$ there is a $\beta'$ such that  $d^*( \beta n^\lambda)\sqrt{\log( \beta n^\lambda)} < \beta' n^{\lambda'}$. Now we can apply the above analysis with $\beta'$ instead of $\beta$ and $\lambda'$ instead of $\lambda$ to bound the size of the set $X$ output by the algorithm. This concludes the proof of the lemma.
\end{proof}

\noindent
Using Lemma~\ref{lem:thenewLemma} we can obtain the following analogue of Proposition~\ref{thm:eptastrans} (\cite[Theorem 4.1]{FLRSsoda2011}).
\begin{theorem}
\label{thm:eptastransnew}
Let $\Pi$ be  a  
reducible graph optimization problem 
and let ${\cal G}$ be a class of graphs with sublinear treewidth such that  for every $G\in \cal G$, $\pi(G)=\Omega(|V(G)|)$.
Then $\Pi$ has an EPTAS on  ${\cal G}$.
\end{theorem}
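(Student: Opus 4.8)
The plan is to follow the proof of Proposition~\ref{thm:eptastrans} almost verbatim, replacing the use of $\eta$-transversability by the hypothesis $\pi(G)=\Omega(|V(G)|)$ together with Lemma~\ref{lem:thenewLemma}. Fix a constant $c>0$ with $\pi(G)\ge c\,|V(G)|$ for all $G\in{\cal G}$, let $\Pi'$ and $f:\mathbb{N}\to\mathbb{N}$ be the \pmm{} problem and function witnessing that $\Pi$ is reducible, and let $\kappa_0$ be a constant large enough to absorb both $\cO(|X|)$ error terms in the definition of reducibility. Given an input $G\in{\cal G}$ and an error parameter $\epsilon<1$, the first step is to set $\epsilon'$ to a suitable constant multiple of $\epsilon$ (e.g.\ $\epsilon'=c\epsilon/(4\kappa_0)$) and apply Lemma~\ref{lem:thenewLemma} with $\epsilon'$ in place of $\epsilon$; this yields, in polynomial time with a fixed exponent, a set $X\subseteq V(G)$ with $|X|\le\epsilon'|V(G)|$ such that every connected component of $G\setminus X$ has at most $\gamma=\gamma(\epsilon')$ vertices, where $\gamma$ depends only on $\epsilon$, on $\lambda$, and on the constant in the sublinear treewidth bound for ${\cal G}$. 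In particular $\tw(G\setminus X)\le\gamma-1=:\eta$, a constant depending only on $\epsilon$. (Here we use that ${\cal G}$ is hereditary, as required by Lemma~\ref{lem:thenewLemma} and as is the case for the geometric classes of interest.)

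Next I would feed $G$ and $X$ to the first algorithm guaranteed by reducibility, obtaining a graph $G'$ with $\pi'(G')=\pi(G)\pm\kappa_0|X|$ and $\tw(G')\le f(\tw(G\setminus X))\le f(\eta)=:w$, again a constant depending only on $\epsilon$. Since $w$ depends only on $\epsilon$ and $\Pi'$ is a \pmm{} problem, $\Pi'$ can be solved optimally on $G'$ in time $h(\epsilon)\cdot|V(G')|$ --- compute a tree decomposition of $G'$ of width $\cO(w)$ and run the algorithm of Borie et al.~\cite{BoriePT92} --- producing an optimal $S'$ with $P_{\Pi'}(G',S')$ and $|S'|=\pi'(G')$. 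Feeding $G$, $X$, $G'$ and $S'$ to the second algorithm guaranteed by reducibility yields a feasible solution $S$ of $\Pi$ on $G$ with
\[
\kappa_\Pi(G,S)=|S'|\pm\kappa_0|X|=\pi'(G')\pm\kappa_0|X|=\pi(G)\pm 2\kappa_0|X|.
\]
As $|X|\le\epsilon'|V(G)|\le(\epsilon'/c)\,\pi(G)$ and $\epsilon'=c\epsilon/(4\kappa_0)$, this gives $|\kappa_\Pi(G,S)-\pi(G)|\le\tfrac{\epsilon}{2}\,\pi(G)$, so $S$ is a $(1\pm\epsilon)$-approximate solution (for a minimization problem $\pi(G)\le\kappa_\Pi(G,S)$ since $S$ is feasible, and symmetrically for a maximization problem). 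All steps except the bounded-treewidth computation run in polynomial time with a fixed exponent, and that step runs in $h(\epsilon)\cdot n^{\cO(1)}$ time; hence the procedure is an EPTAS for $\Pi$ on ${\cal G}$.

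The argument has no genuinely difficult step; the only thing requiring care is the bookkeeping of the two additive $\cO(|X|)$ errors and the induced choice of $\epsilon'$. Conceptually, the role of the hypothesis $\pi(G)=\Omega(|V(G)|)$ is to upgrade the bound $|X|\le\epsilon'|V(G)|$ produced by Lemma~\ref{lem:thenewLemma} into one of the form $|X|=\cO(\epsilon'\,\pi(G))$ --- precisely the kind of guarantee that $\eta$-transversability provides in the setting of Proposition~\ref{thm:eptastrans} --- so that the constant-factor slack in the reduction can be driven below any prescribed $\epsilon$.
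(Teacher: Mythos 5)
Your proof is correct and follows essentially the same route as the paper's: apply Lemma~\ref{lem:thenewLemma} to obtain a small vertex set $X$ whose removal leaves bounded-size (hence bounded-treewidth) components, feed $G$ and $X$ to the two algorithms guaranteed by reducibility with a Borie-et-al.\ solve on the bounded-treewidth graph $G'$ in between, and use the hypothesis $\pi(G)=\Omega(|V(G)|)$ to convert the bound $|X|\le\epsilon'|V(G)|$ into $|X|=\cO(\epsilon'\pi(G))$, choosing $\epsilon'$ as a suitable constant multiple of $\epsilon$ to absorb the additive $\cO(|X|)$ error terms. The only cosmetic difference is that you bound $\tw(G\setminus X)$ by $\gamma-1$ directly from the component-size bound, whereas the paper goes through the sublinear-treewidth bound once more; both yield a constant depending only on $\epsilon$, so the arguments coincide.
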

\begin{proof}
Let $G$ be the input to $\Pi$, $|V(G)|=n$ and $\epsilon > 0$ be fixed. Since for every $G\in \cal G$, $\pi(G)=\Omega(n)$, we have 
that $\pi(G)\geq \rho_1 n$, for a fixed constant $\rho_1$.    
Furthermore, since ${\cal G}$ is a hereditary graph class of sublinear treewidth with parameter $\lambda$, there exists a constant $\beta$  
such that  $\tw(G) \leq  \beta n^\lambda$.  Let $\epsilon'$ be a constant to be selected later. By Lemma~\ref{lem:thenewLemma}, there exist $\gamma$, $\lambda' < 1$ and $\beta'$ depending on $\epsilon'$, $\lambda$ and $\beta$ such that given $G$  a set $X$ with the following properties can be found in polynomial time. First $|X| \leq \epsilon' n$, and secondly for every component $C$ of $G \setminus X$ we have that 
$|C|  \leq \gamma$. Thus $\tw(G \setminus X) = \tau \leq \beta' \gamma^{\lambda'} $.  
Since $\Pi$ is reducible, there exists  a \pmm{} problem $\Pi'$, a constant $\rho_2$ and a function $f : \mathbb{N} \rightarrow \mathbb{N}$ such that: 
\begin{enumerate}\setlength\itemsep{-1.2mm}
 \item there is a polynomial time algorithm that given $G$ and $X \subseteq V(G)$ outputs $G'$ such that $|\pi'(G') - \pi(G)| \leq \rho_2 |X|$ and $\tw(G') \leq f(\tau)$,
 \item there is a polynomial time algorithm that given $G$ and $X \subseteq V(G)$, $G'$ and a vertex (edge) set $S'$ such that $P_{\Pi'}(G',S')$ holds outputs $S$ such that $\phi_\Pi(G,S)$ holds and $|\kappa_\Pi(G,S)-|S'|| \leq \rho_2 |X|$.
\end{enumerate}
We constuct $G'$ from $G$ and $X$ using the first polynomial time algorithm. Since $\tw(G') \leq f(\tau)$ we can use an extended version of Courcelle's theorem~\cite{Courcelle90,Courcelle97}  given by  Borie et al.~\cite{BoriePT92} to find an optimal solution $S'$ to $\Pi'$ in $g(\epsilon')|V(G')|$ time. By the properties of the first polynomial time algorithm, $||S'|-\pi(G)| \leq \rho |X|$ where $\rho = \max(\rho_1,\rho_2)$. We now use the second polynomial time algorithm to construct a  solution $S$ to $\Pi$ from $G$, $X$, $G'$ and $S'$. The properties of the second algorithm ensure $\phi_\Pi(G,S)$ holds and that $|\kappa_\Pi(G,S)-|S'|| \leq \rho |X|$, and hence $|\kappa_\Pi(G,S)-\pi(G)| \leq 2 \rho |X| \leq 2 \rho^2 \epsilon' \pi(G)$. Choosing $\epsilon' = \frac{\epsilon}{2\rho^2}$ yields $|\kappa_\Pi(G,S)-\pi(G)| \leq \epsilon \pi(G)$, proving the theorem.
\end{proof}


\begin{lemma}
\label{lem:boundingsolutionsize}
 Let $G$ be an intersection graph of unit balls in $\mathbb{R}^d$, for a fixed $d$. 
 If $G$ does not contain an isolated vertex then the size of minimum (connected) 
vertex cover is at least $|V(G)|/f(d)$, where $f(d)=2(2^{0.401d(1+o(1))}+1)$. 
\end{lemma}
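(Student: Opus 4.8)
The plan is to reduce the statement to an elementary double-counting argument anchored on a maximal matching, the only geometric input being an upper bound on the size of an independent set contained in a single closed neighbourhood.

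First I would fix such a graph $G$ and let $M$ be a maximal matching of $G$, writing $V(M)$ for its set of endpoints, so $|V(M)|=2|M|$. Because $M$ is maximal, $V(M)$ is a vertex cover and $I:=V(G)\setminus V(M)$ is an independent set; moreover, since $G$ has no isolated vertex, every vertex of $I$ has a neighbour, and that neighbour must lie in $V(M)$. Hence $I\subseteq\bigcup_{u\in V(M)}N(u)$ and $|V(G)|=2|M|+|I|$. Since the $|M|$ edges of $M$ are pairwise vertex-disjoint, every vertex cover — and in particular every connected vertex cover — has at least $|M|$ vertices. So it suffices to show $|I|\le\bigl(f(d)-2\bigr)|M|$.

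The geometric fact I would isolate is: for every vertex $v$ of a unit-ball graph in $\mathbb{R}^d$, the graph $G[N[v]]$ has independence number at most $c(d):=2^{0.401d(1+o(1))}$. Indeed, an independent set of size at least two inside $N[v]$ is a family of pairwise non-intersecting unit balls each meeting the ball representing $v$; translating that ball to the origin and scaling by $1/2$ turns their centres into points lying in a ball of radius $1$ that are pairwise at distance more than $1$, and comparing this configuration with a spherical code of minimum angular distance $60^{\circ}$ lets me invoke the Kabatiansky--Levenshtein upper bound on kissing numbers, which yields exactly the exponent $0.401$ (a naive volume argument would only give a bound like $3^d$). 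Feeding this back, every set $I\cap N(u)$ is an independent subset of $N[u]$, hence has at most $c(d)$ vertices, so $|I|\le\sum_{u\in V(M)}|I\cap N(u)|\le 2c(d)|M|$. Therefore $|V(G)|=2|M|+|I|\le 2\bigl(c(d)+1\bigr)|M|\le f(d)\,|M|$, and dividing by $f(d)$ while using that every (connected) vertex cover has at least $|M|$ vertices gives the claimed bound; the connected case comes for free, since a connected vertex cover is in particular a vertex cover.

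The matching and double-counting bookkeeping is routine; the hard part will be the geometric lemma — in particular recasting ``points in a ball that are pairwise far apart'' as a spherical-code/kissing-number problem so that the Kabatiansky--Levenshtein bound is applicable, and then tracking constants carefully enough to match the stated $f(d)=2\bigl(2^{0.401d(1+o(1))}+1\bigr)$ (the factor $2$ coming from $|V(M)|=2|M|$ versus a vertex cover of size only $|M|$, and the ``$+1$'' from bookkeeping about whether $v$ itself is counted among the local independent set).
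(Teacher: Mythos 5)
Your proof is correct and matches the paper's argument essentially line for line: take a maximal matching $M$, observe that $|M|$ lower-bounds every (connected) vertex cover, note that $I=V(G)\setminus V(M)$ is an independent set dominated by $V(M)$, and bound $|I|$ by summing a local independence-number estimate over the $2|M|$ endpoints of $M$, with that estimate supplied by the Kabatiansky--Levenshtein kissing-number bound. The only cosmetic difference is that the paper phrases the local bound as ``$N(v)$ contains no independent set larger than $\tau_d+1$'' while you phrase it on $N[v]$ and spell out the translate-and-scale reduction to a spherical code a bit more explicitly; both land comfortably inside the $o(1)$ slack built into $f(d)$.
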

\begin{proof}
Let $G$ be an intersection graph of unit balls in $\mathbb{R}^d$, for a fixed $d$. For our proof we need the 
concept of {\em kissing number}. The kissing number $\tau_d$ is the maximum number of 
non overlapping $d$-dimensional unit balls of equal size that can touch a unit ball in $\mathbb{R}^d$.  
Kabatiansky and Levenshtein~\cite{KabatianskyL78} showed that $\tau_d \leq 2^{0.401d(1+o(1))}$. This implies that for any 
vertex $v\in V(G)$, $N(v)$ does not contain an independent set of size bigger than $\tau_d+1$.  
 
Given a graph $G$ we compute a maximal matching, say $M$. 
Clearly the size of $M$ is a lower bound on the size of a 
minimum (connected) vertex cover. Let $V(M)$ be the set of end points of edges in $M$ and $I=V(G)\setminus V(M)$. 
Clearly $I$ is an independent set. Furthermore every vertex in $I$ is adjacent to some vertex in $V(M)$. Hence 
we have that $|I|\leq |V(M)|(\tau_d+1)$. This implies that $|V(G)|=|V(M)|+|I|\leq 2|M|+2|M|(\tau_d+1)$. The last inequality 
implies the lemma.
\end{proof}

Finally, we note that every graph $G$, that is, an intersection graph of unit balls in $\mathbb{R}^d$, with 
maximum clique size $\Delta$ has the property that every point in $\mathbb{R}^d$ is in at most $\Delta$ 
unit balls. This together with result from~\cite{MillerTTV97} implies that the treewidth of $G$ is $c_d\Delta^{1/d}|V(G)|^{1-\frac{1}{d}}$, where $c_d$ is a constant depending only on $d$.  
This implies that  an intersection graph of unit balls in $\mathbb{R}^d$ with bounded maximum clique has sublinear treewidth. So an 
EPTAS for {\sc Connected Vertex Cover} and {\sc Vertex Cover} can be obtained along the similar lines   
as in Theorems~\ref{lem:fdeptas} and \ref{lem:cvcunitmap} and finally uses Theorem~\ref{thm:eptastransnew} instead  of 
Theorem~\ref{thm:eptastrans} to arrive to the following result. 
\begin{theorem}
\label{lem:alldvccvc}
 {\sc Connected Vertex Cover} and {\sc Vertex Cover} admit an EPTAS on unit ball graphs of fixed dimension. 
\end{theorem}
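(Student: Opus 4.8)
The plan is to mimic the clique-cleaning arguments of Theorems~\ref{lem:fdeptas} and~\ref{lem:cvcunitmap}, but with the truly-sublinear-treewidth engine (Theorem~\ref{thm:eptastrans}) replaced by the sublinear-treewidth engine of Theorem~\ref{thm:eptastransnew}. Fix the dimension $d$ and an integer $\Delta$, and let ${\cal G}^{\Delta}_{d}$ be the class of $K_{\Delta}$-free intersection graphs of unit balls in $\mathbb{R}^{d}$. This class is hereditary, and since every graph in it has ply at most $\Delta$, the separator theorem of Miller et al.~\cite{MillerTTV97} applied recursively gives $\tw(G)=O(\Delta^{1/d}|V(G)|^{1-1/d})$ for every $G\in{\cal G}^{\Delta}_{d}$; hence ${\cal G}^{\Delta}_{d}$ has sublinear treewidth with parameter $\lambda=1-1/d<1$. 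If in addition $G$ has no isolated vertex, then Lemma~\ref{lem:boundingsolutionsize} gives $\pi(G)\geq|V(G)|/f(d)=\Omega(|V(G)|)$, both for {\sc Vertex Cover} and for {\sc Connected Vertex Cover}. Both problems are reducible --- {\sc Vertex Cover} because it is {\fd} with ${\cal F}=\{K_{2}\}$, and {\sc Connected Vertex Cover} by the argument recalled in the proof of Theorem~\ref{lem:cvcunitmap} following~\cite{FLRSsoda2011}. Therefore, after a preprocessing step that deletes isolated vertices (never needed in a vertex cover, and harmless for a connected one, so that the bound $\pi=\Omega(n)$ used in the proof of Theorem~\ref{thm:eptastransnew} holds for the input), Theorem~\ref{thm:eptastransnew} gives an EPTAS for each of the two problems on ${\cal G}^{\Delta}_{d}$, for every fixed $\Delta$.

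It then remains to reduce a general unit ball graph in $\mathbb{R}^{d}$ to a member of ${\cal G}^{\Delta}_{d}$ by clique cleaning, exactly as in Theorems~\ref{lem:fdeptas} and~\ref{lem:cvcunitmap}. Given $G$ and $\epsilon$, set a threshold $s=\frac{2(1+\epsilon)}{\epsilon}$ for {\sc Vertex Cover} and $s=\frac{2+\epsilon}{\epsilon}$ for {\sc Connected Vertex Cover}; repeatedly find a clique $C$ with $|C|\geq s$, add all of $C$ to the solution, and recurse on $G\setminus C$. When no clique of size $s$ remains, the residual graph lies in ${\cal G}^{\Delta_{0}}_{d}$ for a constant $\Delta_{0}$ depending only on $\epsilon$ and $d$, so we finish with the EPTAS of the previous paragraph --- for {\sc Connected Vertex Cover} first on the associated annotated problem $\Pi'$, which is likewise reducible and, after removing isolated vertices, has $\pi'=\Omega(n)$, and then spending at most $q-1$ extra vertices to reconnect the $q$ removed cliques, exactly as in Theorem~\ref{lem:cvcunitmap}. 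The approximation analysis is verbatim that of Theorems~\ref{lem:fdeptas} and~\ref{lem:cvcunitmap}: each removed clique $C_{i}$ meets an optimum solution $X$ in at least $|C_{i}|-1$ vertices, and the choice of $s$ forces $|C_{i}|\leq(1+\epsilon)|X\cap C_{i}|$ (respectively $|C_{i}|+1\leq(1+\epsilon)|X\cap C_{i}|$), so summing the contributions of the removed cliques and of the residual EPTAS yields a $(1+\epsilon)$-approximation.

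The step that genuinely differs from the unit-disk and map-graph proofs --- and the main obstacle --- is the clique-finding subroutine: unlike for unit disk graphs and map graphs~\cite{ChenGP98,ClarkCJ90}, no polynomial-time maximum-clique algorithm is available for unit ball graphs in $\mathbb{R}^{d}$ with $d\geq 3$, and brute-forcing a clique of size $s=\Theta(1/\epsilon)$ would cost $n^{\Theta(1/\epsilon)}$, which is not an EPTAS. I would replace it by a ``gap'' procedure exploiting the kissing number. Since $\tau_{d}\leq 2^{0.401d(1+o(1))}$~\cite{KabatianskyL78}, for every vertex $v$ the graph $G[N[v]]$ and each of its induced subgraphs has independence number at most $\alpha:=\tau_{d}+1$; hence greedily extending a clique inside $G[N[v]]$ --- at each step picking, inside the current common neighbourhood $U$, a vertex that dominates at least $|U|/\alpha$ of $U$, which exists because a maximal independent set of $U$ has at most $\alpha$ vertices --- runs in polynomial time and outputs a clique of size $\Omega(\log_{\alpha}|N[v]|)$. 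Running this from every vertex and keeping the largest clique found therefore either exhibits a clique of size at least $s$ (which we delete and recurse), or certifies that $\omega(G)$ is bounded by a constant $\Delta_{0}=\alpha^{O(s)}$ depending only on $\epsilon$ and $d$, which is all the cleaning loop and the final invocation of Theorem~\ref{thm:eptastransnew} require. Since every iteration of the cleaning loop removes at least $s$ vertices and does polynomial work, and the final EPTAS call runs in $g(\epsilon,d)\cdot n^{O(1)}$ time, the whole algorithm is an EPTAS, completing the proof.
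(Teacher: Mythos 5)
Your proof matches the paper's route exactly: sublinear treewidth for $K_\Delta$-free unit ball graphs via Miller et al.~\cite{MillerTTV97}, the lower bound $\pi(G) = \Omega(|V(G)|)$ from Lemma~\ref{lem:boundingsolutionsize}, reducibility of both problems, Theorem~\ref{thm:eptastransnew} in place of Proposition~\ref{thm:eptastrans}, and clique cleaning exactly as in Theorems~\ref{lem:fdeptas} and~\ref{lem:cvcunitmap}. The one substantive addition is your treatment of the clique-finding subroutine. The paper's argument for this theorem is a short paragraph that simply says to proceed ``along the similar lines'' of Theorems~\ref{lem:fdeptas} and~\ref{lem:cvcunitmap}, which in turn call a polynomial-time maximum-clique algorithm that the paper cites only for unit disk and map graphs; for unit ball graphs in $\mathbb{R}^d$ with $d \geq 3$ no such subroutine is given, and, as you observe, brute-forcing a clique of size $\Theta(1/\epsilon)$ would cost $n^{\Theta(1/\epsilon)}$ and break the EPTAS running-time bound. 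Your Ramsey-style greedy inside each $N[v]$ --- using the kissing-number bound on the independence number of neighborhoods to either return a clique of size at least $s$ or certify $\omega(G) \leq \Delta_0(\epsilon,d)$ --- is a clean, representation-free way to close this step, and it is the only genuine deviation from the paper's sketch. The remaining details you supply (deleting isolated vertices so that $\pi = \Omega(n)$ holds on the input, running the annotated problem $\Pi'$ and reconnecting the removed cliques for {\sc Connected Vertex Cover}) are all consistent with the paper's intended argument.
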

We can also obtain EPTASs for {\sc Connected Vertex Cover} and {\sc Vertex Cover} on disk graphs, as on disk 
graphs of bounded clique size we have that the size of an optimum solution 
and the number of vertices in the input graph are linearly related. 

\paragraph{Parameterized Subexponential Time Algorithms.}
In this section we show how to obtain parameterized subexponential time algorithm for several problems. 
Formally, a {\em parameterization}
of a problem is assigning an integer $k$ to each input instance and
a parameterized problem is {\em fixed-parameter tractable} {\sf  (FPT)} if there is an algorithm that solves the problem in time
$f(k)\cdot |I|^{O(1)}$, where $|I|$ is the size of the input and $f$ is an
arbitrary computable function.  
We say that a parameterized problem has a parameterized subexponential algorithm if it is solvable in time $2^{o(k)}\cdot |I|^{O(1)}$

Our basic 
idea is to find a ``large'' clique and guess the intersection of an optimal solution with this clique. 
We recursively do this until we do not have 
a large clique. Once we do not have large clique we have that the maximum degree of the graph is bounded and thus the 
treewidth comes into picture. At that point we use dynamic programming on graphs of bounded treewidth to solve the problem 
optimally. We exemplify our approach on {\sc $p$-Feedback Vertex Set}. In this problem we are given a graph $G$ 
and a positive integer $k$ and the question is 
to check whether there is a subset $F\subseteq V(G)$, $|F|\leq k$, such that $G\setminus F$ is acyclic. The set $F$ is called feedback vertex set of $G$. 

\begin{theorem}
\label{lem:subexpfvsunitmap}
{\sc $p$-Feedback Vertex Set}  admits a parameterized subexponential time algorithm on unit disk graphs and map graphs. 
\end{theorem}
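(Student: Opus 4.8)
The plan is a win/win argument: a bounded‑depth branching procedure that repeatedly peels off a large clique, followed by a bidimensionality‑based dynamic program on the clique‑sparse leaf instances. Throughout I will use that a maximum clique of a unit disk graph or of a map graph can be computed in polynomial time (as already invoked in the proof of Theorem~\ref{lem:fdeptas}), and that both classes are closed under vertex deletion, so every recursive instance stays in the class.

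First I would fix a threshold $t_0 := \lfloor k^{1/7}\rfloor$; the only properties of $t_0$ I need are $t_0=\omega(\log k)$ and $t_0=o(k^{1/6})$. Given an instance $(G,k)$, I compute a maximum clique $C$ of $G$. If $|C|\ge k+3$ I answer \textbf{no}, since $G\setminus F$ is a forest --- hence triangle‑free --- for every feedback vertex set $F$, forcing $|F\cap C|\ge |C|-2\ge k+1$. If $t_0\le|C|\le k+2$ I branch: any feedback vertex set $F$ with $|F|\le k$ leaves at most two vertices of $C$, so it suffices to branch over the $\binom{|C|}{2}\le\binom{k+2}{2}$ choices of the pair of vertices of $C$ to keep, delete the remaining set $D$ (with $|D|=|C|-2$), and recurse on $(G\setminus D,\, k-|D|)$; if the optimal solution deletes even more of $C$, those extra deletions are simply performed later inside the recursive call. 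Correctness of a single branching step is immediate in both directions, so $(G,k)$ is a yes‑instance iff some leaf of the recursion is. Finally, if $|C|<t_0$ the graph is $K_{t_0}$‑free and I pass it to the dynamic‑programming phase below. Each branch drops the budget by $|D|=|C|-2\ge t_0-2$, so every root‑to‑leaf path has length $O(k/t_0)$ and the recursion tree has at most $\bigl(\binom{k+2}{2}\bigr)^{O(k/t_0)}=2^{O(k^{6/7}\log k)}=2^{o(k)}$ leaves.

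At a leaf I must decide whether a $K_{t_0}$‑free graph $G'$ --- still a unit disk graph, resp.\ a map graph --- has a feedback vertex set of size at most the remaining budget $k'\le k$. Here I would use bidimensionality: {\sc Feedback Vertex Set} is minor‑bidimensional, so there is $\delta>0$ such that the $(r\times r)$‑grid has minimum feedback vertex set at least $\delta r^2$; since this quantity is minor‑monotone, a feedback vertex set of size $\le k'$ in $G'$ rules out a $(\lceil\sqrt{k'/\delta}\rceil+1)\times(\lceil\sqrt{k'/\delta}\rceil+1)$‑grid minor. Feeding this into the linear excluded‑grid bounds for clique‑free geometric graphs --- Observation~\ref{obs:timpliesd} together with Lemma~\ref{lem:udglingrid} in the unit disk case, and Lemma~\ref{lem:maplingrid} in the map case --- gives $\tw(G')=O(t_0^3\sqrt{k'})=O(k^{13/14})$ for unit disk graphs and $\tw(G')=O(t_0\sqrt{k'})=O(k^{9/14})$ for map graphs, both $o(k)$. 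A polynomial‑time treewidth approximation \cite{FeigeHajLee05} on $G'$ then either reveals that $\tw(G')$ is too large (up to the approximation ratio) for a size‑$k'$ feedback vertex set to exist, in which case the leaf answers \textbf{no}, or delivers a tree decomposition of width $w=o(k)$, on which I solve {\sc Feedback Vertex Set} exactly --- ``$G'\setminus S$ is a forest'' is a CMSO‑expressible predicate, so by the algorithm of Borie et al.~\cite{BoriePT92} (equivalently, a direct tree‑decomposition dynamic program) this runs in $2^{O(w\log w)}n^{O(1)}=2^{o(k)}n^{O(1)}$ time.

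Putting the phases together, $(G,k)$ is a yes‑instance iff some leaf answers \textbf{yes}; over the $2^{o(k)}$ leaves --- each costing $2^{o(k)}n^{O(1)}$ --- plus the polynomial‑time maximum‑clique computations at the $2^{o(k)}$ internal nodes, the total running time is $2^{o(k)}n^{O(1)}$, the desired parameterized subexponential bound. The main obstacle, and the reason for the unusual choice of $t_0$, is balancing the two phases against each other: the branching tree has $(\mathrm{poly}(k))^{O(k/t_0)}$ leaves, which is subexponential only when $t_0=\omega(\log k)$, whereas the post‑cleaning treewidth estimate --- $O(t_0^3\sqrt{k})$ in the worse (unit disk) case --- is $o(k)$ only when $t_0=o(k^{1/6})$; any $t_0$ strictly between these thresholds, such as $k^{1/7}$, makes both the number of branches and the dynamic program subexponential in $k$. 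The same template --- peel off large cliques by branching, then apply the bidimensional grid bound together with a CMSO dynamic program on bounded treewidth --- yields the analogous parameterized subexponential algorithms for the other problems of the introduction on these graph classes.
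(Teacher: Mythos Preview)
Your proof is correct and follows essentially the same strategy as the paper: branch on large cliques (guessing which at most two vertices survive), and once the graph is $K_{t_0}$-free invoke the linear excluded-grid lemmas (Lemmas~\ref{lem:udglingrid} and~\ref{lem:maplingrid}) together with the bidimensionality of {\sc Feedback Vertex Set} to bound the treewidth and finish by dynamic programming. The only cosmetic differences are that the paper phrases the treewidth bound via the truly-sublinear-treewidth theorems (with the optimum feedback vertex set playing the role of $X$) and optimizes the threshold to $k^{1/4}$ to obtain $2^{O(k^{3/4}\log k)}n^{O(1)}$, whereas you pick $t_0=k^{1/7}$ and only claim $2^{o(k)}n^{O(1)}$; your simplification of the branching (always deleting exactly $|C|-2$ vertices and deferring any further deletions to the recursive call) is also sound.
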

\begin{proof}
We give a subexponential time parameterized  
algorithm on map graphs. An algorithm on unit disk graphs is similar. The algorithm proceeds as follows. Given an instance $(G,k)$, it finds a maximum clique 
$C$ of $G$. Recall that we can find a maximum clique in unit disk graphs and map graphs in polynomial time~\cite{ChenGP98,ChenGP02,ClarkCJ90,RaghavanS03}. 
If $|C|>k+2$, then we return that $G$ does not have feedback vertex set of size at most $k$. 
Next we check whether  $|C|\leq k^\epsilon$ ($\epsilon$ to be fixed later).  If yes then by Theorem~\ref{thm:tstmapgraphs} we know that 
$\tw(G)\leq O(k^{0.5+\epsilon})$. In this case we apply the known algorithm for {\sc Feedback Vertex Set}, that given a tree 
decomposition of width $t$ of a graph $G$ on $n$ vertices, 
finds a minimum sized feedback vertex set in time $2^{O(t\log t)}n^{O(1)}$. Hence in this case 
the running time of our algorithm will be $2^{O(k^{0.5+\epsilon}\log k)} n^{O(1)}$. Now we consider the case when $|C|> k^{\epsilon}$. 
We know that for any feedback vertex set $F$ of $G$,  we have that $|F\cap C|\geq |C|-2$. So we guess the intersection $X=F\cap C$ 
and recursively solve the problem on $(G\setminus X,k-|X|)$. If for any guess we have an yes answer we return yes, else,  
we return no. The running time of this step is guided by the following recurrence 
$T(k)\leq {|C| \choose 2}\cdot T(k-(|C|-2))+|C| \cdot T(k-(|C|-1))+T(k-|C|)$, where the terms ${|C| \choose 2}\cdot T(k-(|C|-2))$, 
$|C|\cdot T(k-(|C|-1))$, $T(k-|C|)$ correspond to choosing $|C|-2$ vertices in 
$F$ from $C$, $|C|-1$ vertices in $F$ from $C$ and $|C|$ vertices in $F$ from $C$, respectively. Roughly,   
$T(k)\leq 3|C|^2 \cdot T(k-|C|)$. This asymptotically solves to $(3|C|)^{2k/|C|}$, which is equal to $2^{O(\frac{2k \log |C|}{|C|})}\leq 2^{O(\frac{2k \log k}{|C|})}$. Hence as $|C|$ increases we have that the function  $2^{O(\frac{2k \log k}{|C|})}$ decreases. Thus the worst case running time is achieved when  $|C|=k^\epsilon$ and hence
this is equal to $ 2^{O(k^{1-\epsilon} \log k)}$. Now we choose $\epsilon$ in  a way that the running time for branching on clique is same as when we 
run a dynamic programming algorithm on graphs of bounded treewidth. Thus we choose an $\epsilon$ such that $2^{O(k^{1-\epsilon} \log k)}=
2^{O(k^{0.5+\epsilon})}$. This gives us that $\epsilon=1/4$ is asymptotically best possible. Thus our algorithm runs in time 
$2^{O(k^{0.75}\log k)} n^{O(1)}=2^{o(k)} n^{O(1)}$. 
\end{proof}

Next we show  that in fact {\sc $p$-(Connected) Vertex Cover}  admits a parameterized subexponential time algorithms 
on Unit Ball Graphs in $\mathbb{R}^d$.

\begin{theorem}
\label{lem:subexpvccvcalld}
 {\sc $p$-Connected Vertex Cover} and {\sc $p$-Vertex Cover} admit a parameterized subexponential time algorithm 
on unit ball graphs of fixed dimension. 
\end{theorem}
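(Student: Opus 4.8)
The plan is to adapt the branching-plus-treewidth scheme of Theorem~\ref{lem:subexpfvsunitmap} to \textsc{$p$-Vertex Cover} and \textsc{$p$-Connected Vertex Cover} on unit ball graphs in $\mathbb{R}^d$, using Lemma~\ref{lem:boundingsolutionsize} as the key structural ingredient that replaces the linear grid theorems available for unit disk and map graphs. First I would observe that if $G$ is a unit ball graph in $\mathbb{R}^d$ with no isolated vertex (isolated vertices can be discarded for \textsc{Vertex Cover} and, for \textsc{Connected Vertex Cover}, simply cause rejection or are handled trivially), then by Lemma~\ref{lem:boundingsolutionsize} any (connected) vertex cover has size at least $|V(G)|/f(d)$; hence a yes-instance with parameter $k$ has at most $f(d)\cdot k$ vertices. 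Therefore any algorithm running in $2^{o(n)}$ time on an $n$-vertex unit ball graph immediately yields a $2^{o(k)}\cdot |I|^{O(1)}$ parameterized algorithm, so it suffices to solve the (unparameterized) optimization problems in subexponential time in $n$.

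Next I would run the clique-cleaning recursion exactly as in Theorem~\ref{lem:subexpfvsunitmap}: find a maximum clique $C$ (polynomial-time solvable for unit ball graphs since a clique corresponds to balls whose centers all lie in a ball of radius $2$, so one can enumerate candidate clique regions), and if $|C|$ exceeds a threshold, branch on the intersection of the optimal solution with $C$. For \textsc{Vertex Cover} one keeps all but at most one vertex of $C$, so the branching factor is $|C|+1$ and the recurrence is essentially $T(n)\le (|C|+1)\,T(n-|C|+1)$, which solves to $2^{O((n/|C|)\log|C|)}$; for \textsc{Connected Vertex Cover} the same bound holds since again all but one vertex of any clique must be in the cover (and the ``+1'' extra vertices needed to connect are absorbed into the analysis as in Theorem~\ref{lem:cvcunitmap}). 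Once no clique of size exceeding the threshold $t$ remains, the graph has bounded maximum clique; by the treewidth bound $\tw(G)=c_d\,\Delta^{1/d}|V(G)|^{1-1/d}$ stated just before Theorem~\ref{lem:alldvccvc}, with $\Delta\le t$ this gives $\tw(G)=O(n^{1-1/d})$, and one solves \textsc{Vertex Cover} / \textsc{Connected Vertex Cover} by dynamic programming on a tree decomposition in time $2^{O(\tw(G))}\cdot n^{O(1)}$ (for \textsc{Connected Vertex Cover} using the standard $2^{O(\tw)}$ connectivity DP, e.g.\ rank-based or Cut\&Count techniques, which is sufficient here). Balancing the branching cost $2^{O((n/t)\log t)}$ against the DP cost $2^{O(t^{1/d}n^{1-1/d})}$ (recall $\Delta\le t$) by choosing $t$ as an appropriate polynomial in $n$ makes both exponents $n^{1-\Omega(1)}$, hence $2^{o(n)}$, and translating back through $n\le f(d)k$ yields the claimed $2^{o(k)}\cdot |I|^{O(1)}$ running time.

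The main obstacle I expect is twofold: first, handling \textsc{Connected Vertex Cover} cleanly in the recursion, since after removing the guessed clique intersection $X$ the connectivity requirement no longer factors across the recursive subinstances — one must, as in the proof of Theorem~\ref{lem:cvcunitmap}, pass to an annotated problem where components of the partial solution are required to touch the neighborhood of the already-chosen vertices, and argue that the extra vertices needed to stitch the $O(n/t)$ cliques together do not blow up the solution size beyond $k$; and second, making sure the balancing of parameters is valid for every fixed $d$ simultaneously, i.e.\ that the constants $c_d$ and $f(d)$ only affect the hidden constants in the exponent and not the sublinearity, which is where the explicit bound $f(d)=2(2^{0.401d(1+o(1))}+1)$ from Lemma~\ref{lem:boundingsolutionsize} and the $d$-dependence of the separator theorem of Miller et al.\ must be tracked carefully. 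Modulo these points, the argument is a routine combination of Lemma~\ref{lem:boundingsolutionsize}, the clique-branching recurrence of Theorem~\ref{lem:subexpfvsunitmap}, and bounded-treewidth dynamic programming.
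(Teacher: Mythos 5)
Your proposal follows essentially the same route as the paper: use Lemma~\ref{lem:boundingsolutionsize} to reduce to instances with $n=O(k)$, branch on large cliques until the maximum clique is bounded by a threshold $t$, apply the Miller et al.\ bound to get treewidth $O(t^{1/d}n^{1-1/d})$, solve by tree-decomposition DP, and balance $t$ against $k$ (the paper takes $t=k^{1/(d+1)}$). The one place where the paper is more explicit than you are is the connected case: it carries a partial solution $X$ through the recursion, bounds the number $\eta_X$ of connected components of $G[X]$ by $k^{1-\epsilon}$ (since each branched clique contributes at least $k^{\epsilon}$ vertices to one component), contracts each component to a single vertex to form $G^*$, and then argues $\tw(G^*)\leq \tw(G')+\eta_X$ before running the DP -- a small but necessary piece of bookkeeping that you gesture at (``annotated problem'', ``stitching the cliques together'') but do not carry out, and it is exactly what keeps the treewidth, and hence the DP exponent, subexponential in $k$.
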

 \begin{proof}
Our algorithm for {\sc $p$-(Connected) Vertex Cover} follows along the same line as for {\sc $p$-Feedback Vertex Set}. 
We only outline an algorithm for {\sc $p$-Connected Vertex Cover} here.  

The algorithm proceeds as follows. Given an instance $(G,k)$,  we first check whether 
$k\geq |V(G)|/f(d)$, where $f(d)=2(2^{0.401d(1+o(1))}+1)$. By Lemma~\ref{lem:boundingsolutionsize} we know that 
if $k< |V(G)|/f(d)$, then there is no connected vertex cover of size at most $k$ and hence the answer is no. 
Else we have that $|V(G)|=O(k)$. To implement our algorithm we need a slight 
generalization of problem. We keep a triple $(G',k,X)$ for this problem, where $G'$ 
is the current graph and the objective is to find a set $F\subseteq V(G')$ such that $|F|\leq k$, $F$ is a vertex cover of $G'$ 
and $G[X\cup F]$ is a connected vertex cover of $G$. Essentially the graph $G'$ will be obtained after branching on cliques and the set $X$ will store the 
partially constructed solution so far. This allows us to check connectedness in the whole graph $G$. 
Now the algorithm finds a maximum clique $C$ of $G'$. 
If $|C|>k+1$, then we return that $G'$ does not have a desired set $F$ of size at most $k$. 
Next we check whether  $|C|\leq k^\epsilon $ ($\epsilon$ to be fixed later). 

We first consider the case when $|C|> k^{\epsilon}$.  
We know that for any vertex cover $F$ of $G'$,  we have that $|F\cap C|\geq |C|-1$. So we guess the intersection 
$Z=F\cap C$ and recursively solve the problem on $(G'\setminus Z,k-|Z|,X\cup Z)$. If for any guess we have an 
yes answer we return yes else we return no. The running time of this step is guided by the following recurrence 
$T(k)\leq |C| \cdot T(k-(|C|-1))+T(k-|C|)$,  where the terms  
$|C|\cdot T(k-(|C|-1))$, $T(k-|C|)$ correspond to choosing $|C|-1$ vertices in $F$ from $C$ and $|C|$ vertices in $F$ from $C$, respectively. 
Roughly $T(k)\leq (2|C|) T(k-|C|)$. This asymptotically solves  to $(2|C|)^{k/|C|}$, which is equal to $2^{O(\frac{k \log |C|}{|C|})}\leq 2^{O(\frac{k \log k}{|C|})}$. Hence as $|C|$ increases we have that the function  $2^{O(\frac{2k \log k}{|C|})}$ decreases. Thus the worst case running time is 
achieved when  $|C|=k^\epsilon$ and hence 
this is equal to $ 2^{O(k^{1-\epsilon} \log k)}$.

In the other case we have that  $|C|\leq  k^{\epsilon}$. As discussed before Lemma~\ref{lem:alldvccvc}, by using result 
from~\cite{MillerTTV97} we have that the treewidth of $G'$ is $c_dk^{\epsilon/d}|V(G)|^{1-\frac{1}{d}}=O(k^{1-(1-\epsilon)\frac{1}{d}})$, 
where $c_d$ is a constant depending only on $d$.  In this case we apply a modification of known algorithm for 
{\sc Connected Vertex Cover}, that given a tree 
decomposition of width $t$ of a graph $G^*$ on $n$ vertices, 
finds a minimum sized connected vertex cover in time $2^{O(t\log t)}n^{O(1)}$~\cite{Moser2005}. To solve our problem we do as follows. We first upper 
bound the number of connected components, $\eta_X$, in $G[X]$ by $k^{1-\epsilon}$. Recall that $X$ has been constructed by branching on cliques of 
size at least $k^\epsilon +1$ and thus from each such clique we have at least $k^\epsilon$ vertices in $X$ and vertices from one clique are in one 
component. Thus  $\eta_X \leq k/k^{\epsilon}=k^{1-\epsilon}$. Now we construct a graph $G^*$ as follows. Consider the 
graph $G[X \cup V(G')]$ and contract every connected component in $G[X]$ to a single vertex. Now in the graph $G^*$ the objective is to find a connected 
vertex cover of size at most $k+\eta_X$ such that it contains all the vertices corresponding to connected components in $G[X]$. Now the 
$\tw(G^*)\leq \tw(G')+\eta_X\leq O(k^{1-(1-\epsilon)\frac{1}{d}}+k^{1-\epsilon})$.  Hence in this case the running time of our algorithm is  
$2^{O((k^{1-(1-\epsilon)\frac{1}{d}}+k^{1-\epsilon})\log k)} n^{O(1)}$. 

Now we choose $\epsilon$ in  a way that the running time for branching on clique is same as when we 
run a dynamic programming algorithm on graphs of bounded treewidth. Thus, we choose an $\epsilon$ such that $2^{O(k^{1-\epsilon} \log k)}=
2^{O((k^{1-(1-\epsilon)\frac{1}{d}}+k^{1-\epsilon})\log k)}$. This gives us  that $\epsilon=1/(d+1)$ is asymptotically best possible. 
Thus, our algorithm runs in time $2^{O((k^{1-(1-\epsilon)\frac{1}{d}}+k^{1-\epsilon})\log k)} n^{O(1)}=2^{o(k)} n^{O(1)}$ for every fixed $d$. 
This gives us the desired result. 
\end{proof}

\paragraph{Tractability Borders.}
It is natural to ask how far our approach can be generalized, and in particular, whether many of the  problems discussed so far have EPTASs and parameterized subexponential time algorithms on unit ball graphs in dimension higher than two. In this section we show that one should not expect equally general results for unit ball graphs of dimension at least three. In particular, we show that {\sc Feedback Vertex Set} on Unit Ball Graphs in $\mathbb{R}^3$ does not have an EPTAS unless $P=NP$, and that the problem does not admit a subexponential time parameterized algorithm under the Exponential Time Hypothesis of Impagliazzo, Paturi and Zane~\cite{ImpagliazzoPZ01}. 

\begin{theorem}
\label{thm:FVShard}
{\sc Feedback Vertex Set} on unit ball graphs in $\mathbb{R}^3$ does not admit a PTAS unless $P=NP$, and has no subexponential time parameterized algorithm unless the Exponential Time Hypothesis fails.
\end{theorem}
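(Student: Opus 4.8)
The plan is to establish both hardness results simultaneously via a single polynomial-time reduction that is linear in the instance size, so that it rules out a PTAS (assuming $P \neq NP$) and rules out a $2^{o(n)}$-time algorithm (assuming ETH), and then to transfer the latter to the parameterized setting. I would start from a problem that is known to be both APX-hard and to have no subexponential-time algorithm under ETH; the natural candidate is \textsc{Feedback Vertex Set} itself on graphs of bounded degree, or \textsc{Vertex Cover} on bounded-degree (e.g.\ cubic or degree-at-most-3) planar or near-planar graphs. The key geometric fact I would exploit is that \emph{any} graph of maximum degree $3$ — more generally, any graph with a ``book thickness'' or ``linear layout'' of bounded width — can be represented as a unit ball graph in $\mathbb{R}^3$: the third dimension gives enough room to route the ``edges'' as chains of unit balls without unwanted intersections, something impossible in $\mathbb{R}^2$ precisely because planar unit disk graphs cannot contain long induced paths crossing each other freely.

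Concretely, the main step is a \emph{geometric realization lemma}: given a graph $H$ of maximum degree $d$ (for a fixed small $d$), I would place the vertices of $H$ at well-separated grid points in $\mathbb{R}^3$, and for each edge of $H$ route a path of unit balls ($O(\text{polylength})$ of them) connecting the two endpoint balls, using the vertical coordinate to assign distinct ``layers'' to edges so that two edge-paths intersect only if they share an endpoint. Because each edge is subdivided into a path, the resulting unit ball graph $G$ is (up to the known fact that subdividing every edge of any graph a sufficient number of times preserves the exact feedback vertex set behavior in a controlled way) essentially $H$ with each edge subdivided, and for subdivided graphs the minimum feedback vertex set is preserved: subdividing an edge cannot decrease $\mathbf{fvs}$, and one can always choose an optimum FVS avoiding subdivision vertices, so $\mathbf{fvs}(G) = \mathbf{fvs}(H)$. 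Moreover $|V(G)| = O(|V(H)| + |E(H)|) = O(|V(H)|)$ since $H$ has bounded degree, which is the linear blow-up we need.

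From this lemma the theorem follows in three short moves. First, APX-hardness of \textsc{Feedback Vertex Set} on bounded-degree graphs (it is $NP$-hard to approximate within some constant $\alpha > 1$) together with the fact that the reduction preserves the optimum value exactly immediately gives that \textsc{FVS} on unit ball graphs in $\mathbb{R}^3$ has no PTAS unless $P = NP$. Second, \textsc{FVS} on bounded-degree graphs has no $2^{o(n)}$-time algorithm under ETH (via the standard sparsification-lemma chain from 3-SAT), and since $|V(G)| = O(|V(H)|)$, a $2^{o(n)}$-time algorithm for \textsc{FVS} on unit ball graphs in $\mathbb{R}^3$ would yield one for the bounded-degree case — contradiction. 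Third, for the parameterized statement, note that on the instances $G$ produced we have $\mathbf{fvs}(G) = \mathbf{fvs}(H) = \Omega(|V(H)|) = \Omega(|V(G)|)$ whenever $H$ is, say, an expander-like bounded-degree instance (or one can simply observe that a $2^{o(k)} n^{O(1)}$ algorithm on these linear-size instances with $k = \Theta(n)$ gives a $2^{o(n)} n^{O(1)}$ algorithm), so a subexponential parameterized algorithm also contradicts ETH.

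The main obstacle — and the part that needs the most care — is the geometric realization lemma: one must verify that the unit-ball chains representing distinct edges genuinely do not intersect except at shared endpoints, which requires a careful choice of layer separations and of the polyline routes (in $\mathbb{R}^3$ one can use the ``parallel planes'' trick, sending edge $i$ through height $\approx i\cdot c$ for a suitable constant spacing $c$, with short vertical connector segments near the endpoints), and one must keep the number of balls per edge bounded by a fixed polynomial in the local geometry so that the overall size stays linear. Handling vertices of $H$ of degree up to $d$ — i.e.\ having $d$ edge-chains emanate from a single unit ball without those chains interfering near the vertex — is the delicate local gadget; here the constant $d=3$ (or any fixed $d$) and the three-dimensional freedom make it possible, which is exactly why the construction fails in $\mathbb{R}^2$.
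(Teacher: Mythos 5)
Your plan is essentially the paper's: realize a bounded-degree graph as a subdivision that is a unit-ball graph in $\mathbb{R}^3$, using the third dimension to route edge-paths in disjoint parallel layers so that distinct edge chains meet only at shared endpoints, observe that subdivisions preserve the optimum feedback vertex set, and then inherit hardness from the bounded-degree case. The paper reduces from \textsc{Vertex Cover} on graphs of maximum degree three, for which both the inapproximability result~\cite{AlimontiK00} and the ETH-based hardness~\cite{ImpagliazzoPZ01} are stated directly, first applying the folklore gadget that adds a fresh vertex $x_{uv}$ for each edge $uv$ to turn vertex covers into feedback vertex sets (giving maximum degree six), and then constructing the unit-ball subdivision via Lemma~\ref{lem:construct}; your route of reducing directly from \textsc{Feedback Vertex Set} on bounded-degree graphs is equally fine once you note that the same gadget supplies the needed FVS hardness.

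The one substantive flaw is your insistence on a \emph{linear} blow-up $|V(G)| = O(|V(H)|)$ in the realization lemma, which you also treat as essential to the ETH transfer. That bound is provably unattainable: a unit-ball graph $G$ in $\mathbb{R}^3$ with $N$ vertices and bounded clique number (and a subdivision of a bounded-degree graph certainly has bounded clique number) has $\tw(G) = O(N^{2/3})$ by the separator theorem of Miller et al.~\cite{MillerTTV97}, whereas if $H$ is a $3$-regular expander then $\tw(H) = \Theta(n)$ and, since $H$ is a minor of its subdivision $G$, this forces $|V(G)| = \Omega(n^{3/2})$. The paper's Lemma~\ref{lem:construct} therefore settles for $O(n^2)$ vertices, and nothing is lost: the reduction preserves the optimum, hence the parameter, \emph{exactly}, so a $2^{o(k)}\cdot N^{O(1)}$ algorithm on unit-ball instances of any polynomial size $N = n^{O(1)}$ still yields a $2^{o(k)}\cdot n^{O(1)}$ algorithm on hard bounded-degree instances with $k = \Theta(n)$, contradicting ETH; and the PTAS-hardness argument only needs the optimum to be preserved. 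You should therefore drop the $2^{o(n)}$ formulation in your second step, which silently relies on the false linear bound, and argue the parameterized statement directly from parameter preservation.
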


A {\em unit ball} model of $H$ in $\mathbb{R}^d$ is a map $f : V(H) \rightarrow \mathbb{R}^d$ such that $u$ and $v$ are adjacent iff the euclidean distance between $f(u)$ and $f(v)$ is at most $1$. In the construction it is much more convenient 
to work with this alternate definition of unit ball graphs rather than saying that $f(u)$ and $f(v)$ is at most $2$ 
and hence we use this alternate definition in this section. 
In our constructions no two vertices will map to the same point, and thus we will often refer to vertices in $H$ by the points in  
$\mathbb{R}^d$ which they map to. For the proof of Theorem~\ref{thm:FVShard} we need the following lemmas. It appears that the following lemma can easily be derived from the results in~\cite{Eades:2000fk} about the three dimensional orthogonal graph drawings. However, since we could not find this result explicitly, we  give a proof here for completeness. 

\begin{lemma}
\label{lem:construct} For any graph $G$ on $n$ vertices of maximum degree $6$, there is a unit ball graph $H$ on $O(n^2)$ vertices such that $H$ is a subdivision of $G$. Furthermore, $H$ and a unit ball model of $H$ in $\mathbb{R}^3$ can be constructed from $G$ in polynomial time. \end{lemma}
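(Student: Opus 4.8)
The plan is to take any graph $G$ on $n$ vertices with $\Delta(G) \le 6$, fix an embedding of $G$ into a 3-dimensional integer grid with bounded-volume, route each edge of $G$ along grid-aligned paths in that grid, and finally subdivide each routed edge finely enough that the resulting ``chain'' of vertices realizes the edge as a unit ball subdivision path while staying disjoint from all other chains. The key point is that the 3D orthogonal grid gives us room to route $O(n)$ edges without crossings (unlike the plane), and the ``unit ball'' freedom lets us place points along a grid path at mutual distance $< 1$ so that consecutive points are adjacent and non-consecutive points on the same path, as well as points on distinct paths, are at distance $> 1$.

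First I would invoke a 3D grid drawing result: every graph of maximum degree $6$ embeds into the integer lattice $\mathbb{Z}^3$ so that vertices go to distinct lattice points, edges are routed as internally disjoint paths made of unit-length axis-parallel segments, and the bounding box has side length $O(n)$ (so total volume $O(n^3)$ and total edge length $O(n^2)$, since there are $O(n)$ edges each of length $O(n)$). This is exactly the kind of statement in~\cite{Eades:2000fk}; if one wants a self-contained argument, one can scale a planar book embedding or simply place the $n$ vertices at well-separated lattice points $(10ni, 0, 0)$ and route edges using three ``free'' coordinates — but citing the orthogonal drawing literature is cleanest, and the paper's statement already signals this. Crucially, I would insist that distinct edge-paths stay at $L_\infty$-distance $\ge 2$ from each other except at their shared endpoints, and that a path does not come within distance $2$ of a vertex it is not incident to; a standard orthogonal drawing can be rescaled by a constant factor to guarantee this separation.

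Next, the subdivision step. Scale the whole drawing by a large constant factor $M$ (to be chosen, $M = O(1)$), so that every unit segment of the routing becomes a segment of length $M$. Along each such length-$M$ segment, place $M$ equally spaced new points at distance $1$ apart — wait, that makes consecutive points at distance exactly $1$, which under the ``$\le 1$'' adjacency rule are adjacent, and points two apart are at distance $2 > 1$, non-adjacent: good. At the bends of an edge-path (where two axis-parallel segments meet at a lattice point) the two incident new points are at distance $\le \sqrt 2 < \sqrt 2 \cdot$ something — actually at distance $1$ along each arm meeting at a right angle, consecutive subdivision points straddling a corner are at distance $\le \sqrt{2} > 1$, which would wrongly be non-adjacent. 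So at each corner I would instead place a single point exactly at the (scaled) lattice corner and make the spacing on each arm slightly less than $1$ so that the last interior point of each arm is within distance $1$ of the corner point; a uniform spacing of $1 - \delta$ for small constant $\delta$ handles both straight stretches and corners while keeping non-consecutive points at distance $> 1$. The original vertices of $G$ keep their (scaled) positions; because a vertex has degree $\le 6$, the $\le 6$ edge-paths leaving it can be arranged (via the degree-$6$ orthogonal drawing, one path per axis direction $\pm x, \pm y, \pm z$) so that the first subdivision point of each incident path is at distance $< 1$ from the vertex, and first subdivision points of two distinct incident paths leave in different axis directions hence are at distance $> 1$ from each other — so the vertex is adjacent exactly to those six chain-neighbors and to nothing else. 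The separation between distinct edge-paths (kept at scaled $L_\infty$-distance $\ge 2M$, hence well above $1$ after we only ever placed points within $O(1)$ of the drawing's segments) guarantees no spurious adjacencies across chains.

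Finally I would check the bookkeeping: the number of new vertices is $O(1)$ per unit segment times $O(n^2)$ total segment length $= O(n^2)$, so $|V(H)| = O(n^2)$; the whole construction — computing the orthogonal drawing, scaling, and placing the points — is polynomial in $n$; and by construction $H$ is precisely a subdivision of $G$ (each edge $uv$ replaced by the chain of its subdivision points) whose unit-ball model in $\mathbb{R}^3$ we have produced explicitly. The main obstacle is the geometric verification that the placement rule simultaneously (i) makes consecutive chain points adjacent through straight runs \emph{and} at corners, (ii) keeps non-consecutive points of the same chain non-adjacent, and (iii) keeps points of different chains, and original vertices, non-adjacent unless the subdivision structure demands it — this forces the careful choice of the scaling constant $M$, the spacing $1-\delta$, and the ``one axis direction per incident edge'' convention at vertices, and is where essentially all the work lies; the existence of the bounded-volume degree-$6$ orthogonal drawing is taken off the shelf from~\cite{Eades:2000fk}.
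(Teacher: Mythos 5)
Your plan is essentially the paper's: realize $G$ by a bounded-volume 3D orthogonal grid drawing in which the $\le 6$ edges at each vertex leave in the six axis directions, keep distinct edge routes well separated, and subdivide each routed edge into a chain of lattice points to obtain a unit-ball model of a subdivision of $G$. The only real difference is presentational: the paper writes out an explicit drawing by hand (each vertex gets its own ``fat'' $x$--$z$ slab at $y = 10i$, edges are routed along the $y$-axis in the plane $z=0$, and a concrete gadget fans the six ports out of each vertex), precisely because --- as the authors themselves remark --- the needed statement could not be located verbatim in~\cite{Eades:2000fk}, so taking it as a black box as you do is exactly the alternative they acknowledge but chose to avoid. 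One simplification worth noting: since the paper's adjacency rule is distance $\le 1$ and all routing coordinates in the drawing are even integers, you can place subdivision vertices at integer lattice points with spacing exactly $1$ and drop the rescale-by-$M$ and $1-\delta$ device entirely; consecutive points on a routed path (including the two flanking any corner, which both see the corner lattice point at distance $1$) are then at distance exactly $1$ and hence adjacent, non-consecutive same-path points are at distance $\ge \sqrt{2}$, and cross-path or path-to-foreign-vertex pairs are at distance $\ge 2$, so no spurious edges arise.
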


The proof of Lemma~\ref{lem:construct} is straightforward, but somewhat tedious. 

\begin{proof}
In this construction we envision the $x$-axis as being horizontal with positive direction towards the right, the $z$-axis being vertical with positive direction upwards. The intuition behind the proof is that every vertex of $G$ is assigned its own ``fat'' $x$-$z$ plane. The edges of $G$ are routed parallel to the $y$ axis in the $y-x$ plane with $z=0$, and in each ``fat'' $x$-$z$ plane we ensure that the edges connect to their corresponding vertex. This local routing of edge endpoints to a vertex happens above the $y-x$ plane with $z=0$ and does not interfere with the global routing of the edges.

For a point with integer coordinates $(x,y,z)$ and integer $\ell$ define the set $L[x,y,z]_{\bf x}^\ell$ to be $\{(x+x',y,z) : |x'|+|\ell-x'|=|\ell|\}$. In particular, if $\ell$ is positive then $L[x,y,z]_{\bf x}^\ell$ contains $\{(x,y,z),(x+1,y,z),(x+2,y,z),\ldots,(x+\ell,y,z)\}$, while if $\ell$ is negative then  $L[x,y,z]_{\bf x}^\ell$ contains $\{(x,y,z),(x-1,y,z),(x-2,y,z),\ldots,(x-\ell,y,z)\}$. Similarly we define $L[x,y,z]_{\bf y}^\ell$ to be $\{(x,y+y',z) : |y'|+|\ell-y'|=|\ell|\}$ and $L[x,y,z]_{\bf z}^\ell$ to be $\{(x,y,z+z') : |z'|+|\ell-z'|=\ell\}$. Given three integers $x$,$y$,$z$, the graph $P[x,y,z]$ corresponds to the point set
\begin{eqnarray*}
P[x,y,z] & = & L[x,y,z]_{\bf z}^{-2} \cup L[x,y,z-2]_{\bf x}^{2} \\
& \cup & L[x,y,z]_{\bf x}^{2} \\
& \cup & L[x,y,z]_{\bf z}^{2} \cup L[x,y,z+2]_{\bf x}^{2} \\
& \cup & L[x,y,z]_{\bf x}^{-2} \cup L[x-2,y,z]_{\bf z}^{4} \cup L[x-2,y,z+4]_{\bf x}^{4} \\
& \cup & L[x,y,z]_{\bf y}^{-2}  \cup L[x,y-2,z]_{\bf z}^{6} \cup L[x,y-2,z+6]_{\bf y}^{2} \cup L[x,y,z+6]_{\bf x}^2 \\
& \cup & L[x,y,z]_{\bf y}^{2}  \cup L[x,y+2,z]_{\bf z}^{8} \cup L[x,y+2,z+8]_{\bf y}^{-2} \cup L[x,y,z+8]_{\bf x}^2
\end{eqnarray*}
The set $P[x,y,z]$ corresponds to a vertex of degree $6$ in $[x,y,z]$, and there are $6$ paths, each starting in $(x,y,z)$ and ending in $(x+2,y,z-2)$, $(x+2,y,z)$, $(x+2,y,z+2)$, $(x+2,y,z+4)$, $(x+2,y,z+6)$ and $(x+2,y,z+8)$ respectively. The $y$-coordinate of any intermediate point on the paths is always between $y-2$ and $y+2$. Any points that are generated twice still correspond only to one single vertex.

For an integer $y$ and six integers $x_1 < x_2 < \ldots < x_6$ such that $x_{i+1} - x_i \geq 2$ we define ${\bf P}[y,x_1,x_2,x_3,x_4,x_5,x_6]$ to be the point set 
\begin{eqnarray*}
{\bf P}[y,x_1,x_2,x_3,x_4,x_5,x_6] & = & P[-2,y,12] \\
& \cup & \bigcup_{i=1}^6 L[0,y,10+2(i-1)]_{\bf x}^{x_i} \cup L[x_i,y,10+2(i-1)]_{\bf z}^{-10-2(i-1)}
\end{eqnarray*}
The set ${\bf P}[y,x_1,x_2,x_3,x_4,x_5,x_6]$ corresponds to a vertex of degree $6$ in $[-2,y,12]$ with $6$ paths starting in this vertex end ending in $[x_i,y,0]$ for $1 \leq i \leq 6$. The $y$-coordinate of the intermediate vertices on the path is between $y-2$ and $y+2$. In this sense, the paths corresponding to the vertex in $[-2,y,12]$ are routed in a ``fat'' $x-z $-plane.

We are now ready to construct $H$ given $G$. We give the construction for $6$-regular graphs $G$ and then explain how to modify the construction to the case when $G$ has maximum degree $6$. We label the vertices in $G$ by $v_1, \ldots v_n$ and the edges of $G$ by $e_1, \ldots e_m$ with $m \leq 3n$. For every $i \leq m$ define $a(i)$ and $b(i)$ such that the endpoints of the edge $e_i$ are $v_{a(i)}$ and $v_{(b(i))}$ respectively. Now, for every vertex $v_i$ let $x^i_1 < x^i_2 < \ldots < x^i_6$ be integers so that $v_i$ is incident to the edges $e_{x^i_j}$ for $1\leq j \leq 6$. For every vertex $v_i$ we add the point set ${\bf P}[10i,2x^i_1,2x^i_2,2x^i_3,2x^i_4,2x^i_5,2x^i_6]$. Finally for every edge $e_i$ we add the set $L[2i,10a(i),0]_{\bf y}^{10(b(i)-a(i))}$. This concludes the construction of $H$. 

It is easy to see that $H$ can be constructed from $G$ in polynomial time. Furthermore, it is easy to verify that $H$ has $O(n^2)$ vertices since $m \leq 3n$. To see that $H$ is a subdivision of $G$ observe that when $G$ has an edge $e_t$ between $v_i$ and $v_j$, in $H$ there is a path from the point $[-2,10i,12]$ through $[2t,10i,0]$ and $[2t,10j,0]$ to the point  $[-2,10j,12]$. This concludes the proof of the lemma.
\end{proof}
\begin{lemma}
\label{lem:constructFVS} There is a polynomial time algorithm that given a graph $G$ on $n$ vertices of maximum degree $3$ outputs a unit ball graph $H$ together with a unit ball model of $H$ in $\mathbb{R}^3$, such that given any vertex cover $C$ of $G$, a feedback vertex set $S$ of $H$ of size at most $|C|$ can be computed in polynomial time, and given any feedback vertex set $S$ of $H$, a vertex cover $C$ of $G$ of size at most $|S|$ can be computed in polynomial time. 
\end{lemma}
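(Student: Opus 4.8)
The idea is to reduce \textsc{Vertex Cover} on graphs of maximum degree $3$ (which is \classNP{}-hard and, under ETH, has no subexponential time algorithm) to \textsc{Feedback Vertex Set} on unit ball graphs in $\mathbb{R}^3$, using Lemma~\ref{lem:construct} as a black box. Given $G$ of maximum degree $3$, I would first build the \emph{triangle-augmented} graph $G^{+}$: for every edge $e=uv\in E(G)$ add a new vertex $w_e$ adjacent to both $u$ and $v$, keeping the original edge $uv$. Then every original vertex $u$ has degree $2\deg_G(u)\le 6$ in $G^{+}$ and every $w_e$ has degree $2$, so $G^{+}$ has maximum degree $6$ --- exactly what Lemma~\ref{lem:construct} can handle. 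Applying Lemma~\ref{lem:construct} to $G^{+}$ yields, in polynomial time, a unit ball graph $H$ together with a unit ball model of $H$ in $\mathbb{R}^3$, where $H$ is a subdivision of $G^{+}$; this $H$ is the output. Note $|V(G^{+})|\le 3n$, so $|V(H)|=O(n^2)$.

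\textbf{Two structural facts about cycles.} \textbf{(i)} Since $H$ is a subdivision of $G^{+}$ and its internal subdivision vertices have degree $2$, any cycle $D$ of $H$ must traverse each subdivided edge-path in full; contracting those paths turns $D$ into a cycle $D'$ of $G^{+}$ with $V(D')\subseteq V(D)$ (the original, non-subdivision vertices of $D$). \textbf{(ii)} Every cycle $D'$ of $G^{+}$ contains two vertices $a,b\in V(G)$ with $ab\in E(G)$: because the $w_e$'s form an independent set, no two of them are consecutive on $D'$, so listing the original vertices $a_1,\dots,a_k$ of $D'$ in cyclic order ($k\ge 2$), any two consecutive $a_i,a_{i+1}$ are joined either by a direct edge of $G$ or by a detour $a_i$--$w_{a_i a_{i+1}}$--$a_{i+1}$; in both cases $a_i a_{i+1}\in E(G)$, and we may take $a=a_1$, $b=a_2$. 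Also, for the same reason, every cycle of $G^{+}$ through a vertex $w_e$ with $e=uv$ contains both $u$ and $v$.

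\textbf{The two conversions.} Given a vertex cover $C$ of $G$ (with $C\subseteq V(G)\subseteq V(G^{+})\subseteq V(H)$): for any cycle $D$ of $H$, fact~(i) gives a cycle $D'$ of $G^{+}$ with $V(D')\subseteq V(D)$, fact~(ii) gives an edge $ab\in E(G)$ with $a,b\in V(D')$, and since $C$ covers $ab$ we get $C\cap V(D)\ne\emptyset$. Hence $C$ is itself a feedback vertex set of $H$ of size $|C|$. Conversely, given a feedback vertex set $S$ of $H$: repeatedly replace any subdivision vertex $s\in S$ lying on the path subdividing an edge $ab$ of $G^{+}$ by the endpoint $a$; by fact~(i) this preserves the feedback-vertex-set property and never increases $|S|$, so we may assume $S\subseteq V(G^{+})$, and then $S$ is a feedback vertex set of $G^{+}$. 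Next, repeatedly replace any $w_e\in S$ with $e=uv$ by $u$; since every cycle of $G^{+}$ through $w_e$ contains $u$, this again preserves the property without increasing $|S|$, so we may assume $S\subseteq V(G)$. Finally $S$ is a vertex cover of $G$: for every $e=uv\in E(G)$ the set $\{u,v,w_e\}$ is a triangle in $G^{+}$, so $S$ meets it, and since $w_e\notin S$ we get $u\in S$ or $v\in S$. All replacements run in polynomial time, so we obtain a vertex cover of $G$ of size at most $|S|$.

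\textbf{Main obstacle.} Everything hinges on the cycle bookkeeping of facts~(i) and~(ii) through the doubly nested construction (a subdivision of the triangle-augmented graph): one must verify carefully that an arbitrary cycle of $H$ really does contain two $G$-endpoints of a common edge of $G$, so that a vertex cover of $G$ --- which lives only on the original vertices --- is forced to intersect it. Beyond that, the only quantitative check is the degree bound, where the factor-$2$ blow-up of the triangle gadget takes maximum degree $3$ precisely to the maximum degree $6$ allowed by Lemma~\ref{lem:construct}; the rest is routine.
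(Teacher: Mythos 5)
Your proposal is correct and uses the same construction as the paper: form the triangle-augmented graph $G^{+}$ (the paper's $G'$, with a degree-$2$ vertex $x_{uv}$ hanging off each edge $uv$), observe that its maximum degree is $6$, and apply Lemma~\ref{lem:construct} to obtain $H$. The forward direction is also the same argument, just spelled out more explicitly through your facts~(i) and~(ii). The only place the two proofs diverge is the reverse direction (feedback vertex set of $H$ implies vertex cover of $G$): the paper invokes the known normalization result of Bar-Yehuda et al.\ that a feedback vertex set can be pushed in polynomial time onto vertices of degree at least $3$, after which every member of the solution is automatically an original $G$-vertex; you instead perform the two rounds of explicit replacements yourself (subdivision vertices to endpoints, then $w_e$-vertices to endpoints) and then read off the vertex-cover property from the triangles $\{u,v,w_e\}$. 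Both are correct and polynomial time; your variant is slightly more self-contained at the cost of a bit more bookkeeping, whereas the paper's is shorter but relies on a citation.
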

\begin{proof}
Given $G$ we start by applying the well-known construction for transforming instances of {\sc Vertex Cover} to instances of {\sc Feedback Vertex Set}. We construct $G'$ from $G$ by adding a vertex $x_{uv}$ for every edge $uv$ of $G$ and making $x_{uv}$ adjacent to $u$ and to $v$. Since the maximum degree of $G$ was $3$, the maximum degree of $G'$ is $6$. Now we apply Lemma~\ref{lem:construct} to $G'$ obtain the graph $H$ and a unit ball model of $H$. Every vertex cover $C$ of $G$ is a feedback vertex set of $G'$, and since $H$ is a subdivision of $G'$, every vertex cover of $G$ is a feedback vertex set of $H$. For the reverse direction, it is well-known that given a feedback vertex set $S$ in a graph, one can find in polynomial time a feedback vertex set $S'$ of size at most $|S|$ such that all vertices in $S'$ have degree at least $3$~\cite{BarYGJ98}. Let $S'$ be a feedback vertex set of $H$ such that every vertex in $S'$ has degree at least $3$ in $H$. Then, every vertex in $S'$ is also a vertex in $G$. We claim that $S'$ is a vertex cover of $G$. Let $uv$ be an edge in $G$. Then $u$,$x_{uv}$,$v$ is a cycle in $G'$ and since $H$ is a subdivision of $G'$, $H$ contains a cycle going through $u$,$x_{uv}$ and $v$ where all vertices in the cycle except $u$ and $v$ have degree at most $2$. Since $S'$ is a feedback vertex set of $H'$ containing no vertices of degree less than $3$, $S'$ contains either $u$ or $v$. Hence $S'$ is a vertex cover of $G$.
\end{proof}

If a subexponential time parameterized algorithm for {\sc Feedback Vertex Set} on unit ball graphs in $\mathbb{R}^3$ existed we could combine it with Lemma~\ref{lem:constructFVS} to get a subexponential time algorithm for {\sc Vertex Cover} on graphs of maximum degree $3$. Similarly, a PTAS for {\sc Feedback Vertex Set} on unit ball graphs in $\mathbb{R}^3$ could be combined with Lemma~\ref{lem:constructFVS} to yield a PTAS for {\sc Vertex Cover} on graphs of maximum degree $3$. Since {\sc Vertex Cover} is known not to admit a $(1+\epsilon)$-factor approximation algorithm, for some fixed $\epsilon>0$, on graphs of degree at most $3$ unless $P=NP$~\cite{AlimontiK00}, and not to have subexponential time parameterized algorithms on graphs of degree at most $3$ under the Exponential Time Hypothesis~\cite{ImpagliazzoPZ01}, we obtain 
Theorem~\ref{thm:FVShard}.

 \newcommand{\bibremark}[1]{\marginpar{\tiny\bf#1}}
  \newcommand{\biburl}[1]{\url{#1}}

\end{document}